\renewcommand*\env@matrix[1][*\c@MaxMatrixCols c]{%
  \hskip -\arraycolsep
  \let\@ifnextchar\new@ifnextchar
  \array{#1}}
\title{Structure dependent sampling in compressed sensing:\\
theoretical guarantees for tight frames}
\author{Clarice Poon\thanks{cmhsp2@cam.ac.uk}\\ Department of Applied Mathematics and Theoretical Physics\\ University of Cambridge}
\date{May 2015; Revised September 2015}
\begin{document}
\maketitle

\begin{abstract}

Many of the applications of compressed sensing have been based on variable density sampling, where certain sections of the sampling coefficients are sampled more densely. Furthermore, it has been observed that these sampling schemes are dependent not only on sparsity but also on the sparsity structure of the underlying signal.  
This paper extends the result of  (Adcock, Hansen, Poon and Roman, arXiv:1302.0561, 2013) to the case where the sparsifying system forms a tight frame. By dividing the sampling coefficients into levels, our main result will describe how the amount of subsampling in each level is determined by the \textit{local coherences} between the sampling and sparsifying operators and the \textit{localized level sparsities} -- the sparsity in each level under the sparsifying operator. 

\end{abstract}

\section{Introduction}
Over the past decades, much of the research in signal processing has been based on the assumption that natural signals can be sparsely represented. One of the achievements resulting from this realization was compressed sensing, which made it possible to recover a sparse signal from very few non-adaptive linear measurements. Compressed sensing is typically modelled as follows. Given an unknown vector $x \in \mathbb{C}^N$ and a measurement device represented by a matrix $V$, one aims to recover $x$ from a highly incomplete set of measurements by solving
\be{\label{eq:min_findim}
R(x, \Omega) \in\mathop{\mathrm{arg min}}_{z\in\bbC^N} \nm{D z}_{\ell^1} \text{ subject to }   P_\Omega V z = P_\Omega V x,
}
where $\Omega$ indexes the given measurements, $P_\Omega$ is a projection matrix which restricts a vector to its coefficients indexed by $\Omega$ and $D$ is a sparsifying matrix under which $Dx$ is assumed to be sparse. Typical results in compressed sensing describe how under certain conditions, one can guarantee recovery when the number of measurements  $\abs{\Omega}$ scales up  to a log factor linearly with sparsity \cite{candes2006stable,candes2006robust,candes2011probabilistic}.

A large part of the theoretical development of compressed sensing has revolved around the construction of random sampling matrices (such as matrices constructed from random Gaussian ensembles) where the choice of the samples is completely independent of the sparsifying system \cite{donohoCS,rauhut2008compressed,rudelson2008sparse,wojtaszczyk2010stability}. The use of overcomplete dictionaries in compressed sensing has also been studied in works such as \cite{candes2011compressed,giryes2014greedy,krahmer2011new}, but again, recovery guarantees were obtained only for randomised sampling matrices or subsampled structured matrices with randomised column signs. However, in the majority of applications where compressed sensing has been of interest, one is concerned with the recovery of a signal from structured measurements, without the possibility of first randomising the underlying signal. For example, the measurements in magnetic resonance imaging (MRI) are modelled via the Fourier transform, while the measurements in radio interferometry are modelled via the Radon transform. In these cases, how one can achieve subsampling is \textit{highly dependent} on the sparsifying transform. To explain this statement, we recall some results of compressed sensing on the recovery of a vector of length $N$ from its discrete Fourier coefficients under various sparsifying transforms.
 \begin{enumerate}
 \item[(1)] If the underlying vector is $s$-sparse in its canonical basis, then one can guarantee perfect recovery from $\ord{s\log N}$ Fourier coefficients drawn uniformly at random \cite{candes2006robust}.
 \item[(2)] If the underlying vector is $s$-sparse with respect to its total variation \cite{candes2006robust}, then $\ord{s\log N}$ Fourier coefficients drawn uniformly at random will again guarantee perfect recovery, however, in the presence of noise and approximate sparsity, then one can obtain superior error bounds with sampling strategies which sample more densely at low frequency coefficients instead \cite{tv_poon}.
 \item[(3)] If the underlying vector is $s$-sparse with respect to some wavelet basis, then it is impossible to guarantee recovery from $\ord{s\log N}$ samples from sampling uniformly at random. This is a phenomenon which has been observed since the early days of compressed sensing  and there has been extensive investigations into how subsampling is still achievable by sampling more densely at low frequencies \cite{Lustig,Lustig2,studer2012compressive,wiaux2009compressed,puy2011variable}. These approaches were often referred to as variable density sampling and theoretical guarantees for these approaches were recently derived in \cite{ward2013stable} and \cite{adcockbreaking}.
 \end{enumerate}
 More generally, whether one can sample uniformly at random depends on whether the sampling and sparsifying matrices are sufficiently incoherent. In the absence of incoherence (as is the case in (3) above), how one should choose $\Omega$ in (\ref{eq:min_findim}) becomes a far more delicate issue. To explain the use of compressed sensing in this case, a theoretical framework was developed in \cite{adcockbreaking}
 on the basis of three new principles: multilevel sampling, asymptotic incoherence and asymptotic sparsity. By modelling a nonuniform sampling strategies via multilevel sampling, the need for dense sampling at low frequencies in (3) is due to the following two reasons.
 \begin{enumerate}
 \item[(i)] The high correspondence between Fourier  and wavelet bases at low Fourier  frequencies and low wavelet scales, but the low correspondence at high Fourier frequencies and high wavelet scales (asymptotic incoherence).
 \item[(ii)] Typical signals or images exhibit distinctive sparsity patterns in their wavelet coefficients, and  become increasingly sparse at higher wavelet scales (asymptotic sparsity). 
 \end{enumerate}
 In contrast to the large body of results in compressed sensing where the strategy is based on sparsity alone, the results of \cite{adcockbreaking} demonstrated that one of the driving forces behind the success of variable density sampling strategies is their correspondence to the sparsity structure of the underlying signals of interest. These new principles provide a framework under which one can understand how to exploit both the sparsity structure of the underlying signal, and the correspondences between the sampling and sparsifying systems to devise optimal subsampling strategies \cite{Siemens, Roman}.

\subsection{Contribution and overview}
The paper \cite{adcockbreaking} is concerned only with the case where the sparsifying system is an orthonormal basis. On the other hand, many of the sparsifying transforms in applications tend to be constructed from overcomplete dictionaries, such as contourlets \cite{do2005contourlet}, curvelets \cite{Cand,Candes2002}, shearlets  \cite{Gitta,Gitta3} and wavelet frames \cite{daubechies2003framelets,dong2010mra}.  
 
With this in mind, the recent work of \cite{KrahmerNW15} derives theoretical guarantees for certain nonuniform sampling strategies in the case of sparsity with respect to a tight frame. By defining the localization factor $\eta_{s,D}$ with respect to a sparsifying transform $D\in\bbC^{N\times n}$ and a sparsity level $s$ as
\be{\label{eq:KNW_factor}
\eta_{s,D}= \eta = \sup\br{ \frac{\nm{D g}_{\ell^1}}{\sqrt{s}}:  \abs{\Delta} = s, 
g\in\cR(D^*P_\Delta),  \nm{ g}_{\ell^2} = 1},
}
 their result is as follows.

\begin{theorem}[\cite{KrahmerNW15}]\label{thm:knw}
Let $N\in\bbN$ and let $s<N$. Suppose that the rows  $\br{d_1,\ldots, d_n}$ of  $D\in\bbC^{N\times n}$ form a Parceval frame, the rows    $\br{v_1,\ldots, v_n}$ of $V\in\bbC^{n\times n}$ form an orthonormal basis of $\bbC^n$  and suppose that $\sup_{1\leq j\leq N}\abs{\ip{d_j}{v_k}} \leq \mu_k$. Let $\nu$ be a probability measure on $\br{1,\ldots, n}$ given by $\nu(k) = \mu_k^2/\norm{\boldsymbol{\mu}}_{\ell^2}^2$, where $\boldsymbol{\mu} = (\mu_k)_{k=1}^n$, and let $W\in\bbC^{n\times n}$ be a diagonal matrix with diagonal entries $(\norm{\boldsymbol{\mu}}_{\ell^2}^2/\mu_k^2)_{k=1}^n$. Let $\Omega$ be a set of $m$  independently and identically distributed indices drawn from $\br{1,\ldots, n}$ with the measure $\nu$. If
$$
m\geq C \eta^2 \norm{\boldsymbol{\mu}}_2^2 s \max\br{ \log^3(s\eta^2) \log(N), \log(\epsilon^{-1})},
$$ for some absolute constant $C$,
then with probability $1-\epsilon$, the following holds for every $f\in\bbC^n$: the solution $\tilde f$ of
\be{\label{eq:knw}
\mathrm{argmin}_{g \in\bbC^n} \norm{D g}_{\ell^1} \text{ subject to } \norm{\frac{1}{\sqrt{m}} W(P_\Omega V g -y)}_{\ell^2}\leq \delta
}
with $y= P_\Omega V f + e $ for noise $e$ with weighted error $\norm{\frac{1}{\sqrt{m}} W e}_{\ell^2} \leq \delta$ satisfies
$$
\norm{\tilde f - f}_{\ell^2} \leq C_1 \delta + C_2 \sigma_s(Df) s^{-1/2}
$$
where $C_1$ and $C_2$ are absolute constants and given any vector $x\in\bbC^n$, $\sigma_s(x) = \min_{z \in\bbC^n, \norm{z}_{\ell^0} = s} \norm{x-z}_{\ell^1}$.
\end{theorem}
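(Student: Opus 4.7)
The argument follows the analysis-based compressed sensing framework of Cand\`es--Eldar--Needell--Randall, combined with a weighted concentration argument to handle the variable-density sampling. The strategy is to establish the dictionary restricted isometry property (D-RIP) for the rescaled measurement operator and then invoke the standard stability result for $\ell^1$-analysis minimization.

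\textbf{Step 1: Rescale and reduce to D-RIP.} Set $A = \frac{1}{\sqrt{m}} W^{1/2} P_\Omega V$, so the constraint in \eqref{eq:knw} becomes $\|A g - y'\|_{\ell^2}\leq \delta$ with $y' = A f + e'$. The weight $W$ and the measure $\nu$ are chosen precisely so that $\mathbb{E}[A^* A] = I_n$: indeed $\sum_{k=1}^n \nu(k)\,(\|\boldsymbol{\mu}\|_{\ell^2}^2/\mu_k^2)\, v_k v_k^* = \sum_{k=1}^n v_k v_k^* = I_n$ by the orthonormality of $\{v_k\}$. The goal of the probabilistic argument is to show that, with probability at least $1-\epsilon$, the operator $A$ satisfies the D-RIP of order $2s$ with a sufficiently small constant $\theta<\theta_0$ (e.g.\ $\theta_0 = \sqrt{2}-1$), namely $(1-\theta)\|g\|_{\ell^2}^2 \leq \|A g\|_{\ell^2}^2 \leq (1+\theta)\|g\|_{\ell^2}^2$ for every $g$ in $\Sigma_{2s}^D := \bigcup_{|\Delta|\leq 2s}\mathcal{R}(D^* P_\Delta)$.

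\textbf{Step 2: Chaining argument for D-RIP.} I would write $A^*A - I$ as a sum of $m$ independent, mean-zero, self-adjoint rank-one random matrices and bound $\sup_{g\in\Sigma_{2s}^D,\,\|g\|_{\ell^2}=1}\bigl|\langle (A^*A-I)g,g\rangle\bigr|$ via a Dudley/Talagrand chaining argument in the style of Rudelson--Vershynin (and the treatment in Foucart--Rauhut). Two features drive the estimate. First, each summand is pointwise bounded because $\|W^{1/2} v_k\|_{\ell^\infty}^2 \leq \|\boldsymbol{\mu}\|_{\ell^2}^2$ by construction. Second, and crucially, any unit-norm $g\in\Sigma_{2s}^D$ satisfies the weak-sparsity bound $\|Dg\|_{\ell^1}\leq \sqrt{2s}\,\eta$ directly from the definition \eqref{eq:KNW_factor}; this $\ell^1$-compressibility substitutes for canonical sparsity in the covering number estimates for $\Sigma_{2s}^D$, yielding an entropy integral of the correct order. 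Combining the pointwise bound with the entropy bound via a matrix Bernstein / symmetrisation argument produces the stated sample complexity $m \gtrsim \eta^2 \|\boldsymbol{\mu}\|_{\ell^2}^2 s\,\max\{\log^3(s\eta^2)\log N,\log(\epsilon^{-1})\}$.

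\textbf{Step 3 and main obstacle.} Once the D-RIP of order $2s$ holds, the Cand\`es--Eldar--Needell--Randall stability theorem for $\ell^1$-analysis minimisation yields $\|\tilde f-f\|_{\ell^2}\leq C_1\delta+C_2\sigma_s(Df)/\sqrt{s}$ for any solution of \eqref{eq:knw} with weighted noise at most $\delta$, which is the claim. The technical heart of the proof is Step 2: unlike the canonical-sparsity setting, $g = D^* P_\Delta c$ is typically \emph{not} sparse in $\mathbb{C}^n$, so a naive union bound over supports is unavailable. The localisation factor $\eta$ is precisely the device that lets one pass to the $\ell^1$-image $Dg$ and apply Maurey-style sparsification to control covering numbers of $\Sigma_{2s}^D$ at small scales. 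Propagating the $W^{1/2}$-weighting through this chaining so that the sample complexity depends on $\|\boldsymbol{\mu}\|_{\ell^2}^2$ (rather than $n\|\boldsymbol{\mu}\|_{\ell^\infty}^2$) while retaining only three logarithmic factors is the delicate part, and is where all the effort of the proof will be concentrated.
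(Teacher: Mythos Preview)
This theorem is not proved in the paper at all: it is quoted verbatim from \cite{KrahmerNW15} as background and motivation for the paper's own results (Theorem~\ref{thm:main} and Corollary~\ref{cor:rel_sp}), so there is no proof here to compare your proposal against. For what it is worth, your sketch --- establish the D-RIP for the weighted operator via a chaining argument that uses the localization factor $\eta$ to control covering numbers of $\Sigma_{2s}^D$, then invoke the Cand\`es--Eldar--Needell--Randall stability theorem --- is indeed the approach taken in the original reference, and your identification of the key difficulty (that elements of $\Sigma_{2s}^D$ are not sparse, so one must pass to $\ell^1$-compressibility of $Dg$ via $\eta$) is accurate.
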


Although this theorem
guarantees the recovery of all sparse vectors under a (fixed) nonuniform sampling distribution, it does not reveal any dependence between the sampling strategy and any sparsity structure. In the case of subsampling the Fourier transform,  this result implies that the sampling cardinality is $m=\ord{s\log^3(s)\log^2(n)}$ when $D$ is an orthonormal Haar wavelet basis, and $m=\ord{s\log^3(s\log(n))\log^3(n)}$ when $D$ is a redundant Haar frame. Due to the relatively large number of $\log$ factors, these sampling bounds are still substantially more pessimistic than what is often observed empirically, and one possible reason for this could be the lack of structure dependence considered in the theorem: in  \S \ref{sec:discreteHaar}, we will present a numerical example  to explain why an understanding of this dependence is crucial to achieving subsampling. 

Therefore, the purpose of this paper is to develop a theory on how to structure one's samples based on the sparsity structure with respect to a tight frame. The minimization problem tackled in this paper is also slightly different from (\ref{eq:knw}) as we consider solutions of the more standard problem (\ref{eq:min_orth}) with a uniform noise assumption, without additional weighting factors.  We remark also that if there exists a strong dependence between the sampling strategy and the underlying sparsity structure, then a direct implication is that there does not exist a fixed optimal sampling distribution for all sparse signals, and this will be reflected in our main result as we account for recovery under various sampling distributions using the framework of multilevel sampling.

The outline of this paper is as follows.
\S \ref{sec:onb} recalls the key principles from \cite{adcockbreaking} and a result on solutions of (\ref{eq:min_findim}) in the case where $\rD$ is constructed from an orthonormal basis. The main result of this paper is presented in \S \ref{sec:main}, where we reveal how the main result of \cite{adcockbreaking} can be extended in the case where $\rD$ is constructed from a tight frame. The remainder of this paper will be devoted to proving the result of \S \ref{sec:main}.
 
\paragraph{Notation}
Given Banach spaces $X$ and $Y$, let $\cB(X,Y)$ denote the space of bounded linear operators from $X$ to $Y$ and let $\cB(X)$ denote the space of bounded linear operators from $X$ to $X$. Let $\cH$ be a Hilbert space and given any subspace $\cS \subseteq \cH$, $ Q_\cS$ denotes the orthogonal projection onto $\cS$.
We say that $\br{\varphi_j : j\in\bbN}$ is a frame for $\cH$ if there exists $c,C>0$ such that
$$
c\nm{g}_\cH^2 \leq \sum_{j\in\bbN}\abs{\ip{g}{\varphi_j}}^2 \leq C \nm{g}_\cH^2, \qquad \forall g\in\cH.
$$
We say that $\br{\varphi_j: j\in\bbN}$ is a tight frame if $c=C$. If $c=C=1$, then $\br{\varphi_j: j\in\bbN}$ is said to be a Parseval frame. Given any linear operator $U$, let $\cR(U)$ denote its range and let $\cN(U)$ denote its null space.

We will also consider the sequence spaces $\ell^p(\bbN)$ for $p\in[1,\infty]$. Let $\br{e_j:j\in\bbN}$ denote the canonical basis for the $\ell^p(\bbN)$ space under consideration. Given any $\Delta \subset \bbN$, $P_\Delta$ denotes the orthogonal projection onto $\overline{ \mathrm{span}\br{e_j: j\in\Delta}}$. Given $M\in\bbN$, let $[M]:=\br{1,\ldots, M}$. Given $z\in\ell^2(\bbN)$, let $\sgn(z) \in\ell^\infty(\bbN)$ be such that for each $j\in\bbN$, $$\sgn(z)_j = \begin{cases} z_j/\abs{z_j}& z_j\neq 0\\
0& \text{otherwise} \end{cases}.$$
Given $q\in (0,\infty]$, the $\ell^q$ norm (or quasi-norm if $q\in (0,1)$) is defined for $z=(z_j)_{j\in\bbN}$ as
$$
\nm{z}_{\ell^q}^q = \sum_{j}\abs{z_j}^q, \quad q\in (0,\infty), \qquad \nm{z}_{\ell^\infty } = \sup_j \abs{z_j}, 
$$  
Let $\nmu{\cdot}_{\ell^p \to \ell^q}$ denote the operator norm of  $\cB(\ell^p(\bbN),\ell^q(\bbN))$ for $p,q\in [1,\infty]$. If $X$ and $Y$ are Hilbert spaces, we will simply denote the operator norm of $\cB(X,Y)$ by $\nmu{\cdot}$.
Given $a,b\in\bbR$, $a\lesssim b$ denotes $a\leq C\, b$ where $C$ is a constant which is independent of all variables under consideration. The identity operator is denoted by $I$, and the space on which this is defined will be clear from context.

\section{The need for structure dependent sampling}\label{sec:discreteHaar}

To illustrate the need to account for sparsity structure when devising subsampling strategies, let us consider the case of recovering finite dimensional vectors, where we are given access to a subset of their Fourier coefficients and the sparsifying system is the two redundant discrete Haar wavelet frame. The Haar frame is defined in detail in the appendix \S \ref{sec:discreteHaar_def}. In the following example, $A$ will denote the discrete Fourier transform, and $D$ will denote the discrete Haar wavelet transform.

\paragraph{A numerical example}
Let $N=1024$ and consider the recovery of the two signals $x_1$ and $x_2$ shown in Figure \ref{fig:sparsity_struct} from subsampling their discrete Fourier coefficients by solving (\ref{eq:min_findim}). These signals are constructed such that $\nm{D x_1}_{0} = \nm{D x_2}_{0} = 100$, where we define the sparsity measure of a signal by $\nm{z}_{0} := \abs{\br{j: \br{\abs{z_j}\neq 0}}}$ for any $z\in\bbC^M$ with $M\in\bbN$. The sparsity patterns of $D x_1$ and $D x_2$ are shown in Figure \ref{fig:sparsity_struct}. Observe that compared to $Dx_2$, $D x_1$  has a higher proportion of large coefficients with respect to the higher scale frame elements. 
Let $\Omega_V$ index 130 of the rows of $A$ (12.7\% subsampling), so that the indices correspond to the first 41 Fourier coefficients of lowest frequencies plus 89 of the remaining coefficients drawn uniformly at random. The reconstruction of $R(x_1,\Omega_V)$ and $R(x_2, \Omega_V)$ from their partial Fourier samples are shown in the top row of Figure \ref{fig:signals}. Note that although the same sampling pattern is used for both reconstructions, and both signals have the same sparsity with respect to $D$, $R(x_2,\Omega_V)$ is an exact reconstruction of $x_2$ whilst $R(x_1,\Omega_V)$ incurs a relative error of  34.85\%. This simple example suggests that to subsample efficiently, it is not sufficient to consider sparsity alone. We remark also that unlike sampling with unstructured operators such as random Gaussian matrices, uniform random sampling will yield  poor reconstructions for both signals. The second row of Figure \ref{fig:signals} shows the reconstruction $R(x_1,\Omega_U)$ and $R(x_2,\Omega_U)$, where $\Omega_U$ indexes $130$ of the available coefficients uniformly at random. Finally, it is interesting to note that the high frequency samples indexed by $\Omega_V$ are required for an exact reconstruction of $x_2$ as an error is incurred when one simply samples the Fourier coefficients of lowest frequency (see the bottom row of Figure \ref{fig:signals}). 

\begin{remark}
In the context of sampling the Fourier transform of a signal, which is sparse with respect to some multiscale transform (such as wavelets, curvelets or shearlets), it is now commonly observed that uniform random sampling yields highly inferior results, when compared with variable density sampling patterns which focus on low frequencies. The numerical example in this section simply highlights this observation, and reminds us that the performance of these variable density sampling patterns are highly dependent on the sparsity structure of the underlying signal, and not just the sparsity level alone. Thus, there is a need for a theory which describes how the sparsity structure of the underlying signal should impact the choice of the sampling pattern.  

\end{remark}

\begin{figure}
\begin{center}
\begin{tabular}{@{\hspace{0pt}}c@{\hspace{0pt}}c@{\hspace{0pt}}}
 Zoom of $x_1$ & $x_2$\\
\includegraphics[width = 0.4\textwidth]{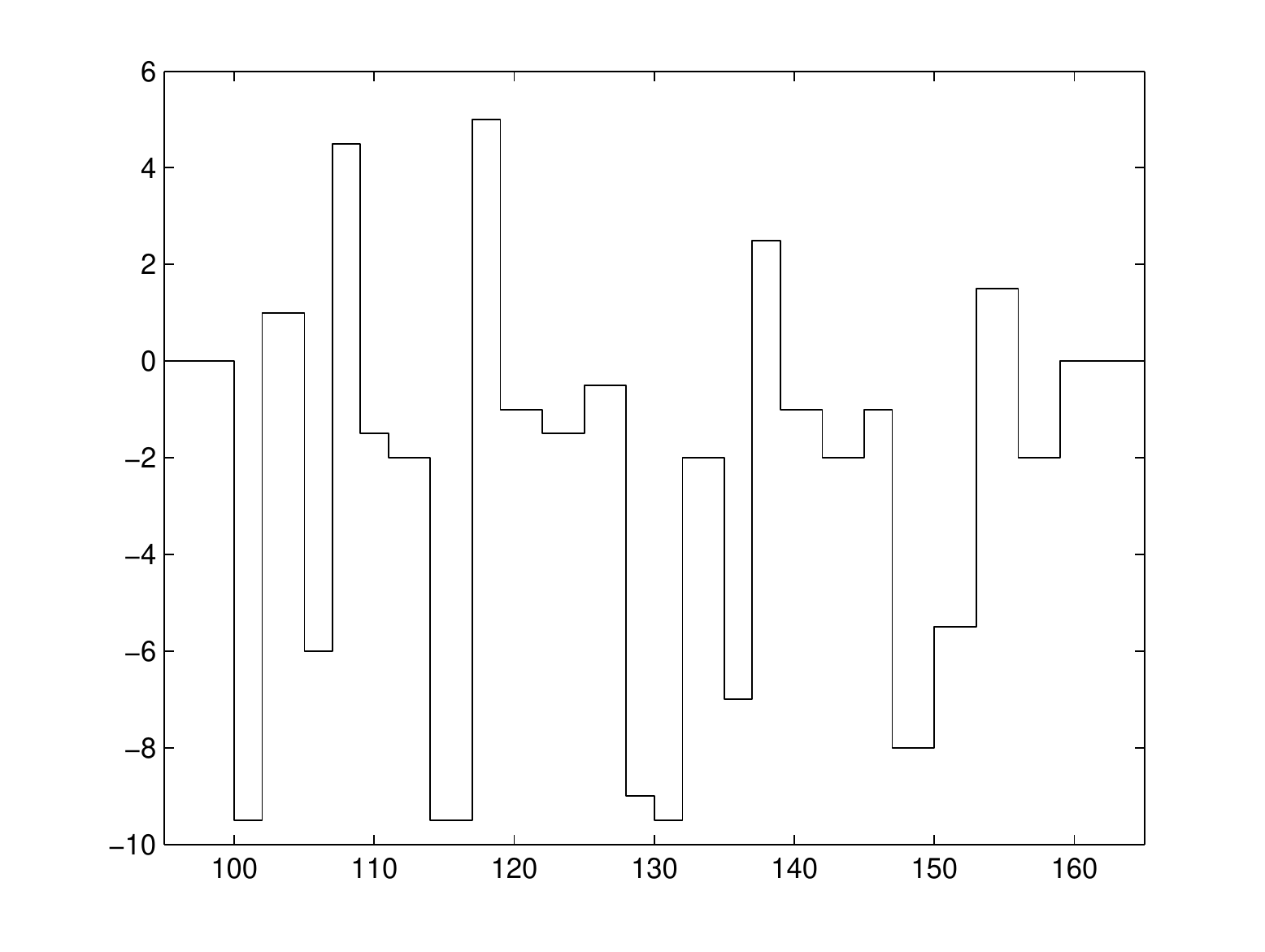}
& \includegraphics[width = 0.4\textwidth]{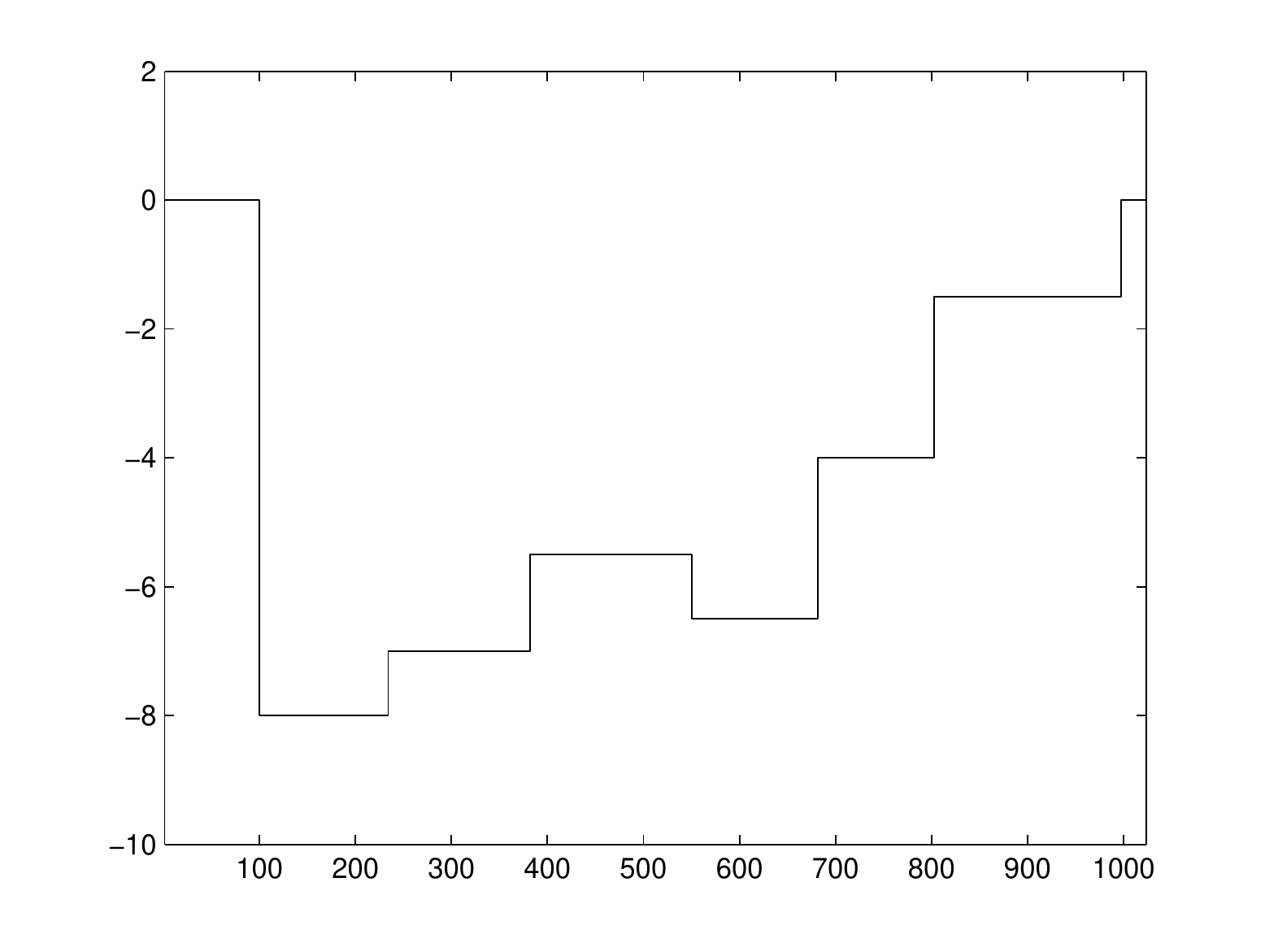}  \\
$\abs{D x_1}$ & $\abs{D x_2}$\\
\includegraphics[width = 0.4\textwidth]{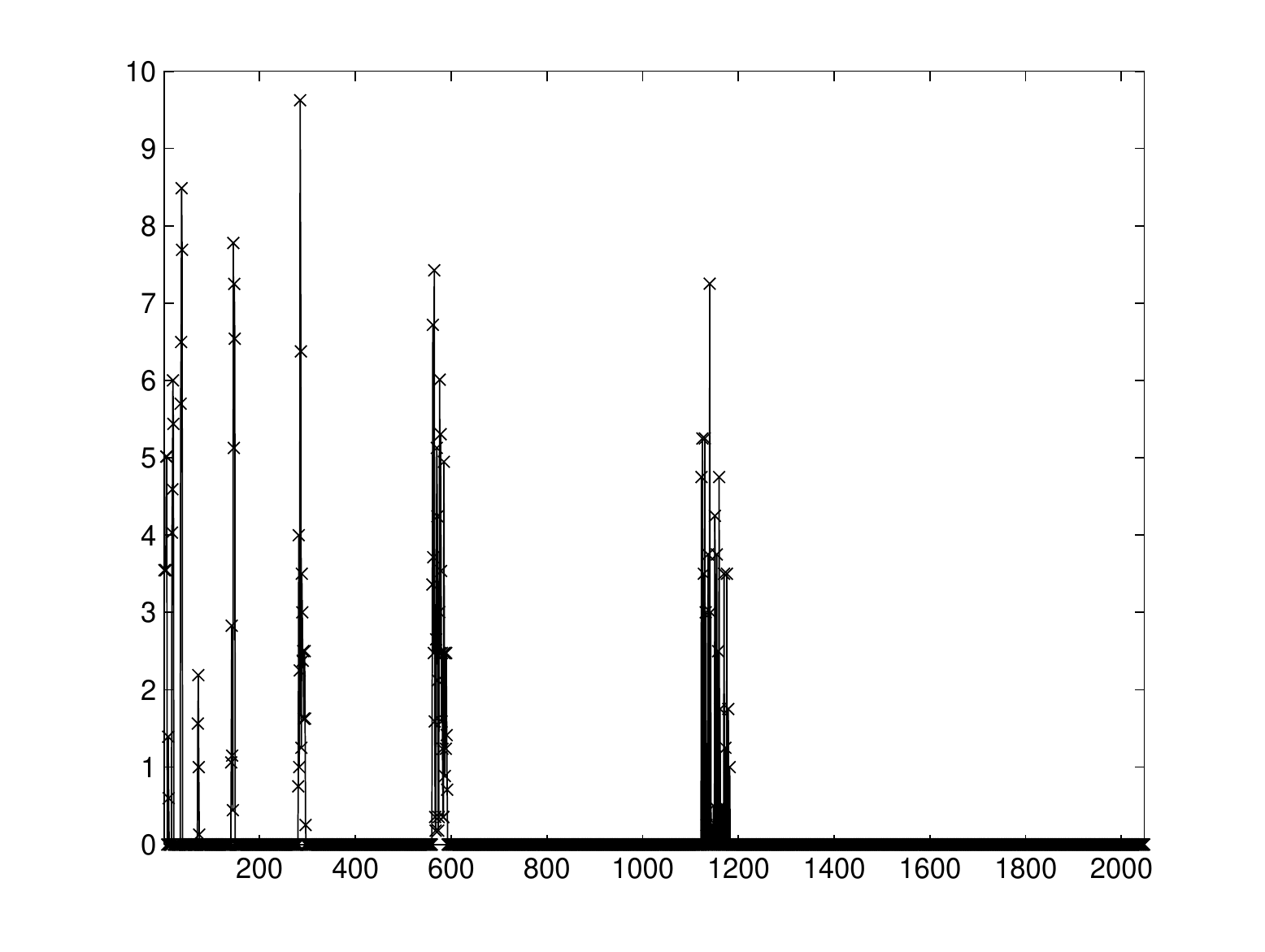} &\includegraphics[width = 0.4\textwidth]{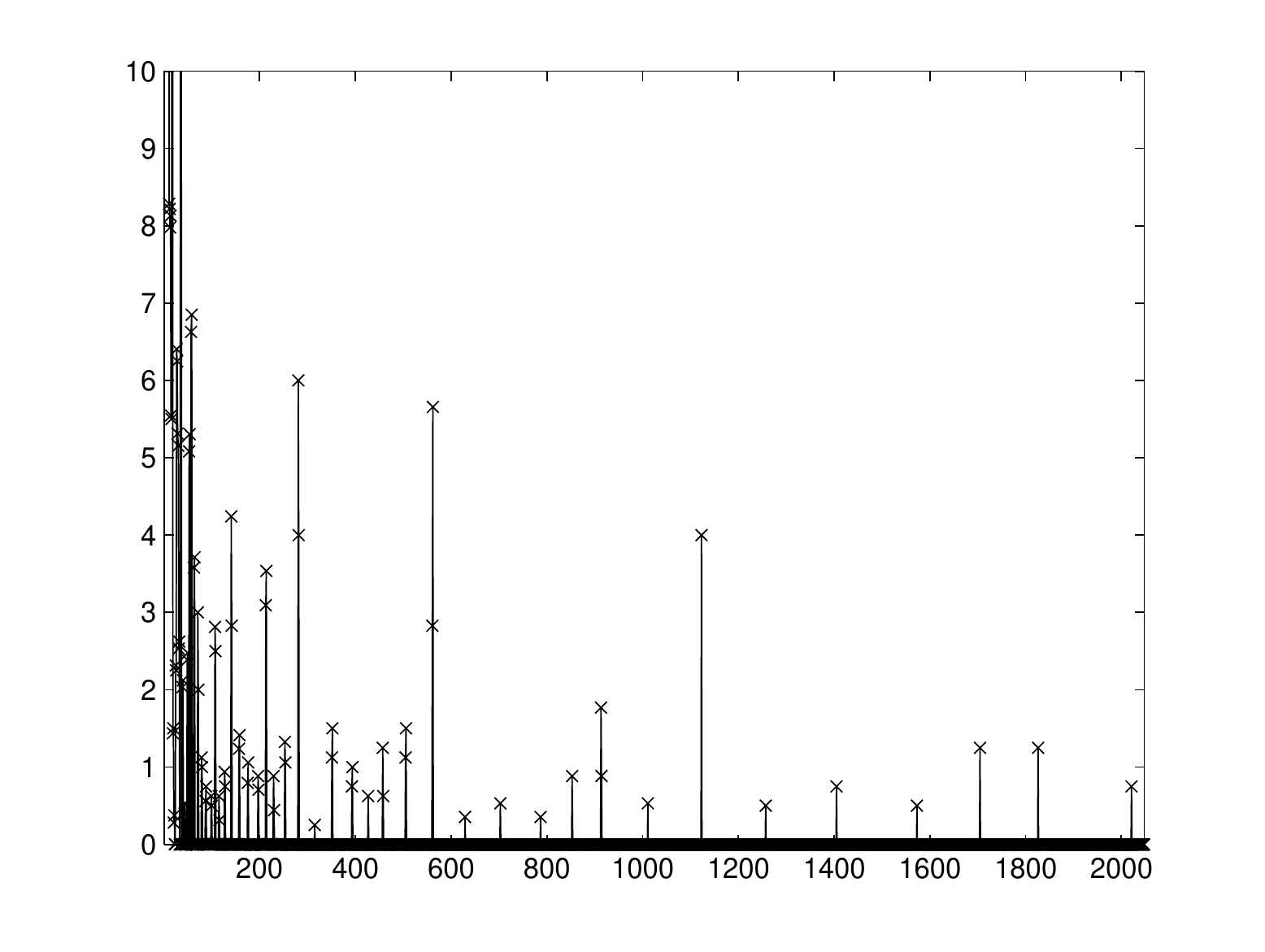}
\end{tabular}
\end{center}
\caption{
Top row: Two test signals. Only a zoom of $x_1$ is shown since it is supported only on the indices ranging between 100 and 158.  Both signals have equal sparsity -- for each $i=1,2$, $\norm{D x_i}_{0} = 100$.  The second to the bottom rows show the reconstructions from different sampling maps. Bottom row: the sparsity structure of $D x_1$ and $D x_2$. The graph for $\abs{D x_2}$ has been capped off at 20 to allow for a clear comparison with $\abs{D x_1}$. \label{fig:sparsity_struct}}
\end{figure}

\begin{figure}
\begin{center}
\begin{tabular}{@{\hspace{0pt}}c@{\hspace{0pt}}c@{\hspace{0pt}}c@{\hspace{0pt}}}
$\Omega_V$ (half-half)   & Zoom of $R(x_1,\Omega_V)$, Err = 34.9\%    & $R(x_2,\Omega_V)$, Err = 0\% \\
\includegraphics[width = 0.32\textwidth]{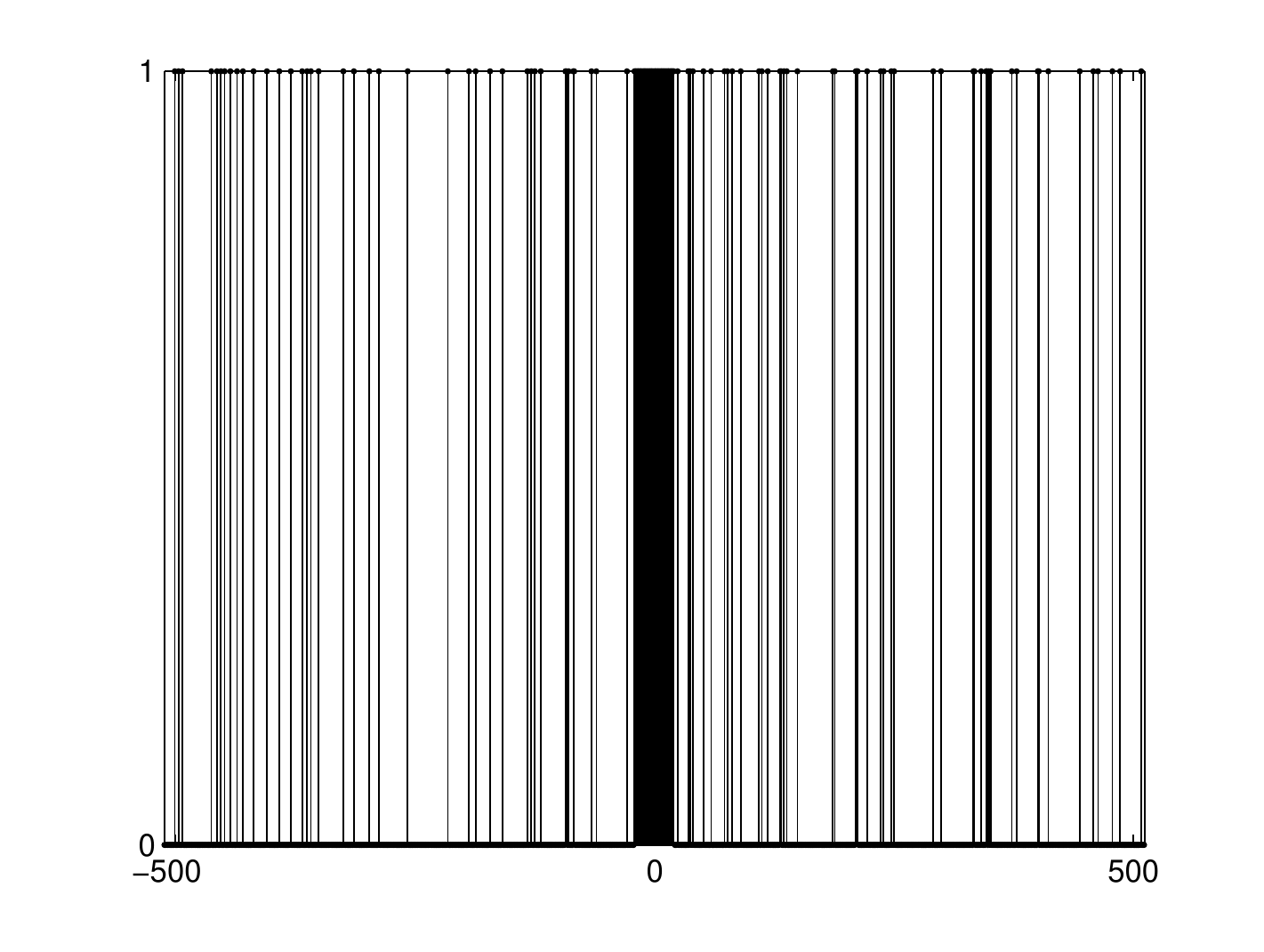} 
&\includegraphics[width = 0.32\textwidth]{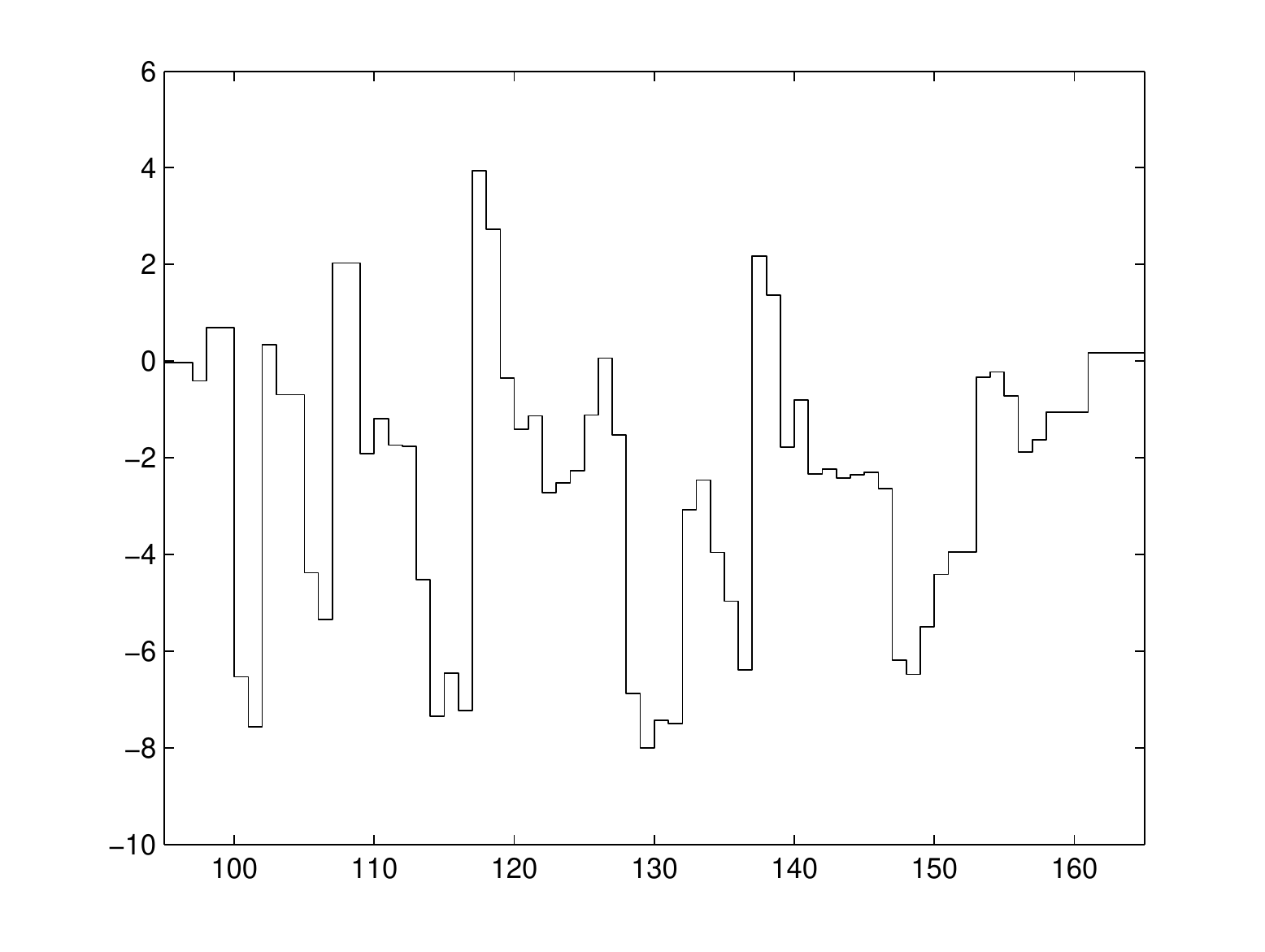}& \includegraphics[width = 0.32\textwidth]{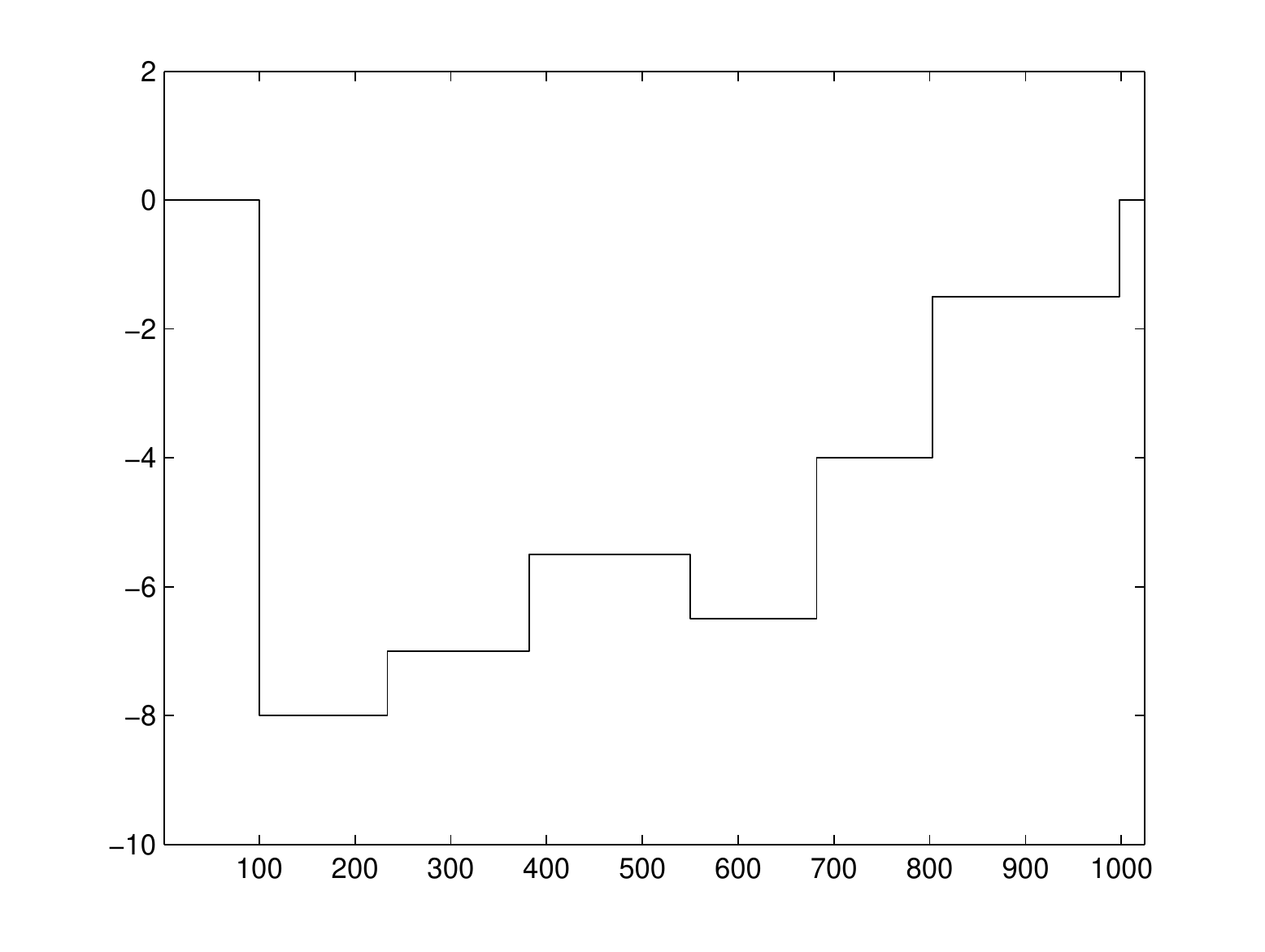}  \\
$\Omega_U$ (unif. rand.) & Zoom of $R(x_1,\Omega_U)$, Err = 28.7\%   & $R(x_2,\Omega_U)$, Err = 97.0\% \\
\includegraphics[width = 0.32\textwidth]{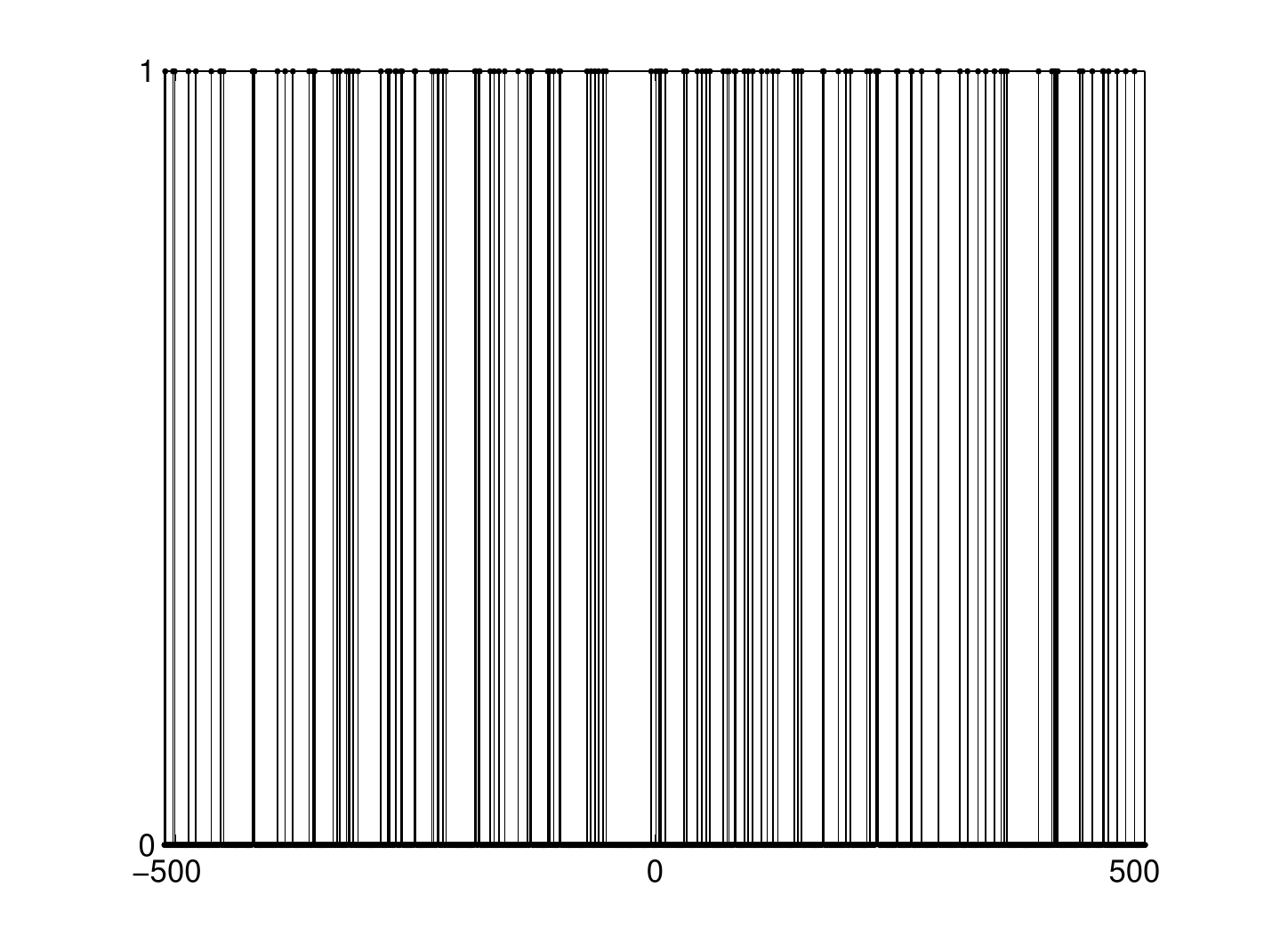} 
&\includegraphics[width = 0.32\textwidth]{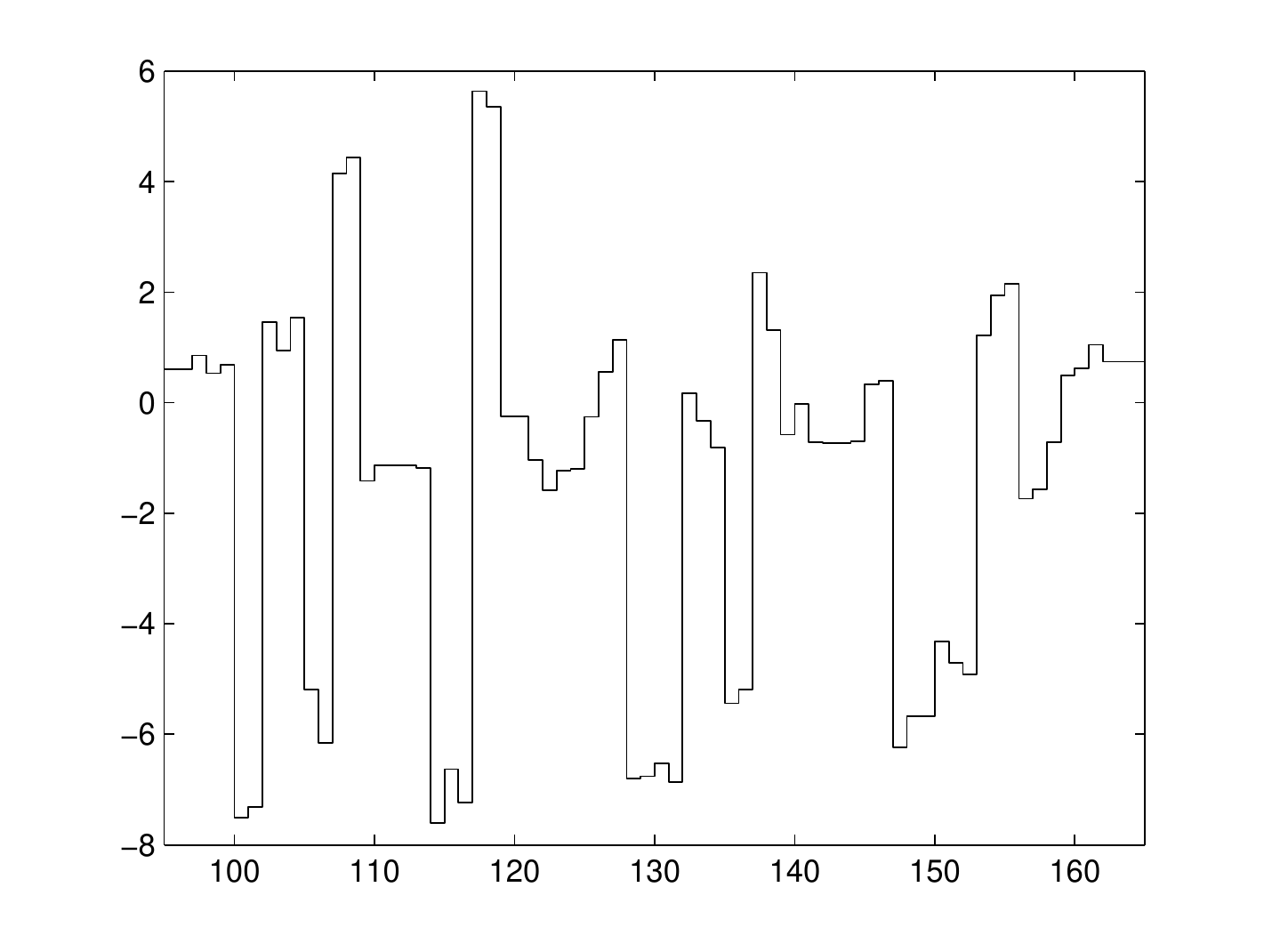}& \includegraphics[width = 0.32\textwidth]{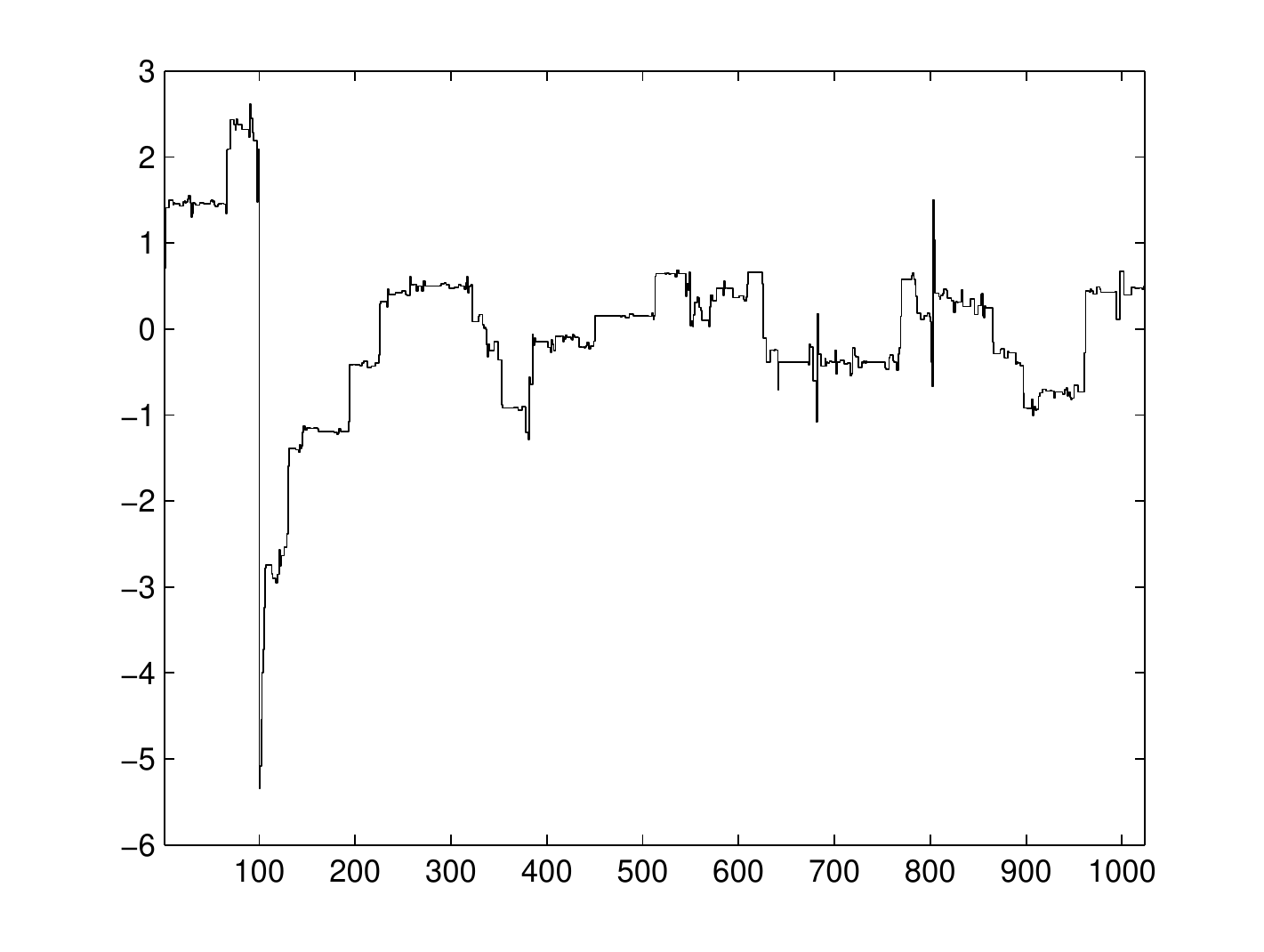}  \\
$\Omega_L$ (low freq.) & Zoom of $R(x_1,\Omega_L)$, Err = 74.8\%   & $R(x_2,\Omega_L)$,  Err = 5.0\% \\
\includegraphics[width = 0.32\textwidth]{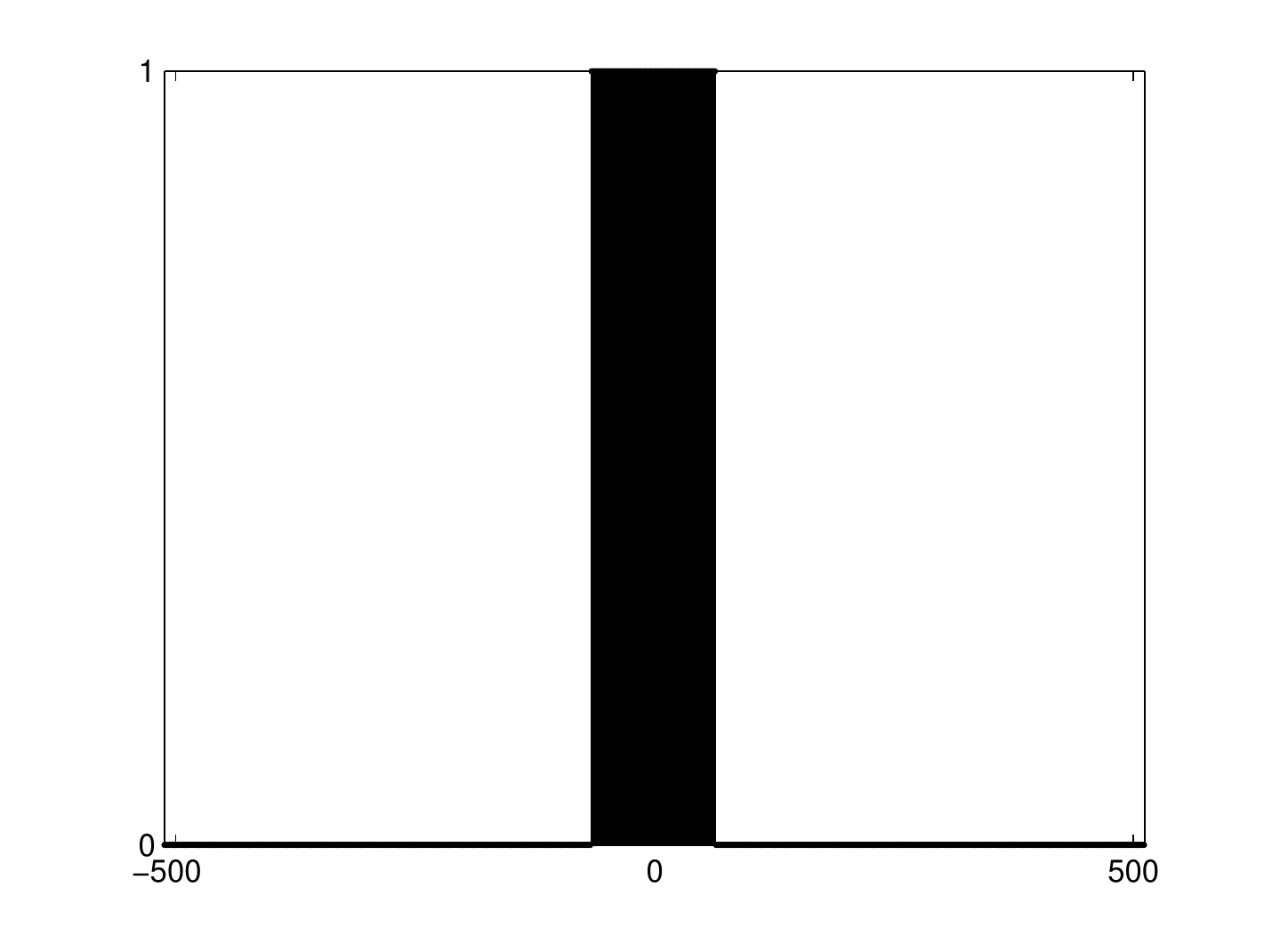} 
&\includegraphics[width = 0.32\textwidth]{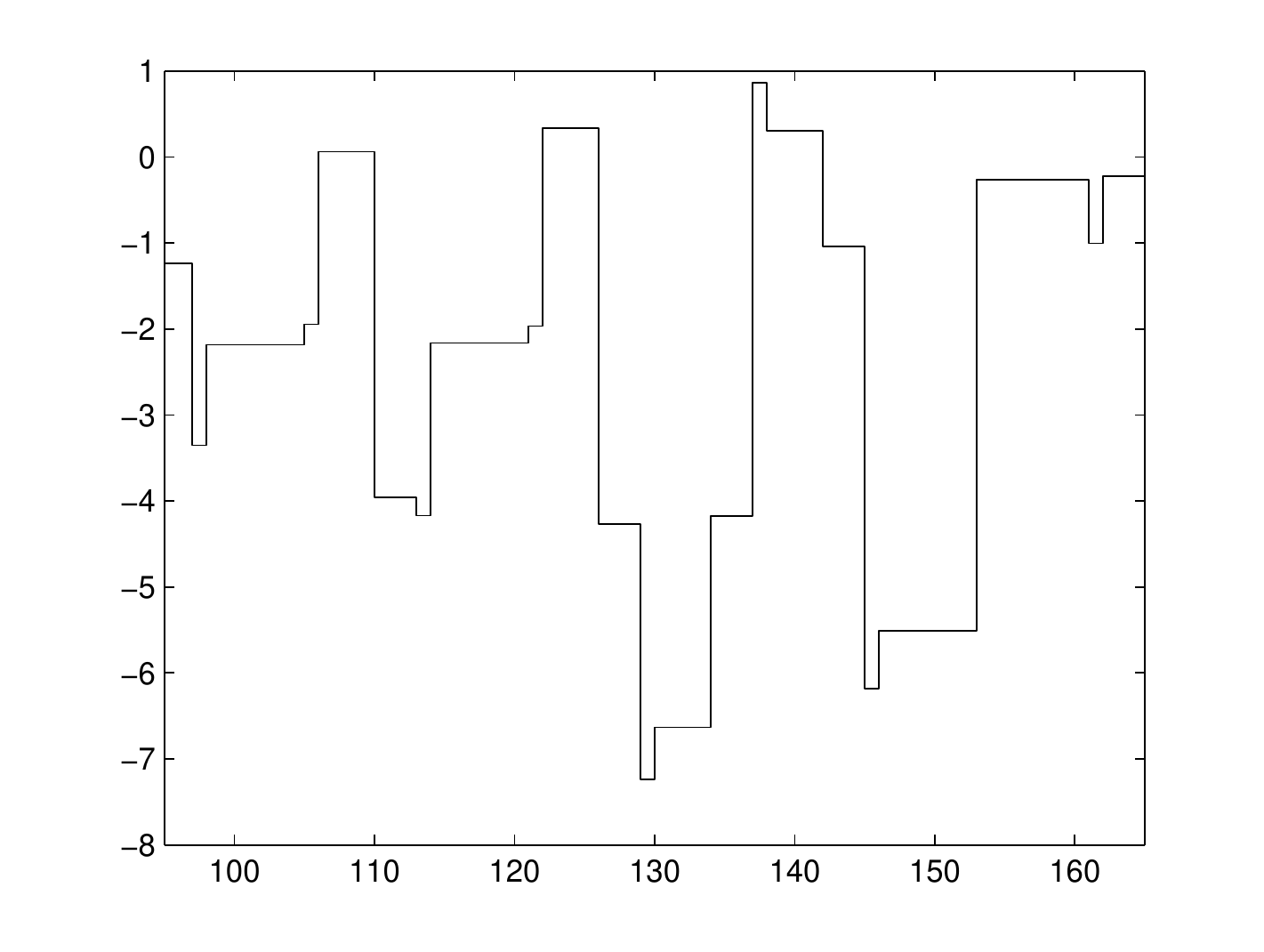}& \includegraphics[width = 0.32\textwidth]{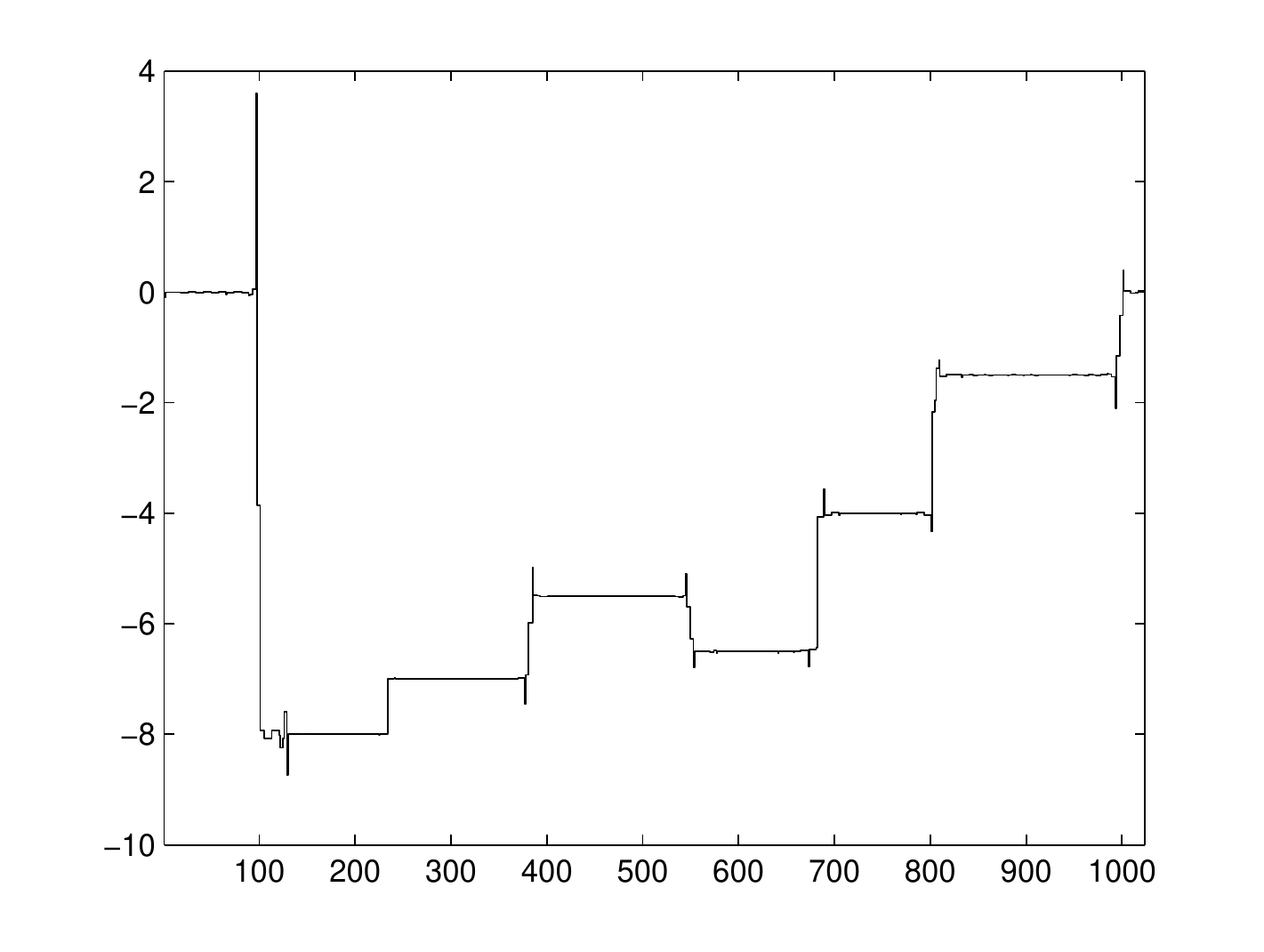}  
\end{tabular}
\end{center}
\caption{Reconstructions of $x_1$ and $x_2$ obtained by solving (\ref{eq:min_findim}) with different sampling maps $\Omega$ which index 130 of their Fourier coefficients (12.7\% subsampling).  $\Omega_V$ indexes the first 41  coefficients of lowest frequencies, plus 89 the remaining coefficients chosen uniformly at random. $\Omega_U$ indexes 130 of the  coefficients  uniformly at random. $\Omega_L$ indexes the 130 coefficients of lowest frequencies.
  \label{fig:signals}}
\end{figure}

\section{Structured sampling with orthonormal systems}\label{sec:onb}
The main result of this paper will be an extension of the abstract result of \cite{adcockbreaking} to the case where the sparsifying transform is a tight frame. This section recalls the key concepts introduced in \cite{adcockbreaking} to analyse the use of variable density sampling schemes for orthonormal sparsifying bases. We first remark that although compressed sensing originally considered only finite dimensional vector spaces, the applications in which variable density sampling tend to be of interest are more naturally modelled on infinite dimensional Hilbert spaces. For this purpose, a Hilbert space framework for compressed sensing was introduced in \cite{BAACHGSCS} and \cite{adcockbreaking}.

For a Hilbert space $\cH$, and given orthonormal bases $\br{\psi_j}_{j\in\bbN}$ (the sampling vectors) and $\br{\varphi_j}_{j\in\bbN}$ (the sparsifying vectors), define the operators
\be{\label{eq:VD_def}
V: \cH \to \ell^2(\bbN), \quad  f \mapsto (\ip{f}{\psi_j})_{j\in\bbN},
\qquad D: \cH \to \ell^2(\bbN), \quad  f \mapsto (\ip{f}{\varphi_j})_{j\in\bbN}.
}
Suppose we wish to recover some $f\in\cH$ from samples of the form $y = (\ip{f}{\psi_j})_{j\in\Omega} + \eta = P_\Omega V f + \eta$ for some $\Omega \subset \bbN$ and noise vector $\eta$ of $\ell^2$-norm at most $\delta$.
 A key question in compressed sensing is how solutions to the following minimization problem allows one to exploit the sparsity of some $f\in\cH$ with respect to $D$ to obtain accurate recovery from a minimal number of samples. 
 \be{\label{eq:min_orth}
\inf_{g\in\cH, D g\in\ell^1(\bbN)} \nm{D g}_{\ell^1} \text{ subject to } \nm{y -  P_\Omega V g}_{\ell^2}\leq \delta.
}
The coherence (defined below) of the operator $VD^*$ has been recognized to be an important factor in determining the minimal cardinality of the sampling set $\Omega$. Note that this can be seen as a measure of the correlation between the sampling system associated with $V$ and the sparsifying system associated with $D$.
\begin{definition}[Coherence]
Let $U$ be a bounded linear operator on  $\ell^2(\bbN)$ (or let $U \in\bbC^{N\times N}$ for some $N\in\bbN$) be such that $\nm{U e_j}_{\ell^2}=1$ for all $j\in\bbN$ (or $j=1,\ldots, N$). Let $\br{e_j:j\in\bbN}$ be the canonical basis of $\ell^2(\bbN)$ (or $\bbC^N$). The coherence of $U$ is defined as $\mu(U)= \sup_{k,j} \abs{\ip{U e_j}{e_k}} $.
\end{definition}
For the case where $VD^* \in\bbC^{N\times N}$ is a  finite dimensional isometry, the main result of \cite{candes2011probabilistic} showed that if $\Omega\subset \br{1,\ldots, N}$ consists of $\ord{s \, \mu^2(VD^*) \, N \, \log N}$ samples drawn uniformly at random, where $f$ is  $s$-sparse, then any solution $\hat f$ to  (\ref{eq:min_orth}) satisfies $\nmu{\hat f - f}\leq C \delta$ for some universal constant $C>0$. Furthermore, one cannot improve upon the estimate of $\ord{s \, \mu^2(VD^*) \, N \, \log N}$. Thus, for the recovery of sparse signals, the minimal sampling cardinality is completely determined by this coherence quantity.

 Unfortunately, when $\mu(VD^*)\approx 1$, this result merely concludes that $\Omega$ must index all available samples. This is especially problematic because when $VD^*$ is a bounded linear operator defined on the infinite dimensional Hilbert space $\ell^2(\bbN)$ -- it is necessarily the case that $\mu(VD^*) \geq c >0$ for some constant $c$ and one cannot expect the coherence of any finite dimensional discretization of $VD^*$ to be of order $\ord{N^{-1/2}}$ (see \cite{adcockbreaking} for a detailed explanation of this phenomenon). In the case where $V$ is associated with a Fourier basis and $D$ is associated with a wavelet basis, it is necessarily the case that $\mu(VD^*) = 1$.

The key idea of \cite{adcockbreaking} is to recognize that by placing additional assumptions on the sparsity or compressibility structure of the underlying signal, one can make non trivial statements on how $\Omega$ can be chosen in accordance to the underlying sparsity.  Thus, to consider how one should draw samples from the first $M$ samples in order to accurately recover $f\in\cH$, with $\nm{P_\Delta D  f}_{\ell^1} \ll \nm{f}_{\cH}$ for some $\Delta \subset \br{1,\ldots, N}$ with $\abs{\Delta}=s$, one approach is to divide the sampling and sparsifying vectors into levels then analyse the correspondence between the different sampling and sparsifying levels. The main theoretical result from \cite{adcockbreaking} is based on three principles:
\begin{itemize}
\item Multilevel sampling - instead of considering sampling uniformly at random across all available samples, partition the samples into levels and consider sampling uniformly at random with different densities at each level. This model was introduced to analyse the effects of nonuniform sampling patterns.
\item Local coherence - the coherence of partial sections of $VD^*$.
\item Sparsity in levels - instead of considering sparsity across all available coefficients, partition the coefficients into levels and consider the sparsity within each level.
\end{itemize}

We define each of these concepts below. 
\defn{[Multilevel sampling]
\label{multi_level_dfn}
Let $r \in \bbN$, $\mathbf{M} = (M_1,\ldots,M_r) \in \bbN^r$ with $0=M_0< M_1 
< \ldots < M_r$, $\mathbf{m} = (m_1,\ldots,m_r) \in \bbN^r$, with $m_k \leq 
M_k-M_{k-1}$, $k=1,\ldots,r$, and suppose that
$$
\Omega_k \subseteq \{ M_{k-1}+1,\ldots,M_{k} \},\quad | \Omega_k | = m_k,\quad 
k=1,\ldots,r,
$$
are chosen uniformly at random.  We refer to the set
$
\Omega = \Omega_{\mathbf{M},\mathbf{m}} = \Omega_1 \cup \cdots \cup \Omega_r
$
as an $(\mathbf{M},\mathbf{m})$-multilevel sampling scheme.
}

\subsection{Sparsity in levels}

The notion of sparsity in levels is defined as follows. As explained below, this notion is particularly important when considering wavelet sparsity for imaging purposes.
\defn{[Sparsity in levels]
\label{def:Asy_Sparse}
Let $x$ be an element of either $\bbC^N$ or $\ell^2(\bbN)$. For $r \in \bbN$ let 
$\mathbf{N} = (N_1,\ldots,N_r) \in \bbN^r$ with $0=N_0< N_1 < \ldots < N_r$ 
and $\mathbf{s} = (s_1,\ldots,s_r) \in \bbN^r$, with $s_k \leq N_k - N_{k-1}$, 
$k=1,\ldots,r$.  We say that $x$ is 
$(\mathbf{s},\mathbf{N})$-sparse if, for each $k=1,\ldots,r$,
$
\Delta_k : = \mathrm{supp}(x) \cap \{ N_{k-1}+1,\ldots,N_{k} \},
$
satisfies $| \Delta_k | \leq s_k$.  We denote the set of 
$(\mathbf{s},\mathbf{N})$-sparse vectors by $\Sigma_{\mathbf{s},\mathbf{N}}$.
}

\defn{[$(\mathbf{s},\mathbf{N}$)-term approximation]\label{def:sparsity_level}
Let   $x = (x_j )$ be
an element of either $\bbC^N$ or $\ell^2(\bbN)$.
We define the ($\mathbf{s},\mathbf{N}$)-term approximation
\be{
\label{sigma_s_m}
\sigma_{\mathbf{s},\mathbf{N}}(x) = \min_{\eta \in 
\Sigma_{\mathbf{s},\mathbf{N}} } \| x - \eta \|_{\ell^1}.
}
}

As well as the level sparsities $s_k$ defined in Definition \ref{def:sparsity_level}, we shall also require the notion of a 
relative sparsity, which takes into account the sampling operator $V$ and will account for how different levels interfere with each other.

\defn{[Relative sparsity]
\label{def:rel_sparsity}
Let $V, D\in\cB(\cH, \cH')$ where $\cH$ is a Hilbert space and $\cH'$ is either $\mathbb{C}^{N}$ or $\ell^2(\bbN)$.  Let $\mathbf{s} = (s_j)_{j=1}^r \in \bbN^r$,
$\mathbf{N} = (N_j)_{j=1}^r  \in \bbN^r$ and $\mathbf{M} = (M_j)_{j=1}^r \in 
\bbN^r$ with $0=N_0< N_1 < \cdots < N_r$ and $0=M_0< M_1 < \cdots < M_r$. 
 For $1 \leq k \leq r$, the 
$k^{\rth}$ relative sparsity is given by
$$
\hat \kappa_k = \hat \kappa_k(\mathbf{N},\mathbf{M},\mathbf{s}) =  \max_{g \in 
\Theta}\|P_{\Gamma_k} V g\|^2,
$$
where $\Gamma_k = (M_{k-1}, M_k]\cap \bbN$ and $\Theta$ is the set
\bes{
\Theta = \{g\in\cH :  , \|Dg\|_{\ell^{\infty}} \leq 1, \, 
|\mathrm{supp}( P_{\Lambda_l}Dg)| = s_l, \, l=1,\hdots, r\}.
}
where $\Lambda_l = (N_{k-1}, N_k]\cap \bbN$.

}

\subsubsection*{The Fourier/wavelets case}

\paragraph{On level sparsities}

It has been established that natural images are not simply sparse in their wavelet coefficients, but exhibit a distinctive `tree-structure' in their coefficients \cite{CrouseEtAlWaveleTree}. Given a wavelet basis $\br{\varphi_j}_{j\in\bbN}$, it is often the case that a typical image with sparse approximation $\sum_{j\in\Delta} \alpha_j \varphi_j$ will actually not be sparse with respect to the wavelets of low scales, but will become increasingly sparse with respect to the wavelets of higher scales. In particular, if $\br{N_k}_{k\in\bbN}$ corresponds to the wavelet scales so that $\br{\varphi_j}_{j\leq N_k}$ consists of all wavelets up to the $k^{\rth}$ scale, and $s_k = \abs{\Delta \cap (N_{k-1}, N_k]}$ is the sparsity at the $k^{\rth}$ wavelet scale, then one typically observes that although $s_1/N_1\approx 1$, one has \emph{asymptotic sparsity} with $s_k/(N_k-N_{k-1}) \to 0$ as $k$ increases. This phenomenon is illustrated in Figure \ref{fig:bus2}.

 Thus, for the purpose of reconstructing natural images, it is perhaps too general to consider the recovery of all sparse wavelet coefficients and it  suffices to consider the recovery of images whose sparse representations exhibit asymptotic sparsity. This is the motivation behind the concept of sparsity in levels.

\paragraph{On relative sparsities}
In the case where $V$ is the Fourier sampling operator and $D$ is the analysis operator associated with an orthonormal basis,  one can in fact show that the change of basis matrix $VD^*\in\cB(\ell^2(\bbN))$ is near block diagonal and by letting $\mathbf{M}$ and $\mathbf{N}$ correspond to wavelet scales, 
$$
\hat \kappa_k(\mathbf{N}, \mathbf{M}, \mathbf{s}) \lesssim \sum_{j=1}^r s_j A^{-\abs{j-k}},
$$
for some $A > 1$ which depends only on the given wavelet basis. So, the dependence of the $k^{\rth}$ relative sparsity on each $s_j$ decays  exponentially in $\abs{j-k}$ and moreover, it follows that $\sum_{k}\hat \kappa_k \lesssim \sum_{k}s_k$. The reader is referred to \cite{adcockbreaking} for a proof of this.

\begin{figure}
\begin{center}
\begin{tabular}{@{\hspace{0pt}}c@{\hspace{3pt}}c@{\hspace{3pt}}c@{\hspace{0pt}}}
\includegraphics[ width=0.27\textwidth]{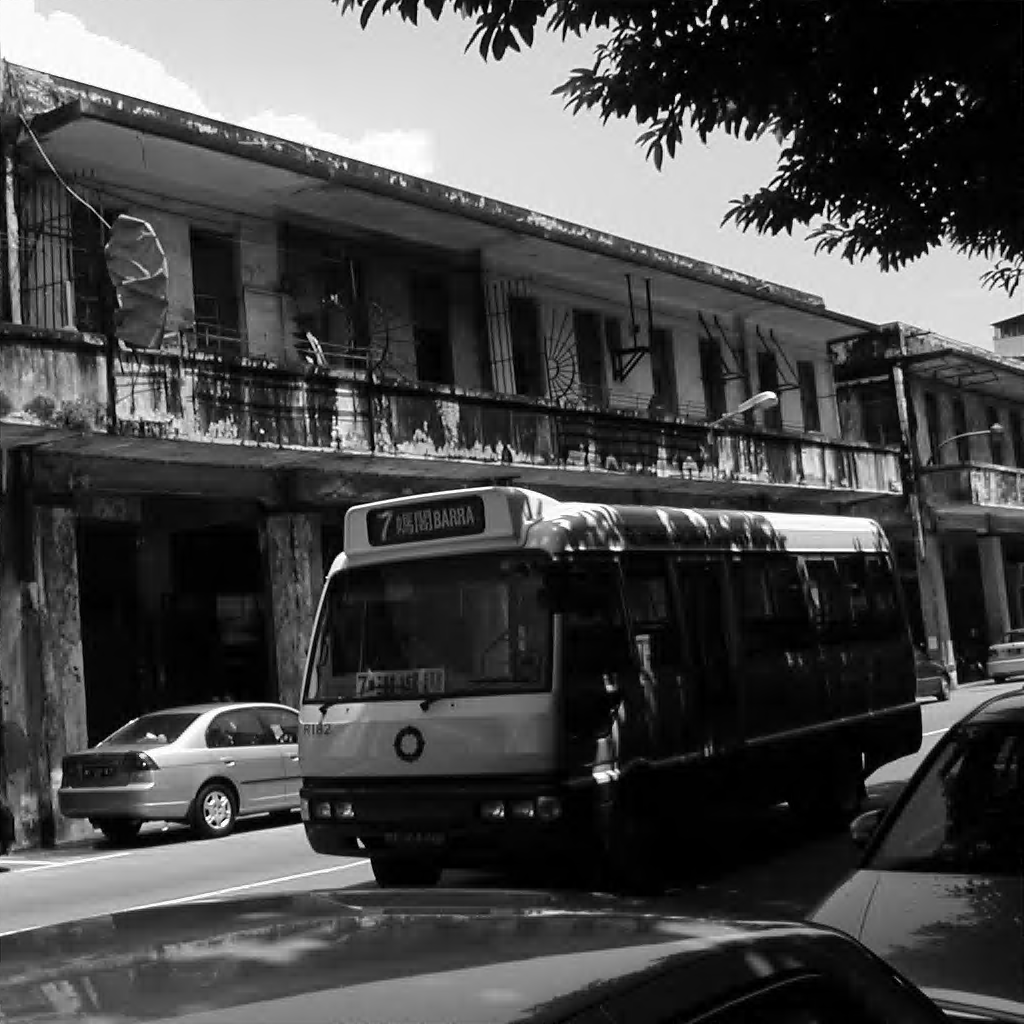} &
\includegraphics[width=0.27\textwidth]{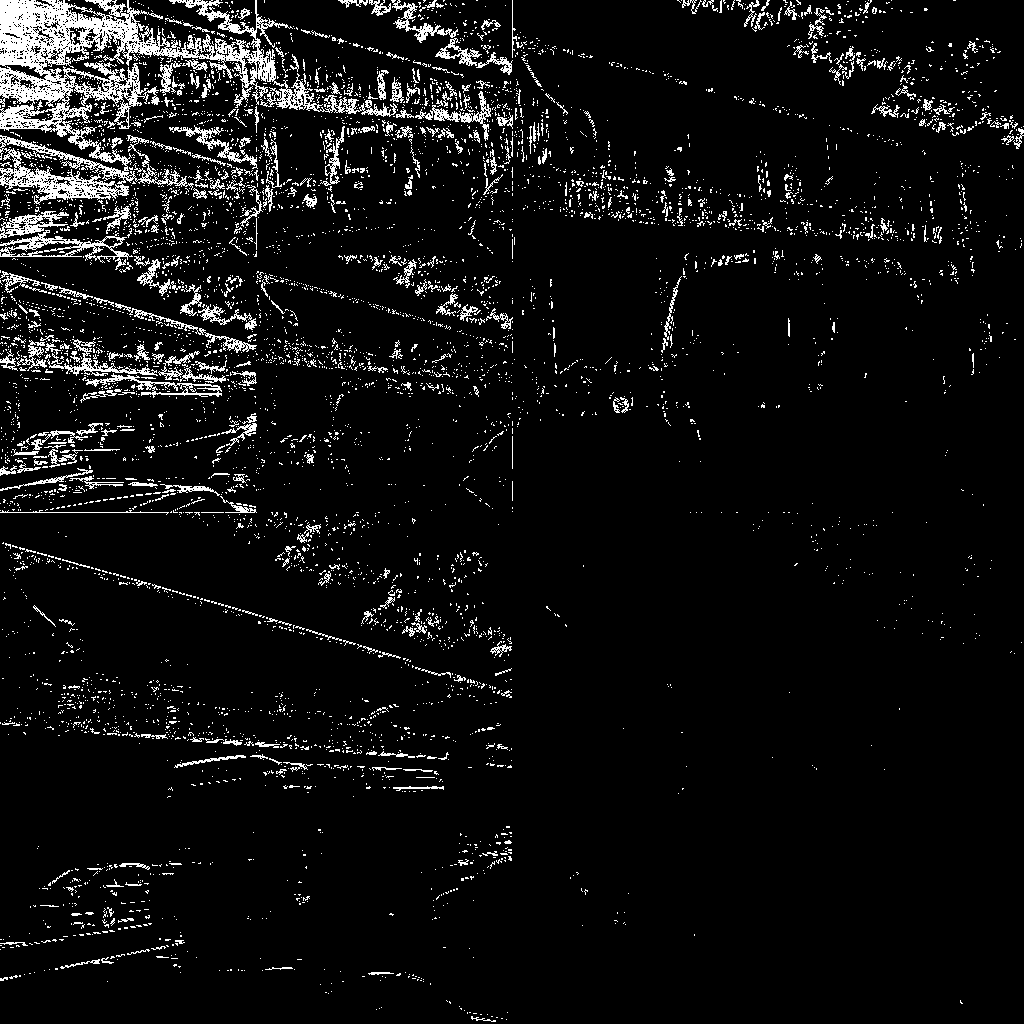}&
\includegraphics[width=0.43\textwidth]{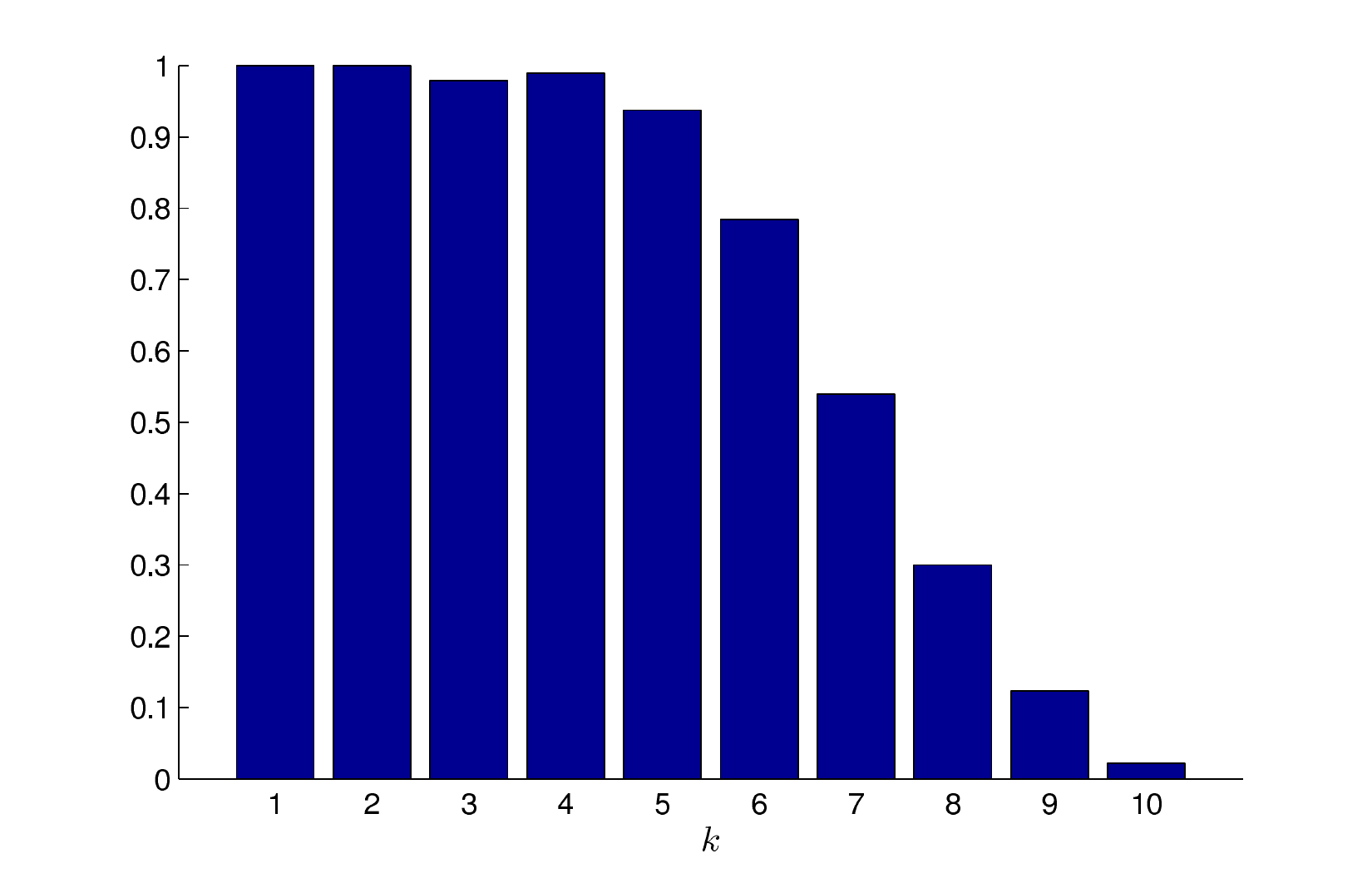}
\end{tabular}
\end{center}
\caption{ Left: reconstruction from the largest $6\%$ of the Daubechies-4 wavelet coefficients of a $1024\times 1024$ image. Centre: location of coefficients in the sparse  representation - coefficients are ordered in increasing wavelet scales away from the top left corner. Right: fraction of  coefficients at each wavelet scale $k$ which contribute to the sparse representation.
\label{fig:bus2}}
\end{figure}

\subsection{Local coherence}

Although the coherence between the sampling and sparsifying systems is a crucial concept in the understanding of the minimal sampling cardinality required for the recovery of sparse signals, there are important systems of interest in applications where it is simply too crude to consider coherence alone. Instead, we require the more refined notion of local coherence.
\defn{[Local coherence]\label{def:loc_coherence}
Let $V,D\in\cB(\cH,\cH')$ where $\cH$ is a Hilbert space and $\cH'$ is either $\bbC^{N}$ or $\ell^2(\bbN)$.
Let $\mathbf{N} = (N_1,\ldots,N_r) \in \bbN^r$ and $\mathbf{M} = 
(M_1,\ldots,M_r) \in \bbN^r$ with $0=N_0<  N_1 < \cdots < N_r $ and $0=M_0< M_1 < 
\cdots < M_r $.
For $k=1,\ldots, r$, let
$
\Gamma_k = \br{M_{k-1} +1 , \ldots, M_k}.
$
For $k=1,\ldots, r-1$, let $\Lambda_k = \br{N_{k-1} +1 , \ldots, N_k}$ and let $\Lambda_r = \br{n\in\bbN: n>N_r}$.
The $(k,l)^{\rth}$ local coherence between $V$ and $D$ with respect to 
$\mathbf{N}$ and $\mathbf{M}$ is given by
\eas{
\mu_{\mathbf{N},\mathbf{M}}(k,l) &= 
\sqrt{\mu(P_{\Gamma_k} VD^* P_{\Lambda_l}) \,  
\mu(P_{\Gamma_k} VD^*)},\quad k=1,\ldots, r,\quad l=1,\ldots,r.
}
}

\subsubsection*{The Fourier/wavelets case}
If $VD^*$ is constructed from  any orthonormal wavelet basis with  Fourier sampling, then it is necessarily the case that $\mu(VD^*) = 1$. However, it is only the initial section of $VD^*$ associated with low Fourier frequencies and low wavelet scales that has high coherence. In particular, one can show that 
$$
\mu( P_{[N]}^\perp VD^*), \mu(VD^*  P_{[N]}^\perp) = \ord{N^{-1/2}}.
$$
Finally, we remark that this property of asymptotic incoherence (decay in the coherence away from initial finite sections) is not unique to the  Fourier/wavelets case, but can also be observed for other representation systems such as Fourier/Legendre polynomial systems.  In the Fourier/wavelets case, it is this decay in the local coherences that makes it possible to exploit sparsity to subsample the  Fourier coefficients.

\subsection{Recovery guarantees in the case of orthonormal sparsifying transforms }
When we are considering the recovery of an infinite dimensional object by drawing finitely many samples, one may ask the following question: What is the range of the samples, $M$, that we should sample from in order to recover a sparse representation with respect to the first $N$ sparsifying elements? This question is addressed by the balancing property.

\begin{definition}[Balancing property \cite{adcockbreaking}]\label{balancing_property}
Let $VD^* \in \mathcal{B}(\ell^2(\mathbb{N}))$ be an isometry.  Then $M \in \bbN$ 
and $K \geq 1$ satisfy the  balancing property with respect to 
$V$, $D$, $N \in \bbN$ and $s \in \bbN$ if 
\begin{equation}\label{conditions1}
\begin{split}
\| P_{[N]}D V^* P_{[M]}^\perp VD^* P_{[N]} \|_{\ell^{\infty} \rightarrow \ell^{\infty}} \leq  
\frac{1}{8}\left(\log_2^{1/2}\left(4 \sqrt{s}KM\right)\right)^{-1},
\end{split}
\end{equation}
and
\begin{equation}\label{conditions34}
\begin{split}
\|P_{[N]}^{\perp}D V^*  P_{[M]} VD^* P_{[N]}\|_{\ell^{\infty} \rightarrow \ell^{\infty}}
\leq \frac{1}{8},
\end{split}
\end{equation}
where $\nm{\cdot}_{\ell^\infty \rightarrow \ell^{\infty}}$ is the norm on 
$\cB(\ell^{\infty}(\bbN))$.

\end{definition}

We now recall the main result of \cite{adcockbreaking} which informs on how multilevel sampling will depend on local coherences and the underlying sparsity structure.  For this, we require the following notation:
$$
\tilde{M} = \min\{i\in\bbN: \max_{k\geq i}\|  P_{[M]}  U e_k \| \leq 
1/(32 q^{-1} \sqrt{s})\},
$$
where $M$, $s$ and $q$ are as defined below.

\thm{ \cite{adcockbreaking}
\label{thm:main_bc}
Let $VD^*$ be an isometry either on $\ell^2(\bbN)$ or $\bbC^N$. Let $f \in 
\cH$.  Suppose that $\Omega = \Omega_{\mathbf{M},\mathbf{m}}$ is a 
multilevel sampling scheme, where $\mathbf{M} = (M_1,\ldots,M_r) \in \bbN^r$ 
and $\mathbf{m} = (m_1,\ldots,m_r) \in \bbN^r$.  Let 
$(\mathbf{s},\mathbf{N})$, where $\mathbf{N} = (N_1,\ldots,N_r) \in \bbN^r$, 
$N_1 < \ldots < N_r$, and $\mathbf{s} = (s_1,\ldots,s_r) \in \bbN^r$, be any 
pair such that the following holds: 
\begin{enumerate}
\item[(i)] the parameters
$
M=M_r, q^{-1} = \max_{k=1,\ldots,r} \left \{ \frac{M_{k}-M_{k-1}}{m_k} 
\right \},
$
satisfy the balancing property with respect to $V$, $D$, $N:= N_r$ and $s : 
= s_1+\ldots + s_r$;
\item[(ii)] for $\epsilon \in (0,e^{-1}]$,
\bes{
1 \gtrsim \frac{M_k-M_{k-1}}{m_k} \, \log(s\epsilon^{-1})\, 
\log\left(q^{-1} \tilde M \sqrt{s}\right)  \, \left(
\sum_{l=1}^r \mu^2_{\mathbf{N},\mathbf{M}}(k,l) \, s_l\right) , \qquad k=1,\ldots, r,
 }
  and 
$
m_k \gtrsim \hat m_k \,  \log(s\epsilon^{-1}) \, \log\left(q^{-1} \tilde M 
\sqrt{s}\right),
$
where $\hat{m}_k$ is such that
\be{
\label{conditions_on_hatm}
 1 \gtrsim \sum_{k=1}^r \left(\frac{M_k-M_{k-1}}{\hat m_k} - 1\right) \, 
 \mu^2_{\mathbf{N},\mathbf{M}}(k,l)\, \hat s_k, \qquad l=1,\ldots, r,
 }
for all $(\hat{s}_1,\ldots,\hat{s}_r )  \in\bbR_+^r$ with $ \hat s_1+\cdots + \hat s_r \leq s_1+\cdots + s_r$, and $ \hat s_k \leq \hat \kappa_k(\mathbf{N},\mathbf{M},\mathbf{s})$ for each $ k=1,\ldots, r$.

\end{enumerate}
Suppose that $\hat f$ is a minimizer of (\ref{eq:min_orth}) with $y =  P_\Omega V f + \eta$ and $\nm{\eta}_{\ell^2}\leq \delta$.  Then, with probability exceeding $1- \epsilon$, 
\bes{
\|\hat f - f \|\leq  C  \, \left(q^{-1/2}\, \delta \, \left(1+L 
\,\sqrt{s}\right) +\sigma_{\mathbf{s},\mathbf{N}}(D f) \right),
}
for some constant $C$, where $\sigma_{\mathbf{s},\mathbf{M}}$ is as in 
\R{sigma_s_m}, and $
L= C \, \left(1+ 
\frac{\sqrt{\log_2\left(6\epsilon^{-1}\right)}}{\log_2(4q^{-1} M\sqrt{s})}\right).
$
 If $m_k = M_{k}-M_{k-1}$ for $1 \leq k \leq r$ then this holds with 
 probability $1$.
}

Notice that the number of samples at each level is dependent on the local coherences between $V$ and $D$, the level sparsities $\br{s_k}$ and the relative level sparsities $\br{\hat s_k}$.  As discussed in \cite{adcockbreaking}, the relative level sparsities accounts for the interference between the different sampling and sparsifying levels and cannot be removed from the estimates. However, recall that in the case of Fourier sampling with wavelet sparsity where the levels correspond to the wavelet scales, one can essentially show that the dependence of $\hat s_k$ on each $s_j$ becomes exponentially small as $\abs{k-j}$ increases.  

 This result firstly suggests that  even in cases where incoherence is missing, subsampling in accordance to sparsity is still possible provided that the sampling and sparsifying bases are not uniformly coherent -- subsampling is possible when local coherence is small. Note also that this result suggests that  a change in the sparsity structure, i.e. the distribution of $\br{s_k}$ and $\br{\hat s_k}$, should result in a change in the sampling strategy.

\section{Main result}\label{sec:main}

The work of \cite{adcockbreaking} provides an initial understanding on how one can  structure  sampling in accordance to underlying sparsity structures so that the number of  samples require is (up to $\log$ factors) linear with sparsity. A natural extension of this work would be to consider this question when $D$ is an analysis operators associated with a tight frame instead of an orthonormal basis. This is of particular interest due to the recent development of sparse representations with respect to multiscale systems such as wavelet, curvelet and shearlet frames.
  In this paper, we will consider the case where $V :\cH \to \ell^2(\bbN)$ and $D:\cH \to \ell^2(\bbN)$ are isometries. This assumption simply states that $V$ and $D$ are the analysis operators of Parseval frames, i.e. $\br{\psi_j}_{j\in\bbN}$ and  $ \br{\varphi_j}_{j\in\bbN}$ are both Parseval frames of $\cH$ in (\ref{eq:VD_def}). 
  
  Note that if $D$ is associated with an orthonormal basis instead of a Parseval frame (i.e. $D$ is unitary), then (\ref{eq:min_orth}) is equivalent to
  \be{\label{eq:min_ortho_synth} 
\inf_{x\in\ell^1(\bbN)} \nm{  x}_{\ell^1} \text{ subject to } \nm{y -  P_\Omega VD^* x}_{\ell^2}\leq \delta.
}  
This minimization problem is referred to as synthesis regularization. On the other hand, in the case of non-orthonormal systems, (\ref{eq:min_orth}) (often referred to as analysis regularization) and (\ref{eq:min_ortho_synth}) are no longer equivalent. Some of the differences between synthesis and analysis regularization were investigated in \cite{elad2007analysis} and while the majority of theoretical works in compressed sensing has focussed on synthesis regularization, the theory behind the solutions of the analysis regularization problem (\ref{eq:min_orth}) is less comprehensive.

\subsection{Sparsity}  
In this section, we introduce concepts for describing sparsity under an analysis operator. In considering the solutions of (\ref{eq:min_orth}), it is intuitive that this minimization problem will favour signals $f$ for which the entries in $Df$ have fast decay or are mainly zero entries.  Note also that if there exists an index set $\Delta$ such that $P_{\Delta}^\perp D f = 0$, then $f\in\cN(P_{\Delta}^\perp D) \subset \cR(D^*P_\Delta)$ whenever $D^*D= I$. In the works of \cite{candes2011compressed,KrahmerNW15}, the signal space considered is, for each sparsity level $s$, the union of subspaces spanned by $s$ columns of $D^*$, $ \cW=\cup_{\abs{\Delta} = s} \cR(D^*P_\Delta)$.

As discussed in \cite{KrahmerNW15}, to understand the impact of  sparsity on the recovery of such a model, it is natural to  consider the effects of the analysis operator $D$ on any given $f\in \cW$ and in particular, the approximate sparsity of $Df$. For this purpose, \cite{KrahmerNW15} introduced  the localization factor $\eta$, which we previously recalled in (\ref{eq:KNW_factor}),
and their recovery estimates were given in terms of $\eta^2 s$. Moreover, as observed in \cite{candes2011compressed}, a standard measure of sparsity or compression in a vector is the quasi $\ell^p$ norm with $p\leq 1$.  With this in mind, we introduce that concept of \textit{localized sparsity} below.

\begin{definition} \label{def:analysis_sparsity}
Let $r \in\bbN$ and let $\mathbf{N}= (N_j)_{j=1}^r\in\bbN^r$, $\mathbf{s} = (s_j)_{j=1}^r\in\bbN^r$. Assume that $N_1<N_2<\cdots <N_r =:N$. Let $\Lambda_j = \bbN \cap (N_{j-1}, N_j]$ for $j=1,\ldots, r-1$ and $\Lambda_r = \bbN \cap (N_{r-1}, \infty)$. 
Let $p=2^{-J}$ for some $J\in\bbN\cup\br{0}$.  Let   $\kappa>0$ be the smallest number such that
\be{\label{eq:locali_sparsity}
 \kappa^{1-p/q} \geq \sup\br{ \nm{Dg}_p^p: g = D^*x, \nm{Dg}_{\ell^q}=1, x\in\Sigma_{\mathbf{s},\mathbf{N}}}, \qquad  q\in\br{2,\infty},
}
where we let $p/\infty = 0$.
Then, $\kappa(\mathbf{N},\mathbf{s},p) = \kappa$ is said to be the localized sparsity with respect to $p$, $\mathbf{N}$ and $\mathbf{s}$.
For each $j=1,\ldots, r$, let $\kappa_j>0$ be the smallest number such that
\bes{
\kappa_j^{1-p/q} \geq \sup\br{ \nm{P_{\Lambda_j}Dg}_p^p: g = D^*x, \nm{P_{\Lambda_j}Dg}_{\ell^q}=1, x\in\Sigma_{\mathbf{s},\mathbf{N}}}, \qquad  q\in\br{2,\infty},
}
Then $\kappa_j(\mathbf{N},\mathbf{s},p)  = \kappa_j$ is said to be the $j^{\rth}$ localized level sparsity with respect to $p$, $\mathbf{N}$ and $\mathbf{s}$.
\end{definition}
 
\begin{remark}
Observe that the localized sparsity is related to the localization factor in (\ref{eq:KNW_factor}): if $p=1$ and $q=2$ in (\ref{eq:locali_sparsity}), then it suffices to let $\kappa =\eta^2 s$.

One can consider $\kappa(\mathbf{s},\mathbf{N})$ to be a measure of the analysis sparsity of an element $f$ (i.e. sparsity of $Df$) given that it is synthesis sparse with respect to the frame $\br{f_j}$ associated with $D$ (i.e. $f=\sum_{j\in\Delta} x_j \varphi_j$ with $\abs{\Delta} = s$ and $x\in\bbC^\Delta$). 
Note that if $D$ is associated with an orthonormal basis, then $DD^*$ is the identity and it suffices to let $\kappa = s_1+\cdots + s_r$.

 The localized level sparsities $\kappa_j(\mathbf{s},\mathbf{N})$ describe the sparsity structure of $Df$ given that $f$ is synthesis sparse with a $(\mathbf{s}, \mathbf{N})$-sparsity pattern. Again, if $D$ is associated with an orthonormal basis, then these localized level sparsities are simply the level sparsities $\br{s_j}_{j=1}^r$. 
 
\end{remark}

We also require the definition of relative sparsity, note that the only difference to Definition \ref{def:rel_sparsity} is that the set $\Theta$ is defined in terms of $\nm{\cdot}_{\ell^2}$ instead of $\nm{\cdot}_{\ell^\infty}$.

\defn{[Relative sparsity]
\label{def:rel_sparsity2}
Let $V, D\in\cB(\cH, \cH')$ where $\cH$ is a Hilbert space and $\cH'$ is either $\mathbb{C}^{N}$ or $\ell^2(\bbN)$.  Let $\boldsymbol{\kappa} = (\kappa_j)_{j=1}^r \in \bbN^r$,
$\mathbf{N} = (N_j)_{j=1}^r  \in \bbN^r$ and $\mathbf{M} = (M_j)_{j=1}^r \in 
\bbN^r$ with $0=N_0< N_1 < \cdots < N_r$ and $0=M_0< M_1 < \cdots < M_r$. 
 For $1 \leq k \leq r$, the 
$k^{\rth}$ relative sparsity is given by
$$
\hat \kappa_k = \hat \kappa_k(\mathbf{N},\mathbf{M},\boldsymbol{\kappa}) =  \max_{g \in 
\Theta}\|P_{\Gamma_k} V g\|^2,
$$
where $\Gamma_k = (M_{k-1}, M_k]\cap \bbN$ and $\Theta$ is the set

\bes{
\Theta = \{g\in\cH : g = D^*\eta, \, \|P_{\Lambda_l}Dg\|_{\ell^{2}}^2 \leq \kappa_l,\, l=1,\hdots, r\}.
}
where $\Lambda_l = (N_{k-1}, N_k]\cap \bbN$.

}
%

\subsection{Main result}
  The main result of this paper describes how the reconstruction error of any solution of (\ref{eq:min_orth})  depends on the choice of samples. Note that 
the problem of considering the minimizers of (\ref{eq:min_orth}) is well posed since minimizers necessarily exist (see \ref{prop:existence})

In the case of orthonormal systems, the balancing property provides an indication of the range that one should sample from when recovering a sparse support set $\Delta \subset [N]$ for some $N\in\bbN$. This condition essentially describes how large $M$ must be such that $P_{[M]}V$ is close to an isometry  on $\cR(D^*P_\Delta)$ for all $\Delta \subset [N]$. In the case where $D$ is no longer constructed from an orthonormal basis, we define the balancing property as follows. 
  
\begin{definition} \label{balancing_property_new}
Let $V,D \in \mathcal{B}(\cH, \ell^2(\mathbb{N}))$ be  isometries.  Then $M \in \bbN$ 
and $K \geq 1$ satisfy the  balancing property with respect to 
$V$, $D$, $N$, $s\in \bbN$ and $\kappa_2\geq \kappa_1 >0$ if for all $\cW = \cR(D^*P_\Delta)$ where $\Delta\subset [N]$ is such that $\abs{\Delta} = s$,
\begin{equation}\label{condition_bp1}
\begin{split}
\| D Q_\cW V^* P_{[M]}^\perp V Q_\cW D^*  \|_{\ell^{2} \rightarrow \ell^{2}} \leq  
\frac{\sqrt{\kappa_1/\kappa_2}}{8}\left(\log_2^{1/2}\left(4 \sqrt{\kappa_2}KM\right)\right)^{-1},
\end{split}
\end{equation}
and
\begin{equation}\label{condition_bp2}
\begin{split}
\|D Q_\cW^\perp  V^*  P_{[M]} V Q_\cW D^*  \|_{\ell^{2} \rightarrow \ell^{\infty}}
\leq \frac{1}{8\sqrt{\kappa_2}},
\end{split}
\end{equation}

\end{definition}

Although this balancing property conceptually enforces the same isometry properties as the balancing property presented in the case of orthonormal systems, note that the conditions are stated in terms of the $\ell^2$ norm instead. This difference is due to a slightly  different \textit{dual certificate} construction in the proof of our main result, and this slightly stronger balancing property will allow us to derive sharper bounds on the number of samples required. We remark also that in the case where $\kappa_1=\kappa_2$, this balancing property in fact reduces to the original balancing property introduced in \cite{BAACHGSCS}. 

In the following theorem, for $r\in\bbN$, let   $\mathbf{M} = (M_1,\ldots,M_r) \in \bbN^r$ , $\mathbf{m} = (m_1,\ldots,m_r) \in \bbN^r$, $\mathbf{N} = (N_1,\ldots,N_r) \in \bbN^r$, 
$N_1 < \ldots < N_r$, and $\mathbf{s} = (s_1,\ldots,s_r) \in \bbN^r$. For  $p\in (0,1]$, let $\boldsymbol{\kappa} = (\kappa_j)_{j=1}^r$ with $\kappa_j = \kappa_j(\mathbf{N},\mathbf{s},p)$ and let $\hat \kappa_j = \hat \kappa_j(\mathbf{N},\mathbf{M},\boldsymbol\kappa)$.
Let
\be{\label{eq:tilde_M}
\tilde M = \nm{DD^*}_{\ell^\infty \to \ell^\infty} \min\br{i\in\bbN : 
\max_{j\geq i} \nm{ P_{[M]}  V  D^* e_j}_{\ell^2} \leq \frac{q}{8\sqrt{\kappa_{\max}}}, \quad \max_{j\geq i} \nm{Q_{\cR(D^*P_{[N]})} D^* e_j}\leq \frac{\sqrt{5q}}{4}},
}
and
 \be{\label{eq:B}
  B(\mathbf{s},\mathbf{N}) = \sup\br{ \tilde B(\Delta) :  \Delta \text{ is } (\mathbf{s}, \mathbf{N})\text{-sparse}},
}
where, given any $\Delta\subset\bbN$ and $\cW_\Delta = \cR(D^*P_\Delta)$, 
\be{\label{eq:B_Delta}
\tilde B(\Delta) = \max\br{\nm{D Q_{\cW_\Delta}^\perp D^*}_{\ell^\infty \to \ell^\infty} , \sqrt{\nm{D Q_{\cW_\Delta} D^*}_{\ell^\infty \to \ell^\infty}\cdot \max_{l=1}^r \br{ \sum_{t=1}^r \nm{P_{\Lambda_l}D Q_{\cW_\Delta} D^* P_{\Lambda_t}}_{\ell^\infty \to \ell^\infty}}}}.
}
The key notations used in Theorem \ref{thm:main} are summarized in Table \ref{t:notation}.

 \begin{table} 
\begin{center}
{\small{
\begin{tabular}{| c| c|}

\hline 
 Notation & Description \\
 \hline
 $V$ & Sampling operator\\
 $D$ & Sparsifying operator\\
 $r$ & Number of levels\\
 $\mathbf{N} = (N_k)_{k=1}^r$ & Divides the sparsifying coefficients into levels \\
  $\mathbf{M} = (M_k)_{k=1}^r$ & Divides the sampling coefficients into levels \\
  $\mathbf{m} = (m_k)_{k=1}^r$ & Number of samples at each level\\
 $\mathbf{s} = (s_k)_{k=1}^r$ & Level sparsities\\
 $\mu_{\mathbf{N},\mathbf{M}}(k,l)$ & $(k,l)^{\rth}$ localized coherence, see Definition \ref{def:loc_coherence}\\
 $\sigma_{\mathbf{s},\mathbf{N}}$ & See Definition \ref{def:sparsity_level}\\
  $\kappa_j $ &$j^\rth$ localized sparsity, see Definition \ref{def:analysis_sparsity}
 \\
 $\hat \kappa_j$ & $j^{\rth}$ relative sparsity, see Definition \ref{def:rel_sparsity2}\\
 $\tilde M$ & See (\ref{eq:tilde_M})\\
 $B(\mathbf{s},\mathbf{N})$ & See (\ref{eq:B}) 
 \\\hline
\end{tabular}
   }}
  \end{center}
   \caption{Summary of the key notations for Theorem \ref{thm:main}.}
\label{t:notation}
 \end{table}
 
\thm{ 
\label{thm:main}
Let $\cH$ be a Hilbert space and let $V,D\in \cB(\cH , \ell^2(\bbN))$ be isometric linear operators. Let $f \in 
\cH$.  Suppose that $\Omega = \Omega_{\mathbf{M},\mathbf{m}}$ is a 
multilevel sampling scheme.  Let 
$(\mathbf{s},\mathbf{N})$ be such that the following holds: 
\begin{enumerate}

\item[(i)] the parameters
$
M=M_r, q^{-1} = \max_{k=1,\ldots,r} \left \{ \frac{M_{k}-M_{k-1}}{m_k} 
\right \},
$
satisfy the balancing property with respect to $V$, $D$, $N:= N_r$, $\kappa_{\min}= r\min\br{\kappa_j}$ and $\kappa_{\max} = r\max\br{\kappa_j}$;
\item[(ii)] 
For $\epsilon \in (0,e^{-1}]$,
\bes{
1 \gtrsim   \sqrt{r}\, \log(\epsilon^{-1}) \log\left(q^{-1} \tilde M \sqrt{\kappa_{\max}}\right) \, B(\mathbf{s},\mathbf{N})\, \frac{M_k-M_{k-1}}{m_k}  \, \left(
\sum_{l=1}^r \mu^2_{\mathbf{N},\mathbf{M}}(k,l) \, \kappa_l\right), \qquad k=1,\ldots, r,
 }
   and 
$
m_k \gtrsim r\, \hat m_k \, B(\mathbf{s},\mathbf{N})^2 \, \log(\epsilon^{-1}) \, \log\left(q^{-1} \tilde M 
\sqrt{\kappa_{\max}}\right),
$
where $\hat{m}_k$ is such that
\bes{
 1 \gtrsim  \sum_{k=1}^r \left(\frac{M_k-M_{k-1}}{\hat m_k} - 1\right) \, 
 \mu^2_{\mathbf{N},\mathbf{M}}(k,l)\,  \hat \kappa_k,  \qquad l=1,\ldots, r.
 }
 \end{enumerate}
Suppose that $\hat f$ is a minimizer of (\ref{eq:min_orth}) with $y =  P_\Omega V f + \eta$ and $\nm{\eta}_{\ell^2}\leq \delta$.  Then, with probability exceeding $1- \epsilon $, 
\bes{
\|\hat f - f \|\leq  C  \, \left(q^{-1/2}\, \delta \, \left(1+L 
\,\sqrt{\kappa_{\max}}\right) +\sigma_{\mathbf{s},\mathbf{N}}(D f) \right),
}
for some constant $C$, where $\sigma_{\mathbf{s},\mathbf{N}}$ is as in 
\R{sigma_s_m}, and $
L=   1+ 
\frac{\sqrt{\log_2\left(6\epsilon^{-1}\right)}}{\log_2(4q^{-1} M\sqrt{\kappa_{\max}})}.
$
 If $m_k = M_{k}-M_{k-1}$ for $1 \leq k \leq r$ then this holds with 
 probability $1$.
}

\subsubsection{The unconstrained minimization problem}
Instead of solving the constrained minimizaton problem in Theorem \ref{thm:main}, for computational reasons, it is often of interest to solve instead an unconstrained minimization problem for some $\alpha >0$,
\be{\label{eq:unconstr}
\inf_{g\in\cH} \alpha \nm{ D g}_{\ell^1} + \nm{ P_\Omega  V g - y}_{\ell^2}^2.
}
The following result presents a recovery guarantee for this unconstrained problem.
\begin{corollary}\label{cor:unconstr}
Consider the setting of Theorem \ref{thm:main}, and let $\alpha = \sqrt{q}\delta$. Then, with probability exceeding $(1-\epsilon)$, any minimizer $\hat f$ of (\ref{eq:unconstr}) satisfies
$$
\nm{f- \hat f}_\cH \leq   \delta\left( q^{-1/2} + L\sqrt{\kappa_{\max}} + L^2 \kappa_{\max}\right) +  \sigma_{\mathbf{N}, \mathbf{s}}( D f).
$$

\end{corollary}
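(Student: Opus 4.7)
\noindent\textbf{Proof plan for Corollary \ref{cor:unconstr}.}
The idea is to pass from the unconstrained minimizer to a near-feasible point of the constrained problem of Theorem \ref{thm:main}, then re-invoke the dual-certificate argument underlying that theorem. Write $h = \hat f - f$. Exploiting the optimality of $\hat f$ against the competitor $g = f$, together with $y = P_\Omega V f + \eta$ and $\nm{\eta}_{\ell^2}\le\delta$, and expanding $\nm{P_\Omega V h - \eta}_{\ell^2}^2$, one obtains after cancelling $\nm{\eta}_{\ell^2}^2$ the basic inequality
\[
\alpha\bigl(\nm{D\hat f}_{\ell^1}-\nm{Df}_{\ell^1}\bigr) + \nm{P_\Omega V h}_{\ell^2}^2 \;\le\; 2\delta\,\nm{P_\Omega V h}_{\ell^2}.
\]

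Next, let $\Delta$ be a best $(\mathbf{s},\mathbf{N})$-sparse support for $Df$, so $\nm{P_{\Delta^c}Df}_{\ell^1} = \sigma_{\mathbf{s},\mathbf{N}}(Df)$, and apply the triangle and reverse-triangle inequalities to $D\hat f = Df + Dh$ restricted to $\Delta$ and $\Delta^c$. Using Young's inequality $2\delta b \le \tfrac12 b^2 + 2\delta^2$ on the right-hand side of the basic inequality and rearranging produces two companion estimates
\[
\nm{P_{\Delta^c} Dh}_{\ell^1} \;\lesssim\; \nm{P_\Delta Dh}_{\ell^1} + \sigma_{\mathbf{s},\mathbf{N}}(Df) + \delta^2/\alpha,
\]
\[
\nm{P_\Omega V h}_{\ell^2}^2 \;\lesssim\; \delta^2 + \alpha\bigl(\nm{P_\Delta Dh}_{\ell^1} + \sigma_{\mathbf{s},\mathbf{N}}(Df)\bigr),
\]
which together amount to a cone-type condition plus a quadratic control on the residual.

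I would then replay the dual-certificate construction from the proof of Theorem \ref{thm:main}. Under hypotheses (i)--(ii) the latter produces, with probability at least $1-\epsilon$, an inexact certificate $\rho = V^* P_\Omega^* w$ with $\nm{w}_{\ell^2} \lesssim L\sqrt{\kappa_{\max}}$ whose image under $D$ matches $\sgn(P_\Delta Df)$ on $\Delta$ up to a small error and is strictly contractive on $\Delta^c$. Pairing $\rho$ against $Dh$ and combining with the near-isometry consequences of the balancing property (Definition \ref{balancing_property_new}) on $\cR(D^*P_\Delta)$, together with the elementary bound $\nm{P_\Delta Dh}_{\ell^1} \le \sqrt{|\Delta|}\,\nm{h}_\cH \le \sqrt{\kappa_{\max}}\,\nm{h}_\cH$ (using $|\Delta| = s_1+\cdots+s_r \le \kappa_{\max}$ and the isometry $\nm{Dh}_{\ell^2} = \nm{h}_\cH$), yields the generic recovery estimate
\[
\nm{h}_\cH \;\lesssim\; q^{-1/2}\bigl(1+L\sqrt{\kappa_{\max}}\bigr)\nm{P_\Omega V h}_{\ell^2} + \sigma_{\mathbf{s},\mathbf{N}}(Df) + q^{-1/2}\nm{P_{\Delta^c}Dh}_{\ell^1}.
\]

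Finally, feeding the cone condition and the residual bound into this estimate produces a quadratic inequality in $\nm{h}_\cH$ of the shape $\nm{h}_\cH \lesssim A + B\sqrt{\delta\,\nm{h}_\cH}$, with $A$ collecting the $\sigma$-- and $\delta/\sqrt q$--type terms and $B$ scaling like $L\sqrt{\kappa_{\max}}$ after using the calibration $\alpha = \sqrt{q}\,\delta$ (which makes $\delta^2/\alpha = q^{-1/2}\delta$ balance the leading term of Theorem \ref{thm:main}). Solving the quadratic yields the stated bound, with the additional $\delta L^2\kappa_{\max}$ contribution arising from its larger root. The main obstacle is precisely this last step: because the fidelity enters quadratically in the unconstrained optimality inequality but only linearly in the certificate estimate, $\nm{h}_\cH$ genuinely satisfies a quadratic inequality, so the extra $\delta L^2\kappa_{\max}$ term--absent from Theorem \ref{thm:main}--is the unavoidable cost of removing the constraint; care is also needed to verify that the coarse bound $|\Delta|\le\kappa_{\max}$ suffices rather than requiring a sharper analysis-side substitute.
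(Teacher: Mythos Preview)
Your overall architecture---basic optimality inequality, triangle/reverse-triangle on $Df+Dh$, dual certificate, then a quadratic inequality---matches the paper's in spirit, but two of the intermediate steps do not go through as written.

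First, the ``generic recovery estimate'' you quote cannot be obtained by simply replaying the constrained dual-certificate argument. That argument (Proposition~\ref{prop:dual_certificate}) uses minimality in the form $\nm{D\hat f}_{\ell^1}\le\nm{Df}_{\ell^1}$ to turn the inner product $\ip{P_\Delta Dh}{\sgn(P_\Delta Df)}$ into an upper bound for $\nm{P_\Delta^\perp Dh}_{\ell^1}$ and then absorb the latter. In the unconstrained problem you do not have $\nm{D\hat f}_{\ell^1}\le\nm{Df}_{\ell^1}$; you only have your basic inequality, which contains the residual $\nm{P_\Omega Vh}_{\ell^2}^2$. So you cannot import the constrained estimate wholesale and then feed in your cone/residual bounds afterwards---the minimality step itself must be redone with the quadratic data term present. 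This is exactly what the paper does in Proposition~\ref{prop:dual_certificate_unconstr}: it sets $\lambda=\nm{y-P_\Omega V\hat f}_{\ell^2}$, keeps the term $\lambda^2$ on the left of the minimality inequality together with $\tfrac{\alpha}{2}\nm{P_\Delta^\perp Dh}_{\ell^1}$, bounds the inner product via the dual certificate (producing only $(\delta+\lambda)$ and $\nm{P_\Delta^\perp Dh}_{\ell^1}$ terms), and obtains a genuine quadratic in $\lambda$, not in $\nm{h}_\cH$. Solving for $\lambda$ and substituting into $\nm{h}_\cH\le\nm{Q_\cW h}+\nm{P_\Delta^\perp Dh}_{\ell^1}$ then gives the result.

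Second, and related, your route forces you to bound $\nm{P_\Delta Dh}_{\ell^1}\le\sqrt{|\Delta|}\,\nm{h}_\cH$ and then claim $|\Delta|=s_1+\cdots+s_r\le\kappa_{\max}$. The paper never establishes $s\le\kappa_{\max}$ for a general Parseval frame (for orthonormal $D$ one has $\kappa_j=s_j$, but for redundant frames the localized sparsities $\kappa_j$ are defined through $\ell^p$ norms of $DD^*x$ and there is no a priori comparison with $s_j$). The paper's $\lambda$-route sidesteps this entirely: the dual certificate handles $\ip{P_\Delta Dh}{\sgn(P_\Delta Df)}$ directly, so $\nm{P_\Delta Dh}_{\ell^1}$ never needs to be controlled in terms of $\nm{h}_\cH$. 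If you want to keep your decomposition, you would need either to prove $s\le\kappa_{\max}$ under the paper's hypotheses or to replace that step by the dual-certificate pairing as in Proposition~\ref{prop:dual_certificate_unconstr}.
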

\begin{remark}
Note that by choosing $\alpha = \sqrt{q} \delta$, the guaranteed error bound is, up to $\sqrt{q}L^2 s$, the same as the guaranteed error bound of solutions to the constrained problem
$$
\inf_{g\in\cH}  \nm{ D g}_{\ell^1} \text{ subject to } \nm{ P_\Omega  V g - y}_{\ell^2}\leq \delta.
$$
This affirms the finding in \cite[Figure 7]{benning2014phase}, which numerically demonstrates that there exists a linear relation between the regularization parameter $\alpha$ and noise level of the measurements $\delta$. Moreover, this linear scaling increases as $q$ increases.
\end{remark}

\subsection{Remarks on the main result}

\subsubsection{On the factor $r$}
In the case where $D$ is associated with an orthonormal basis, the key difference between our main result and Theorem \ref{thm:main_bc} is that bounds on the number of samples in Theorem \ref{thm:main_bc} has a factor of $\log(s)$ while the bounds in Theorem \ref{thm:main} have a factor of $r$ (the number of levels) instead. In general, the sparsity $s$ may grow as the ambient dimension $N_r$ increases, whilst the number of levels $r$ can be thought of as simply a constant; for example, $r=2$ in the case of the half-half schemes presented in \S \ref{sec:discreteHaar} (see also \cite{studer2012compressive} for the application of a half-half scheme in fluorescence microscopy). Therefore, Theorem \ref{thm:main} may be considered to provide slightly sharper bounds than Theorem \ref{thm:main_bc} and is in fact optimal in the case where $r=1$ (since the optimal sampling cardinality is $\ord{s\log N}$ \cite{candes2011probabilistic}). We remark however, that by utilizing the construction of the dual certificate from \cite{adcockbreaking}, one can replace the factor of $r$ with $\log(\tilde M)$.

\subsubsection{Remarks on reconstructing -lets from Fourier samples}

We begin with a corollary of Theorem \ref{thm:main}. Its proof can be found in Section \ref{sec:loc_sp}.
\begin{corollary}\label{cor:rel_sp}
Let $V$ and $D$ be isometries. Let $\br{N_j}_{j=1}^r$,$\br{M_j}_{j=1}^r\in\bbN^r$ with $\Gamma_j = (M_{j-1},\ldots, M_j]\cap \bbN$ and $\Lambda_j =  (N_{j-1},\ldots, N_j]\cap \bbN$ where $M_0=N_0=0$. 
Let $\omega$ be a non-negative function defined on $\br{1,\ldots, r}^2$ with
\be{\label{eq:maincor1}
\sum_{l=1}^r \omega(k,l)\leq C, \quad k=1,\ldots, r, \qquad \sum_{k=1}^r \omega(k,l)\leq C, \quad l=1,\ldots, r,
}
for some $C>0$.
Suppose that
\be{\label{eq:maincor2}
 \nm{P_{\Gamma_k}VD^*P_{\Lambda_l}}\leq \omega(k,l), \qquad \mu^2_{\mathbf{N},\mathbf{M}}(k,l) \leq \omega(k,l) \cdot \min\br{\frac{1}{N_{l-1}}, \frac{1}{M_{k-1}}}.
 }
 Then  condition (ii) of Theorem \ref{thm:main} holds provided that
$$
m_k \gtrsim r C^2 B(\mathbf{s},\mathbf{N})^2 \cdot \log(\epsilon^{-1}) \log(q^{-1} \tilde M \sqrt{\kappa_{\max}}) \cdot \frac{M_k - M_{k-1}}{M_{k-1}}\cdot \left( \sum_{l=1}^r \omega(k,l) \kappa_l \right).
$$
In particular,   we have that
$$
m_1+\cdots +  m_r \gtrsim \tilde C \cdot \log(\epsilon^{-1}) \log(q^{-1} \tilde M \sqrt{\kappa_{\max}}) \cdot \left( \kappa_1 + \cdots + \kappa_r \right),
$$
where $\tilde C =r C^3 B(\mathbf{s},\mathbf{N})^2 \max_{k=1}^r\br{(M_k - M_{k-1})/M_{k-1}}$.
\end{corollary}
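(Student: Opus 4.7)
}
The plan is to translate the two structural hypotheses (\ref{eq:maincor1}) and (\ref{eq:maincor2}) into bounds that verify condition (ii) of Theorem \ref{thm:main}. The first hypothesis is used to control the relative sparsities $\hat{\kappa}_k$, whereas the second replaces the local coherences $\mu^2_{\mathbf{N},\mathbf{M}}(k,l)$ by $\omega(k,l)/M_{k-1}$.

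The main step is to show the inequality
\bes{
\hat{\kappa}_k(\mathbf{N},\mathbf{M},\boldsymbol{\kappa}) \;\leq\; C\sum_{l=1}^r \omega(k,l)\,\kappa_l, \qquad k=1,\ldots,r.
}
Fix any $g\in\Theta$, so that $g = D^*\eta$ with $\|P_{\Lambda_l}Dg\|_{\ell^2}^2 \leq \kappa_l$ for $l=1,\dots,r$. Since $D$ is an isometry, $D^*D = I$, so that $g = D^*Dg$; because $\bigcup_l \Lambda_l = \bbN$, we may decompose $g = \sum_{l=1}^r D^* P_{\Lambda_l} Dg$. Applying $P_{\Gamma_k}V$, using the triangle inequality and the assumed bound $\|P_{\Gamma_k}VD^*P_{\Lambda_l}\| \leq \omega(k,l)$, yields
\bes{
\|P_{\Gamma_k}Vg\|_{\ell^2} \;\leq\; \sum_{l=1}^r \omega(k,l)\sqrt{\kappa_l}.
}
The Cauchy--Schwarz inequality with weights $\omega(k,l)$, combined with $\sum_l\omega(k,l)\leq C$, then gives $\|P_{\Gamma_k}Vg\|_{\ell^2}^2 \leq C\sum_l \omega(k,l)\kappa_l$, and taking the supremum over $\Theta$ produces the desired bound on $\hat{\kappa}_k$.

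Next, using $\mu^2_{\mathbf{N},\mathbf{M}}(k,l) \leq \omega(k,l)/M_{k-1}$, the quantity $\sum_l \mu^2_{\mathbf{N},\mathbf{M}}(k,l)\kappa_l$ appearing in the first part of condition (ii) is bounded by $M_{k-1}^{-1}\sum_l\omega(k,l)\kappa_l$; hence that inequality holds as soon as
\bes{
m_k \;\gtrsim\; \sqrt{r}\,B(\mathbf{s},\mathbf{N})\,\log(\epsilon^{-1})\log\!\bigl(q^{-1}\tilde M\sqrt{\kappa_{\max}}\bigr)\,\frac{M_k-M_{k-1}}{M_{k-1}}\sum_{l=1}^r \omega(k,l)\kappa_l.
}
For the second part of condition (ii), set
\bes{
\hat m_k \;=\; C^2\,\frac{M_k - M_{k-1}}{M_{k-1}}\sum_{j=1}^r\omega(k,j)\kappa_j.
}
Substituting the estimate on $\hat{\kappa}_k$ and the bound on $\mu^2_{\mathbf{N},\mathbf{M}}(k,l)$ into the constraint on $\hat m_k$, the left-hand side is dominated by
\bes{
\sum_{k=1}^r \frac{M_k-M_{k-1}}{\hat m_k}\cdot\frac{\omega(k,l)}{M_{k-1}}\cdot C\sum_{j=1}^r\omega(k,j)\kappa_j \;=\; \frac{1}{C}\sum_{k=1}^r \omega(k,l) \;\leq\; 1,
}
where we used the column-sum condition $\sum_k \omega(k,l)\leq C$. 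This validates the choice of $\hat m_k$ and, via condition (ii), gives the required per-level sample bound on $m_k$; since $rC^2B(\mathbf{s},\mathbf{N})^2 \gtrsim \sqrt{r}B(\mathbf{s},\mathbf{N})$, the first inequality is automatically implied.

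Finally, summing over $k$ and exchanging the order of summation,
\bes{
\sum_{k=1}^r m_k \;\gtrsim\; rC^2 B(\mathbf{s},\mathbf{N})^2\log(\epsilon^{-1})\log\!\bigl(q^{-1}\tilde M\sqrt{\kappa_{\max}}\bigr)\,\max_k\frac{M_k-M_{k-1}}{M_{k-1}}\sum_{l=1}^r \kappa_l\sum_{k=1}^r\omega(k,l),
}
and a final application of $\sum_k \omega(k,l)\leq C$ produces the advertised constant $\tilde C = rC^3 B(\mathbf{s},\mathbf{N})^2\max_k(M_k-M_{k-1})/M_{k-1}$. The conceptually delicate step is the first one: translating a constraint on $\|P_{\Lambda_l}Dg\|_{\ell^2}$ (which lives in the coefficient side) into a bound on $\|P_{\Gamma_k}Vg\|_{\ell^2}$ relies crucially on the fact that $D$ is an isometry, so that $g$ can be reconstructed from $Dg$ by an exact synthesis formula compatible with the partition $\{\Lambda_l\}$.
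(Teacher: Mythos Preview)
Your proof is correct and follows essentially the same route as the paper's: you re-derive inline the bound $\hat\kappa_k \leq C\sum_l \omega(k,l)\kappa_l$ (which the paper states separately as Lemma~\ref{lem:rel_sp}), then use the coherence estimate $\mu^2_{\mathbf{N},\mathbf{M}}(k,l)\leq \omega(k,l)/M_{k-1}$ to simplify both parts of condition~(ii), choose $\hat m_k$ accordingly, and finish by summing over $k$ with the column-sum bound $\sum_k\omega(k,l)\leq C$. The only cosmetic difference is that the paper first solves the $\hat m_k$ constraint in terms of $\hat\kappa_k$ and then substitutes the lemma, whereas you write down $\hat m_k$ explicitly and verify the constraint directly; both arrive at the same place.
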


Note that the dependence of our main result on the localized coherence terms allows one to exploit both asymptotic incoherence \textit{and} the correspondences between the different sparsifying and sampling levels.
Conditions (\ref{eq:maincor1}) and (\ref{eq:maincor2}) essentially control the correspondence between the different sampling and sparsifying levels, whilst maintaining asymptotic incoherence in $VD^*$. Under these conditions, this result presents a direct link between the localized sparsities and the sampling strategy where the dependence of the number of samples in the $k^\rth$ level $m_k$ on the $l^\rth$ localized sparsity $\kappa_l$ is weighted by $\omega(k,l)$. Note also that the only other dimensional dependencies consist of one $\log$ factor and the factor of $B(\mathbf{s},\mathbf{N})$, which numerically does not seem to be significant (see Section \ref{sec:BsN}). 

Of course, further analysis of Corollary \ref{cor:rel_sp} would be necessary for a full comparison between our results and Theorem \ref{thm:knw}, however,  an advantage of Corollary \ref{cor:rel_sp} is that it makes explicit the dependence between how one should subsample and the sparsity structure, and provided that $B(\mathbf{s},\mathbf{N})$ remains bounded, Corollary \ref{cor:rel_sp} will provide for a sharper estimate on the sampling cardinality. In the case where $V$ is constructed from a Fourier basis and $D$ is constructed from a wavelet basis, it is in fact the analysis of (\ref{eq:maincor1}) and (\ref{eq:maincor2}) that enabled \cite{adcockbreaking} to derive sharp bounds on the sampling cardinality. We now explain this in more detail: 

\cite{adcockbreaking} considered the case where
$$
D:L^2[0,1)\to \ell^2(\bbN), \qquad f \mapsto (\ip{f}{\varphi_j})_{j\in\bbN}
$$
for some orthonormal wavelet basis $\br{\varphi_j}$ associated with a scaling function $\Phi$ and a mother wavelet $\Psi$ satisfying for all $\xi\in\bbR$,
\be{\label{eq:Fourier_decay}
\abs{\hat \Phi \left(\xi\right)} \lesssim (1+\abs{\xi})^{-\beta}, \qquad \abs{\hat \Psi \left(\xi\right)} \lesssim (1+\abs{\xi})^{-\beta}.
}
where $\hat \Phi $ and $\hat \Psi$ denote the  Fourier transforms of $\Phi$ and $\Psi$ respectively, and
the Fourier sampling operator is
$$
V:L^2[0,1)\to \ell^2(\bbN), \qquad f \mapsto (\ip{f}{e^{i2\pi \omega k \cdot }})_{k\in\bbZ}
$$
for some appropriate Fourier sampling density $\omega\in (0,1]$. Then,  if  we  let $(M_k)$ and $(N_k)$ correspond to wavelet scales, so that $M_k = \ord{2^{R_k}}$, $N_k = 2^{R_k}$ and $(R_k)_{k=1}^r\in\bbN^r$ is an increasing sequence of integers,  then we can let \be{\label{eq:omega_fw}
\omega(k,j) = \min\br{ A^{R_j - R_{k-1}}, B^{R_k - R_{j-1}}}} where $A, B>1$ are constants which depend on the Fourier decay exponent $\beta$ and the number of vanishing moments of the generating wavelet. Furthermore, since $D$ is constructed from an orthonormal basis, $\kappa_j = s_j$ for each $j=1,\ldots, r$, and $B(\mathbf{s},\mathbf{N}) = 1$.
 \cite{adcockbreaking} also analysed the balancing property in the Fourier/wavelets case and condition (i) can be shown to hold provided that 
$$
 M_r \gtrsim \omega^{-1} N_r^{1+1/(2\beta-1)}\log_2(\epsilon^{-1} N_r \sqrt{s} q^{-1})^{1/(2\beta -1)},
$$  
and $\log(\tilde M)\leq \log(M_r)$. So the number of samples needed on the $k^{\rth}$ level is
$$
m_k \gtrsim \cL\cdot \frac{M_k-M_{k-1}}{M_{k-1}}\cdot\left( s_k  + \sum_{j\leq k-1} A^{R_j - R_{k-1}} s_j + \sum_{j\geq k+1} B^{R_k - R_{j-1}} s_j\right),
$$
with $\cL =   r \log(M)$. Note that the total sampling cardinality is, up to one $\log$ factor and the ratio $\max_{k=1}^r  \br{ \frac{M_k-M_{k-1}}{M_{k-1}}}$, linear with the total sparsity.

It is likely that by carrying out a similar analysis in \cite{adcockbreaking}, one can apply Theorem \ref{thm:main} to derive sharp recovery results for the recovery of coefficients with other multiscale systems, such as shearlets and wavelet frames from Fourier samples. This work is beyond the scope of this paper, however, we simply highlight two aspects of any such analysis. The first  part of the analysis would include precise estimates on the correspondences between the different sampling and sparsifying levels (i.e. analysis of $\omega$ in Corollary \ref{cor:rel_sp}). In the case of orthonormal Fourier and wavelet bases, the choice of $\omega$ in (\ref{eq:omega_fw}) is simply due to the Fourier decay (\ref{eq:Fourier_decay}) and the number of vanishing moments in the underlying wavelet and not on orthogonality properties. Thus, such a choice of $\omega$ would also suffice in the case of wavelet frames with Fourier decay and vanishing moments properties. Furthermore, since similar Fourier decay estimates and vanishing moments properties also exist for multiscale systems such as curvelets and shearlets, it would be possible to derive similar estimates in the case of other multiscale systems. Secondly, the key difference between Theorem \ref{thm:main_bc} and Theorem \ref{thm:main} is the localized sparsity $\kappa(\mathbf{s},\mathbf{N})$ and the localized level sparsities with respect to $\rD$.  As mentioned, these terms are equal to the sparsity and level sparsity terms when $DD^*$ is the identity, furthermore, it is known that multiscale systems such as wavelet frames, shearlets and curvelets are \textit{intrinsically localized} with near-diagonal Gram matrices. It is therefore conceivable that this property can be exploited to show that localized sparsity $\kappa$ is close to the true sparsity $s$. This idea is further discussed in \S \ref{sec:loc}.

\subsubsection{On $B(\mathbf{s},\mathbf{N})$}\label{sec:BsN}
In the case where $D$ is associated with an orthonormal basis, $B(\mathbf{s},\mathbf{N})=1$. Further analysis of this quantity will be left as future work, however, we simply remark here that initial computations of this quantity suggest that the impact of $B(\mathbf{s},\mathbf{N})$ will not be significant: To test the behaviour of $B(\mathbf{s},\mathbf{N})$, consider the following experiment where we test the behaviour of this quantity when considering the support of piecewise constant vectors, under the redundant Haar transform $D$. Given $p\in\bbN$, let $N=2^p$,  $D$ be the discrete Haar wavelet frame transform and compute
\be{\label{eq:E}
E(p) = \max\br{\tilde B(\Delta): \Delta = \mathrm{supp}(Dx), x\in \Lambda_N}
} where  $\tilde B$ is as defined in (\ref{eq:B_Delta}),
and $\Lambda_N$ is a collection of 1000 randomly generated piecewise constant vectors, each of length $N$.
A plot of $E$ for $p=4,\ldots, 10$ is shown in Figure \ref{fig:Eminmax}.

\begin{figure}
\begin{center}
\includegraphics[width=0.8\textwidth]{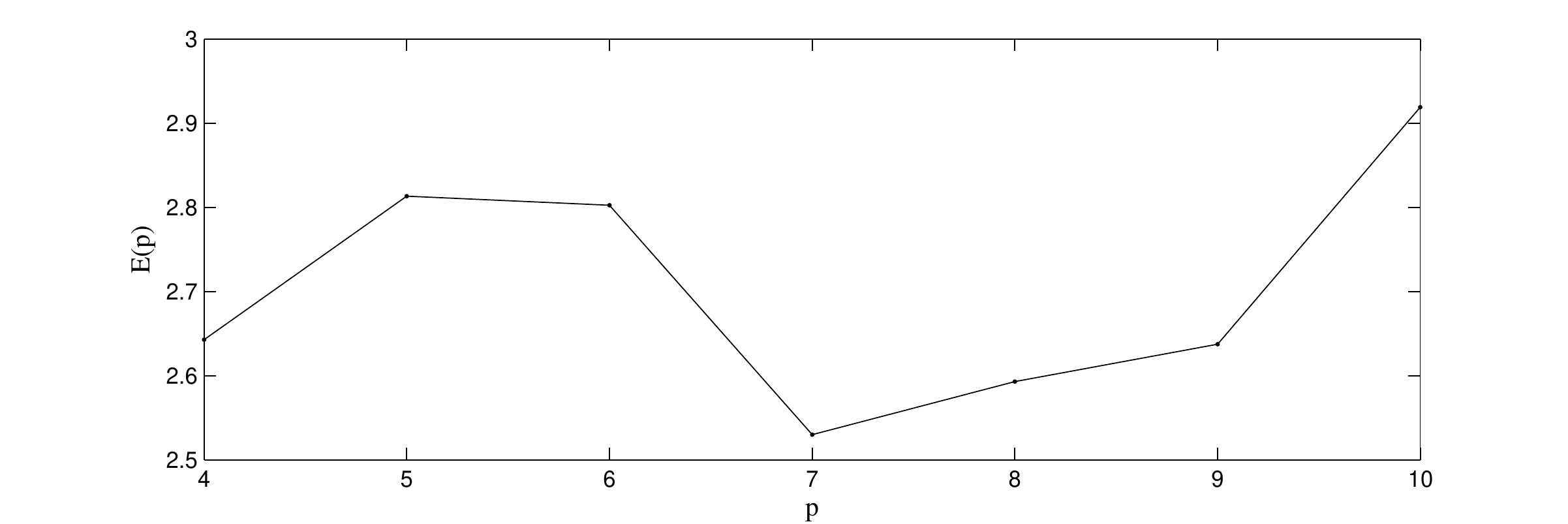}
\end{center}
\caption{Plot of $E$ from (\ref{eq:E}). \label{fig:Eminmax}}
\end{figure}

\section{Localized sparsity}\label{sec:loc_sp}

In this section, we present some basic properties of the localized sparsity defined in Definition \ref{def:analysis_sparsity}. 
The key findings which would be of use in the proof of Theorem \ref{thm:main} are summarized in Corollary \ref{cor:loc_sp}.

We first present Lemma \ref{lem:rel_sp} to show that provided that $VD^*$ satisfies some ``block diagonal" structure (so that $\omega(j,k)$ in Lemma \ref{lem:rel_sp} decays sufficiently as $\abs{j-k}$ increases), each relative sparsity term $\hat \kappa_j$ can be controlled in terms of $\br{\kappa_l}_{l=1}^r$ and the dependence on each $\kappa_l$ decays as $\abs{j-l}$ increases. So  Lemma \ref{lem:rel_sp} can then be applied to derive Corollary \ref{cor:rel_sp}, which shows that under an additional assumption on the structure of $VD^*$, the signal dependencies of Theorem \ref{thm:main} arise only in the localized level sparsities $\boldsymbol{\kappa}=\br{\kappa_j}_j$. 
Note that this block diagonal property can be shown to exist when $V$ is a Fourier sampling transform and $D$ is a wavelet transform \cite{adcockbreaking}.

\begin{lemma}\label{lem:rel_sp}
Let $V$ and $D$ be isometries. Let $\br{N_j}_{j=1}^r$,$\br{M_j}_{j=1}^r\in\bbN^r$ with $\Gamma_j = (M_{j-1},\ldots, M_j]\cap \bbN$ and $\Lambda_j =  (N_{j-1},\ldots, N_j]\cap \bbN$ where $M_0=N_0=0$. Suppose that
$ \nm{P_{\Gamma_k}VD^*P_{\Lambda_l}}\leq \omega(k,l)$ and
$$
\sum_{l=1}^r \omega(k,l)\leq C, \qquad k=1,\ldots, r.
$$ Then 
$$
\hat \kappa_j \leq C \sum_{l=1}^r \omega(j,l) \kappa_l.
$$
\end{lemma}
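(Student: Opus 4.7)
The plan is to use the isometry property $D^*D = I$ to write $g = D^*(Dg)$ for any $g\in\Theta$, then decompose $Dg$ into the level pieces $P_{\Lambda_l}Dg$ and invoke the triangle inequality together with the operator-norm hypothesis on $P_{\Gamma_j}VD^*P_{\Lambda_l}$. Concretely, since $\Lambda_1,\ldots,\Lambda_r$ partition $\bbN$, we have $\sum_{l=1}^r P_{\Lambda_l}=I$, so for $g=D^*\eta\in\Theta$,
\begin{equation*}
P_{\Gamma_j}Vg \;=\; P_{\Gamma_j}VD^*Dg \;=\; \sum_{l=1}^r P_{\Gamma_j}VD^*P_{\Lambda_l}(Dg).
\end{equation*}
Taking norms and applying the hypothesis $\nm{P_{\Gamma_j}VD^*P_{\Lambda_l}}\leq \omega(j,l)$ together with the defining constraint $\nm{P_{\Lambda_l}Dg}_{\ell^2}^2\leq \kappa_l$ yields
\begin{equation*}
\nm{P_{\Gamma_j}Vg}_{\ell^2} \;\leq\; \sum_{l=1}^r \omega(j,l)\,\sqrt{\kappa_l}.
\end{equation*}

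The second step is a Cauchy--Schwarz trick to reshape this into the stated quadratic form: writing $\omega(j,l)\sqrt{\kappa_l}=\sqrt{\omega(j,l)}\cdot\sqrt{\omega(j,l)\kappa_l}$ gives
\begin{equation*}
\left(\sum_{l=1}^r \omega(j,l)\sqrt{\kappa_l}\right)^{\!2} \;\leq\; \Big(\sum_{l=1}^r \omega(j,l)\Big)\Big(\sum_{l=1}^r \omega(j,l)\kappa_l\Big) \;\leq\; C\sum_{l=1}^r \omega(j,l)\kappa_l,
\end{equation*}
where the last bound uses the row-sum hypothesis $\sum_l\omega(j,l)\leq C$. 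Taking the supremum over $g\in\Theta$ gives the claimed bound on $\hat\kappa_j$.

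There is not really a hard part here; the only point requiring a moment of care is the use of $D^*D=I$ (valid because $D$ is an isometry) so that any $g$ in the range of $D^*$ satisfies $g=D^*Dg$, which lets the decomposition $\sum_l P_{\Lambda_l}=I$ be inserted between $D^*$ and $Dg$. Everything else reduces to submultiplicativity, the triangle inequality, and Cauchy--Schwarz.
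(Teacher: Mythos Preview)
Your argument is correct and essentially identical to the paper's own proof: both use $D^*D=I$ to insert the decomposition $\sum_l P_{\Lambda_l}$, bound each piece by $\omega(j,l)\sqrt{\kappa_l}$ via the operator-norm hypothesis and the constraint on $\Theta$, and then apply Cauchy--Schwarz together with the row-sum bound $\sum_l\omega(j,l)\leq C$ to square up. One minor remark: the identity $g=D^*Dg$ holds for \emph{every} $g\in\cH$ (since $D^*D=I$), not just for $g$ in the range of $D^*$, so the qualification in your last paragraph is unnecessary.
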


\begin{proof}
Since $D^*D = I$,
\eas{
\hat\kappa_j & = \max_{g\in\Theta} \nm{P_{\Gamma_j}VD^*D g}^2 \leq  \max_{g\in\Theta}\left(\sum_{l=1}^r \nm{P_{\Gamma_j}VD^*P_{\Lambda_l}D g}\right)^2\\
&\leq \max_{g\in\Theta}\left(\sum_{l=1}^r \nm{P_{\Gamma_j}VD^*P_{\Lambda_l}}\nm{D g}\right)^2
\leq \left(\sum_{l=1}^r \omega(j,l) \sqrt{\kappa_l} \right)^2\\
&\leq \left(\sum_{l=1}^r\omega(j,l)\right) \left(\sum_{l=1}^r \omega(j,l) \kappa_l \right)
\leq C \sum_{l=1}^r \omega(j,l) \kappa_l.
}
 
\end{proof}
\begin{proof}[Proof of Corollary \ref{cor:rel_sp}]
Since $\mu^2_{\mathbf{N},\mathbf{M}}(k,l) \leq \omega(k,l) \min\br{1/N_{l-1}, 1/M_{k-1}}$, condition (ii) of Theorem \ref{thm:main} is implied by
\be{\label{cor_pf:eq2}
m_k\gtrsim \cL \cdot \sqrt{\tilde B} \cdot \frac{M_k-M_{k-1}}{M_{k-1}} \cdot \left(\sum_{l=1}^r \omega(k,l)\cdot \kappa_l \right), \qquad k=1,\ldots, r,
}
and
\be{\label{cor_pf:eq1}
1\gtrsim \cL \cdot\tilde B\cdot  \sum_{k=1}^r  \frac{M_k-M_{k-1}}{m_k} \cdot \frac{\omega(k,l)}{M_{k-1}} \cdot \hat \kappa_k, \qquad l=1,\ldots, r,
}
where $\tilde B =r B(\mathbf{s},\mathbf{N})^2$ and $\cL = \log(\epsilon^{-1}) \log(q^{-1} \tilde M \sqrt{\kappa_{\max}}) $.

Since $\sum_{k=1}^r \omega(k,l) \leq C$ for each $l=1,\ldots, r$, (\ref{cor_pf:eq1}) is true provided that
$$
m_k \gtrsim \frac{C \tilde B \cL (M_k-M_{k-1})}{M_{k-1}} \hat \kappa_k, \qquad k=1,\ldots, r.
$$
By Lemma \ref{lem:rel_sp}, this is true provided that
\be{\label{cor_pf:eq3}
m_k \gtrsim \frac{C^2 \tilde B \cL (M_k-M_{k-1})}{M_{k-1}}  \kappa_k, \qquad k=1,\ldots, r.
}
 Note that (\ref{cor_pf:eq3}) also implies (\ref{cor_pf:eq2}).

Finally, the last statement of Lemma \ref{lem:rel_sp} follows by summing up the $m_k$'s and using $\sum_{k}\omega(k,l) \leq C$ for each $l=1,\ldots,r$.
\end{proof}

\begin{lemma}\label{lem:loc_sp_eq_sp}
Let $p\leq 1$.
Suppose that $x$ has at most $s$ non-zero entries and $\nm{x}_{\ell^q} = 1$ for some $q\geq 1$. Then, 
$\nm{x}^p_{\ell^p} \leq s^{1-p/q}$.
\end{lemma}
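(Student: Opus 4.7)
The plan is to apply Hölder's inequality on the support of $x$, splitting into the two cases $q<\infty$ and $q=\infty$ (the latter being the degenerate case covered by the convention $p/\infty=0$).

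First I would let $\Delta = \mathrm{supp}(x)$, so $|\Delta| \leq s$, and write
$$
\nm{x}_{\ell^p}^p = \sum_{j\in\Delta}\abs{x_j}^p = \sum_{j\in\Delta} \abs{x_j}^p \cdot 1.
$$
For the case $q\in [1,\infty)$, since $p\leq 1 \leq q$ we have the conjugate pair of exponents $q/p$ and $q/(q-p)$ (with the latter interpreted as $\infty$ if $q=p$). Hölder's inequality then gives
$$
\sum_{j\in\Delta}\abs{x_j}^p \cdot 1 \leq \left(\sum_{j\in\Delta} \abs{x_j}^q\right)^{p/q} \left(\sum_{j\in\Delta} 1\right)^{1-p/q} \leq \nm{x}_{\ell^q}^p \, \abs{\Delta}^{1-p/q} = s^{1-p/q},
$$
using $\nm{x}_{\ell^q}=1$ and $|\Delta|\leq s$ in the last step.

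For the case $q=\infty$, the assumption $\nm{x}_{\ell^\infty}=1$ gives $\abs{x_j}\leq 1$ for all $j$, hence $\abs{x_j}^p \leq 1$ (since $p>0$), so
$$
\nm{x}_{\ell^p}^p = \sum_{j\in\Delta} \abs{x_j}^p \leq \abs{\Delta} \leq s = s^{1-p/\infty},
$$
under the stated convention $p/\infty = 0$.

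There is essentially no obstacle here; the only minor point to watch is the boundary behaviour when $q=p$ (Hölder reduces to the trivial bound $\sum \abs{x_j}^q \leq \nm{x}_{\ell^q}^q = 1 = s^{0}$) and the $q=\infty$ case, both handled above. Combining the two cases completes the proof.
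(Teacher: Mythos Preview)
Your proof is correct and follows essentially the same approach as the paper: apply H\"older's inequality on the support of $x$ with exponents $q/p$ and its conjugate. The only difference is that you explicitly separate out the $q=\infty$ case, which the paper leaves implicit.
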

\prf{
Let $\Delta$ denote the support of $x$. By H\"{o}lder's inequality,
$$
\sum_{j\in\Delta}\abs{x_j}^{p} \leq \left(\sum_{j\in\Delta}\abs{x_j}^{q}\right)^{p/q} \left(\abs{\Delta}\right)^{1-p/q} = s^{1-p/q}.
$$
}

\begin{lemma}\label{lem:loc_sp_2}
Let $p\in (0,1]$ and let $q\in [1,\infty]$.
Suppose $\kappa>0$ is such that $\nm{x}_{\ell^q}\leq 1$ implies that  $\nm{x}_{\ell^p}^p \leq \kappa^{1-p/q}$.
\begin{itemize}
\item[(i)] $\nm{x}_{\ell^1}\leq \kappa\nm{x}_{\ell^\infty}$.
\item[(ii)] $\nm{x}_{\ell^2}^2\leq \kappa\nm{x}_{\ell^\infty}^2$.
\item[(iii)]  If $p=2^{-L}$ for some $L\in\bbN\cup\br{0}$, then $\nm{x}_{\ell^1}^2 \leq \kappa\nm{x}_{\ell^2}^2$.
\end{itemize}

\end{lemma}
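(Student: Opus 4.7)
The strategy is to handle the three parts separately: (i) and (ii) are immediate consequences of the hypothesis with $q=\infty$, while (iii) is obtained from the hypothesis with $q=2$ combined with a standard $\ell^p$-interpolation inequality.

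For (i), since $p\leq 1$ and $\abs{x_j}\leq \nm{x}_{\ell^\infty}$, one has the pointwise bound $\abs{x_j}=\abs{x_j}^p\abs{x_j}^{1-p}\leq \abs{x_j}^p\nm{x}_{\ell^\infty}^{1-p}$. Summing over $j$ yields $\nm{x}_{\ell^1}\leq \nm{x}_{\ell^\infty}^{1-p}\nm{x}_{\ell^p}^p$, and the hypothesis applied with $q=\infty$ (in its homogenized form $\nm{x}_{\ell^p}^p\leq \kappa\nm{x}_{\ell^\infty}^p$) collapses the $\nm{x}_{\ell^\infty}$ factors to yield $\nm{x}_{\ell^1}\leq \kappa\nm{x}_{\ell^\infty}$. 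Part (ii) is entirely analogous after replacing the factorization by $\abs{x_j}^2=\abs{x_j}^p\abs{x_j}^{2-p}\leq \abs{x_j}^p\nm{x}_{\ell^\infty}^{2-p}$, which is valid since $p\leq 2$.

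For (iii), I would apply the $\ell^p$-interpolation inequality (a consequence of H\"older's inequality on the counting measure, valid for $0<r_0\leq r\leq r_1\leq\infty$): specializing to $r_0=p$, $r=1$, $r_1=2$ gives
\bes{
\nm{x}_{\ell^1}\leq \nm{x}_{\ell^p}^{\lambda}\,\nm{x}_{\ell^2}^{1-\lambda},\qquad \lambda=\frac{p}{2-p},
}
with associated H\"older exponents $a=2-p$ and $b=(2-p)/(1-p)$, both of which are $\geq 1$ precisely because $0<p\leq 1$. Substituting the hypothesis with $q=2$, namely $\nm{x}_{\ell^p}\leq \kappa^{(1-p/2)/p}\nm{x}_{\ell^2}$, into the $\nm{x}_{\ell^p}$-factor makes the exponent of $\kappa$ reduce to $\lambda(1-p/2)/p=(1-p/2)/(2-p)=1/2$, while the powers of $\nm{x}_{\ell^2}$ combine to $\lambda+(1-\lambda)=1$. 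Squaring delivers $\nm{x}_{\ell^1}^2\leq \kappa\nm{x}_{\ell^2}^2$.

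I do not expect any serious obstacle here; the one small subtlety is that $\ell^p$ is only a quasi-norm when $p<1$, so one must verify that the H\"older-based interpolation really is valid at $r_0=p\leq 1$. This is straightforward because the underlying identity is pointwise and the H\"older exponents $a,b$ above are genuinely $\geq 1$. I also note that the restriction $p=2^{-L}$ plays no essential role in this one-shot argument and is presumably a vestige of an alternative inductive (doubling in $p$) proof; the conclusion in fact holds for every $p\in(0,1]$.
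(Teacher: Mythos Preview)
Your proof is correct. Parts (i) and (ii) coincide with the paper's argument (the paper simply normalises $\nm{x}_{\ell^\infty}=1$ first, which is your ``homogenized form''). For (iii), however, you take a genuinely different route: the paper iterates Cauchy--Schwarz, writing $(\sum|x_j|)^2\le(\sum|x_j|^p)(\sum|x_j|^{2-p})$ and then repeatedly splitting the second factor as $\sum|x_j|^{2-2^kp}\le\nm{x}_{\ell^2}(\sum|x_j|^{2-2^{k+1}p})^{1/2}$ until $2^Lp=1$, which is precisely why the dyadic hypothesis $p=2^{-L}$ is imposed. Your single H\"older step with exponents $a=2-p$ and $b=(2-p)/(1-p)$ accomplishes the same interpolation in one shot and, as you correctly observe, removes the need for $p$ to be a negative power of $2$; the conclusion of (iii) in fact holds for every $p\in(0,1]$. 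So your argument is both shorter and strictly more general than the paper's for this part.
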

\prf{
Without loss of generality, first assume that $\nm{x}_{\ell^\infty}=1$. Then, $\nm{x}_{\ell^p}^p\leq \kappa$. So, (i) follows because
$$
\sum_{j}\abs{x_j} = \sum_{j}\abs{x_j}^p \abs{x_j}^{1-p} \leq \sum_{j}\abs{x_j}^p \leq \kappa,
$$
and (ii) follows because 
$$
\sum_{j}\abs{x_j}^2 = \sum_{j}\abs{x_j}^p \abs{x_j}^{2-p} \leq \sum_{j}\abs{x_j}^p \leq \kappa.
$$
To show (iii), assume (without loss of generality) that $\nm{x}_{\ell^2}=1$ and recall that $p=2^{-L}$. Then, by repeatedly applying the Cauchy-Schwarz inequality,
\eas{
&\nm{x}_{\ell^1}^2  = \left(\sum_j \abs{x_j}\right)^2 \leq \sum_j \abs{x_j}^p \sum_j \abs{x_j}^{2(1-p/2)}
 \leq \kappa^{1-p/2} \nm{x}_{\ell^2} \left(\sum_j \abs{x_j}^{2-2p}\right)^{1/2}\\
&\leq \kappa^{1-p/2} \nm{x}_{\ell^2}^{1+1/2}  \left(\sum_j \abs{x_j}^{2-4p}\right)^{1/4}  
\leq \kappa^{1-p/2} \nm{x}_{\ell^2}^{1+1/2}  \left(\sum_j \abs{x_j}^{2-2^{L}p}\right)^{1/2^L}
= \kappa^{1-p/2} \nm{x}_{\ell^1}^{p},
}
and by dividing both sides of the inequality by $\nm{x}_{\ell^1}^{p}$, we obtain
$$
\nm{x}_{\ell^1}^{2(1-p/2)} \leq \kappa^{1-p/2} \implies \nm{x}_{\ell^1}^2\leq \kappa. 
$$
}

\begin{corollary}\label{cor:loc_sp}
In the notation of Definition \ref{def:analysis_sparsity}, a direct application of the two lemmas presented above (with $x:=P_{\Lambda_l}DD^*y$, $l=1,\ldots, r$) would yield the following results:
\begin{enumerate}
\item 
 Lemma \ref{lem:loc_sp_eq_sp}: if $D$ was the analysis operator of an orthonormal basis (so that $DD^*=I$) in Definition \ref{def:analysis_sparsity}, then $\kappa(\mathbf{s},\mathbf{N}) = s_1+\cdots + s_r$.
\item Lemma \ref{lem:loc_sp_2} (i) : If $y\in\Sigma_{\mathbf{s},\mathbf{N}}$ and $\nm{P_{\Lambda_j} DD^* y}_{\ell^\infty}=1$, then $$  \nm{P_{\Lambda_j} D D^* y}_{\ell^1}\leq \kappa_j(p,\mathbf{N},\mathbf{s}), \quad j=1,\ldots, r.$$
\item Lemma \ref{lem:loc_sp_2} (ii) : If $y\in\Sigma_{\mathbf{s},\mathbf{N}}$ and $\nm{P_{\Lambda_j} DD^* y}_{\ell^\infty}=1$, then $$  \nm{P_{\Lambda_j} D D^* y}_{\ell^2}^2\leq \kappa_j(\mathbf{N},\mathbf{s},p), \quad j=1,\ldots, r.
$$

\item Lemma \ref{lem:loc_sp_2} (ii) : If $y\in\Sigma_{\mathbf{s},\mathbf{N}}$ and $\nm{P_{\Lambda_j} DD^* y}_{\ell^2}=1$, then $$ \nm{P_{\Lambda_j} D D^*y}_{\ell^1}^2\leq \kappa_j(\mathbf{N},\mathbf{s},p), \quad j=1,\ldots, r.
$$
\end{enumerate}
\end{corollary}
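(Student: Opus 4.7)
The plan is to view each of the four claims as an unpacking of one of the two preceding lemmas, once the vector $x$ in the lemma is identified with $P_{\Lambda_j} DD^* y$ for $y \in \Sigma_{\mathbf{s},\mathbf{N}}$. The only substantive step is checking that the hypothesis ``$\nm{x}_{\ell^q}\leq 1 \Rightarrow \nm{x}_{\ell^p}^p \leq \kappa^{1-p/q}$'' of Lemma \ref{lem:loc_sp_2} is furnished for free by Definition \ref{def:analysis_sparsity}. Indeed, writing $g = D^* y$ so that $P_{\Lambda_j} D g = P_{\Lambda_j} DD^* y$, the definition of the $j^{\rth}$ localized level sparsity $\kappa_j(\mathbf{N},\mathbf{s},p)$ is precisely the statement that this implication holds (for $q \in \{2,\infty\}$) with $\kappa = \kappa_j$.

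For item 1, I would specialize Definition \ref{def:analysis_sparsity} to the case $DD^* = I$. Then any $y \in \Sigma_{\mathbf{s},\mathbf{N}}$ satisfies $|\mathrm{supp}(y)| \leq s := s_1 + \cdots + s_r$, so Lemma \ref{lem:loc_sp_eq_sp} applied to $Dg = y$ gives $\nm{Dg}_{\ell^p}^p \leq s^{1-p/q}$ whenever $\nm{Dg}_{\ell^q} = 1$, yielding $\kappa(\mathbf{s},\mathbf{N}) \leq s$. The converse inequality follows by choosing a sparsity pattern that saturates each $s_l$, so the two sides agree. The remaining items 2--4 are then straightforward: with $x := P_{\Lambda_j}DD^*y$ and $\kappa := \kappa_j(\mathbf{N},\mathbf{s},p)$, invoking Lemma \ref{lem:loc_sp_2}(i) and (ii) with the choice $q = \infty$ delivers items 2 and 3 after normalizing $\nm{x}_{\ell^\infty} = 1$, while Lemma \ref{lem:loc_sp_2}(iii) with $q = 2$ delivers item 4 after normalizing $\nm{x}_{\ell^2} = 1$.

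Since this is essentially bookkeeping, no real obstacle arises. The one point worth a sentence is that Lemma \ref{lem:loc_sp_2}(iii) requires $p = 2^{-L}$ for some $L \in \bbN\cup\{0\}$, which is exactly the standing assumption on $p$ in Definition \ref{def:analysis_sparsity}, so item 4 is licensed without any extra hypothesis. All normalizations are harmless because the estimates in Lemma \ref{lem:loc_sp_2} are homogeneous in $x$, so the hypothesis $\nm{P_{\Lambda_j}DD^*y}_{\ell^\infty} = 1$ (or $\nm{P_{\Lambda_j}DD^*y}_{\ell^2} = 1$) can be assumed without loss of generality.
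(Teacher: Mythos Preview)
Your proposal is correct and follows exactly the approach the paper intends: the corollary is stated in the paper as a ``direct application'' of the two lemmas with no separate proof, and your unpacking---identifying $x = P_{\Lambda_j}DD^*y$, verifying that Definition~\ref{def:analysis_sparsity} supplies the hypothesis of Lemma~\ref{lem:loc_sp_2} for both $q=2$ and $q=\infty$, and invoking homogeneity for the normalizations---is precisely that application. Your observation that item~4 actually calls on part~(iii) of Lemma~\ref{lem:loc_sp_2} (not (ii) as labeled in the corollary) and your added remark that the equality in item~1 requires the easy converse via a saturating sparsity pattern are both correct refinements.
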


\subsubsection*{Numerical example: the Haar frame}
 In the case where $D$ is associated with a Haar frame on $\bbC^N$, it can be shown that if $\abs{\mathrm{supp}(x)}=s$ then $DD^* x$ has at most $\ord{s\log N}$ nonzero entries (see \cite{KrahmerNW15}). Therefore, from Lemma \ref{lem:loc_sp_eq_sp}, $\kappa(\mathbf{N},\mathbf{s}) \lesssim s\log(N)$ where $\mathbf{s} = (s_j)_{j=1}^r$ and  $s=s_1+\ldots+s_r$. In the case of a Haar frame, experimental results suggest that the localized level sparsities  $\br{\kappa_j}_j$ tend to follow a similar pattern to the level sparsities $\br{s_j}$:  Let $D$ be the discrete Haar frame, and let $f\in\bbR^{1024}$ be as shown on the right of Figure \ref{fig:approx_kappa}. Let $\Delta$ be the support of $D f$. Let $\cS$ consist of 1000 randomly generated vectors, each supported on $\Delta$. For each $j=0,\ldots, 10$, let $\Lambda_j$ index all Haar framelet coefficients in the $j^{\rth}$ scale and let \be{\label{eq:approx_kappa}
\tilde \kappa_j = \sup \br{  \nm{P_{\Lambda_j} \eta_\infty}_{\ell^1}, \nm{P_{\Lambda_j} \eta_2}_{\ell^1}^2:   \eta_\infty = DD^* x/\nm{DD^* x}_{\ell^\infty}, \eta_2 = DD^* x/\nm{DD^* x}_{\ell^\infty}, x\in\cS}.
}
We also let $s_j = \abs{\Delta\cap \Lambda_j}$. The bar charts in Figure \ref{fig:approx_kappa} show for each $j=0,\ldots, 10$, $\br{s_j/\abs{\Lambda_j}}$ (centre plot) and $\br{\tilde \kappa_j/\abs{\Lambda_j}}$ (left plot).
Note that $\br{\tilde \kappa_j}$ merely approximate the localized level sparsities $\br{\kappa_j}$, because otherwise, we would need to consider all $(\mathbf{s},\mathbf{N})$-sparse support sets instead of just one support set $\Delta$ and we would also need to maximize over all vectors supported on $\Delta$, instead of just 1000 randomly generated vectors. Nonetheless, Figure \ref{fig:approx_kappa} provides some indication of the behaviour of the localized level sparsities.
\begin{figure}
\begin{center}
\begin{tabular}{@{\hspace{0pt}}c@{\hspace{0pt}}c@{\hspace{0pt}}c@{\hspace{0pt}}}
$f$ & Sparsity in levels& Localized sparsity in levels \\
\includegraphics[width = 0.33\textwidth]{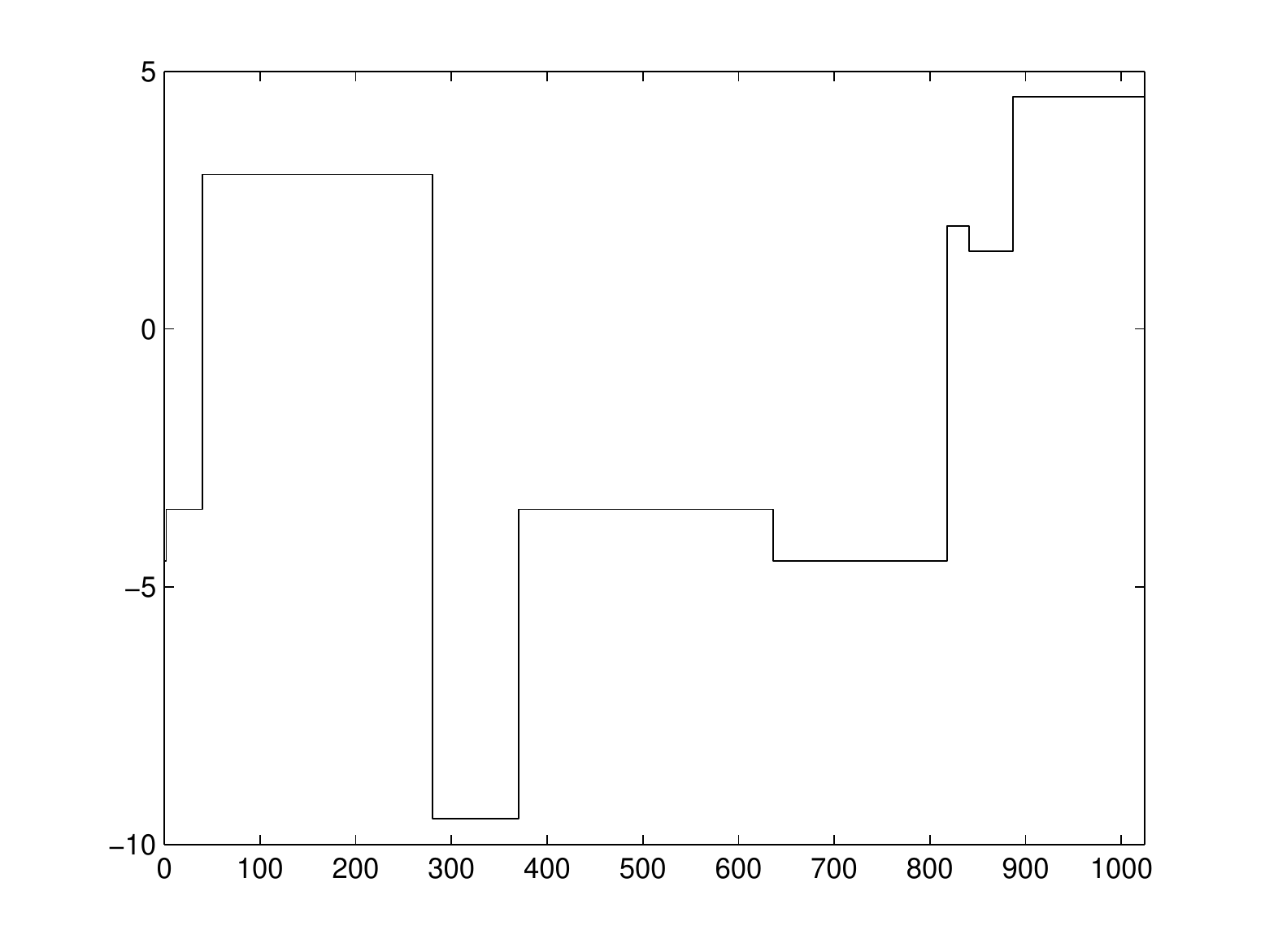}
& \includegraphics[width = 0.33\textwidth]{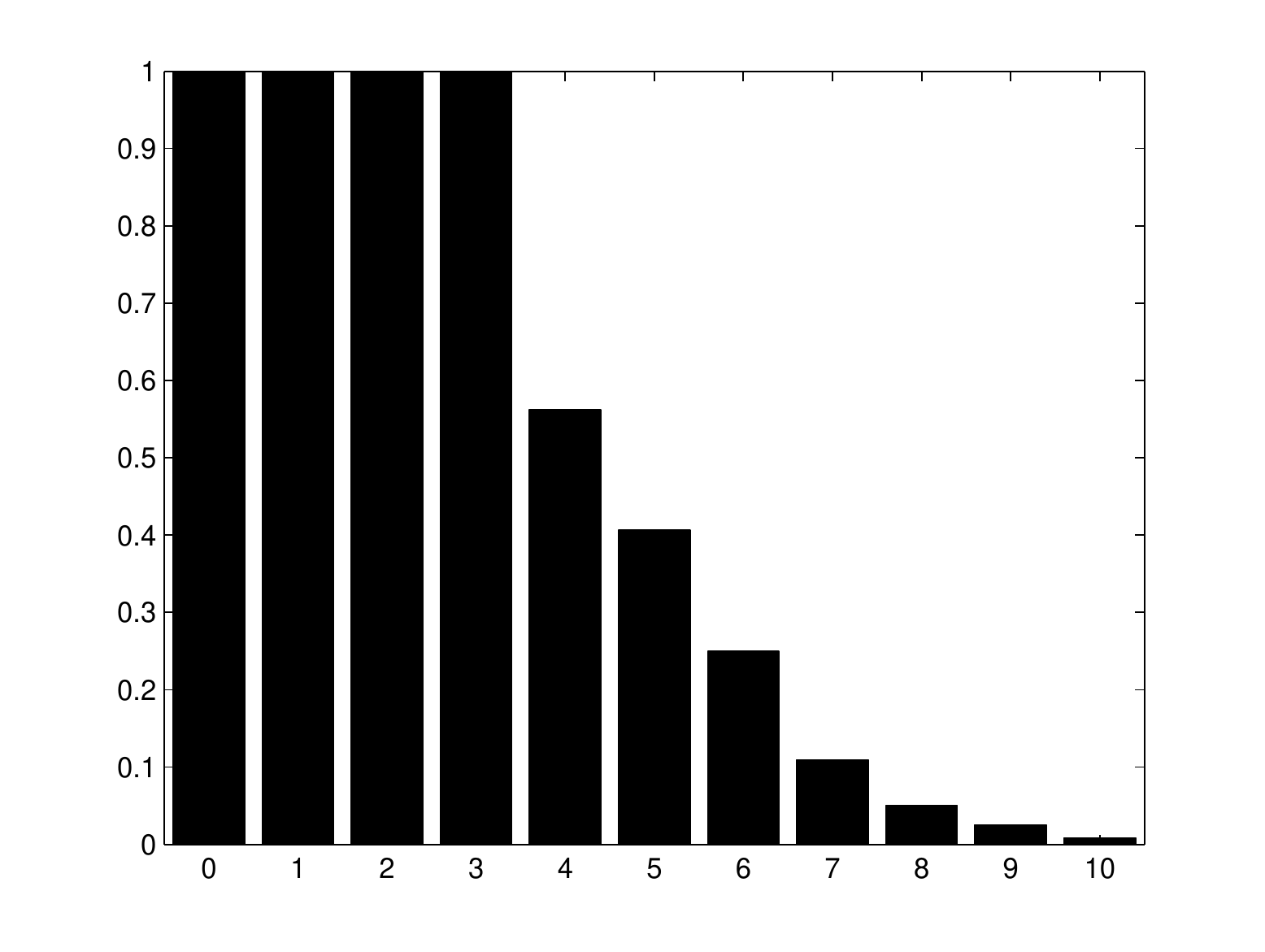}  &
\includegraphics[width = 0.33\textwidth]{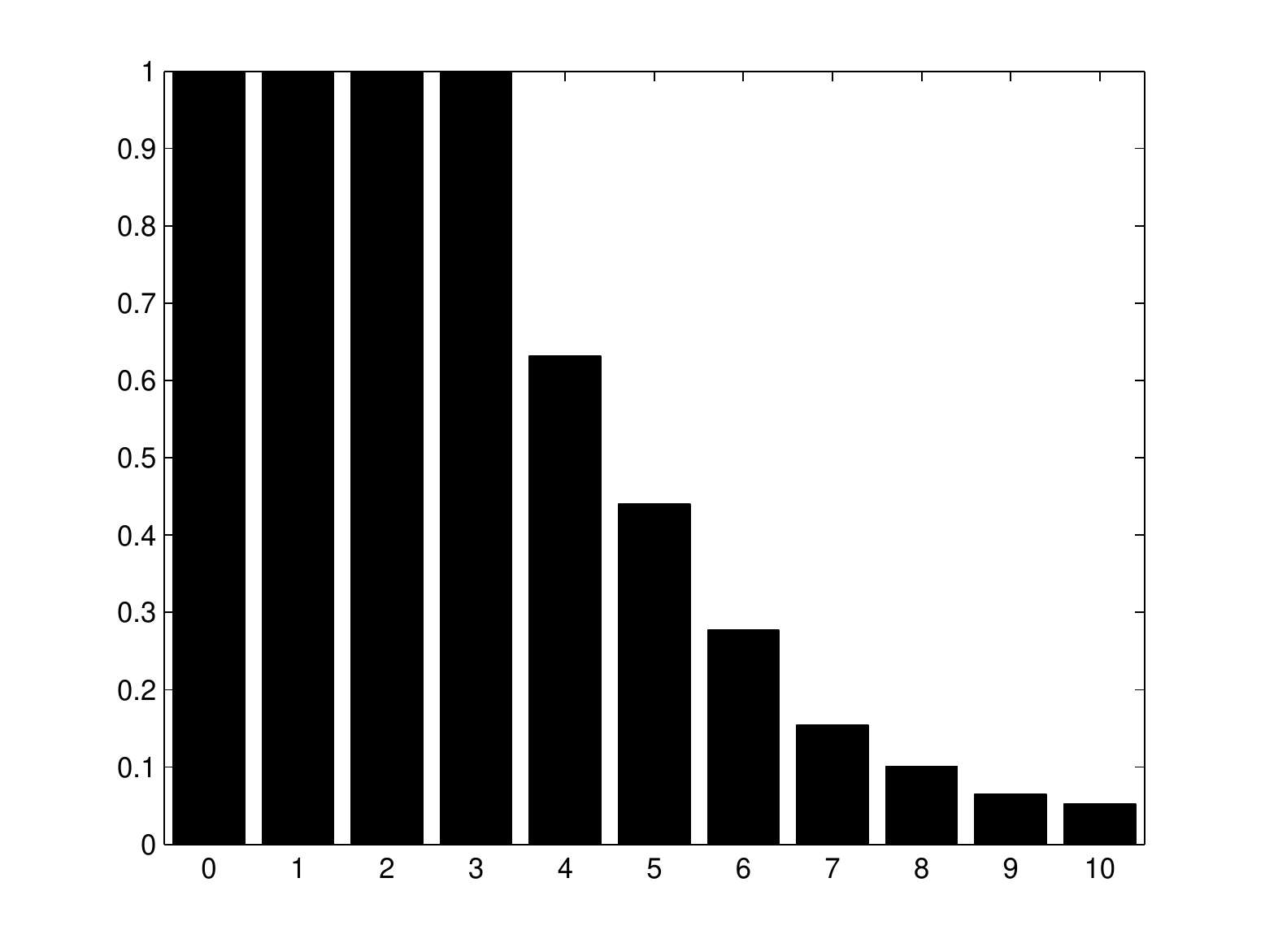} 
\end{tabular}
\end{center}
\caption{Left: the original vector. Centre: the level sparsities of $Df$ at each scale. Right: the approximate localized level sparsities at each scale, as defined in (\ref{eq:approx_kappa}). \label{fig:approx_kappa}
}
\end{figure}

\subsection{Intrinsic localization}\label{sec:loc}

As mentioned previously, many of the popular frames such as curvelets, shearlets and wavelet frames are intrinsically localized so that their Gram matrices are near diagonal. This property has been studied for wavelet frames in  \cite{jaffard1990proprietes,cordero2004localization,fornasier2005intrinsic} and more recently for anisotropic systems such as shearlets and curvelets in \cite{grohs2013intrinsic,grohs2014intrinsic}.
In this section, we will show how the property of intrinsic localization can yield estimates on the localized sparsity term, $\kappa$,  and the localized level sparsity terms, $\kappa_j$'s. We first recall the notion of intrinsic localization \cite{grochenig2004localization,grochenig2003localized}.

\begin{definition}
Let $\cH$ be a Hilbert space and let $\Psi =\br{\psi_j}_{j\in\bbN}$ be a frame for $\cH$. $\Psi$ is said to be intrinsically localized with respect to $c>0$ and $L \geq 1$ if
$$
\abs{\ip{\psi_j}{\psi_k}} \leq \frac{c}{(1+\abs{j-k})^L}, \qquad \forall j,k\in\bbN.
$$
Given $\Delta, \Lambda \subseteq \bbN$ and $p\in(L^{-1},1]$, let
$$
I_p(\Delta, \Lambda) = \max_{j\in\Delta} \sum_{k\in \Lambda}\abs{ \ip{\psi_k}{\psi_j} }^p .
$$
\end{definition}
\begin{remark}
Under this definition, wavelet frames have been shown to be intrinsically localized \cite{cordero2004localization} with the parameter $L$ being dependent on the regularity of the generating wavelets. For the anisotropic systems studied in \cite{grohs2013intrinsic} and \cite{grohs2014intrinsic},  
the definition of intrinsic localization used is more complex than the definition presented above. However, the key idea of how to exploit this property to obtain bounds on the localized sparsity values should still be applicable.
\end{remark}

\begin{remark}\label{rem:I_p}
Given any $\Delta, \Lambda \subseteq \bbN$ and $p\in(L^{-1},1]$, note that $Lp>1$ and
$$
I_p(\Delta, \Lambda) \leq \max_{j\in\Delta} \sum_{k\in\Lambda}\frac{c}{(1+\abs{j-k})^{Lp}} \leq 1+ \int_{1}^\infty \frac{c}{\abs{x}^{Lp}} \mathrm{d}x \leq 1+ \frac{c}{Lp-1}.
$$
So
$I_p(\Delta,\Lambda)$ is finite.
Moreover, if we let
$$d=\mathrm{dist}(\Delta, \Lambda):= \min_{k\in\Delta, j\in\Lambda}\abs{k-j}\geq 1,$$ then
$$
I_p(\Delta, \Lambda)  
\leq \int_d^\infty \frac{c}{\abs{x}^{Lp}} \mathrm{d}x = \frac{c}{(Lp-1)\cdot d^{Lp-1}}
$$
\end{remark}

The main result of this section is as follows. 
\begin{proposition}
Let $\Psi$ be a Parseval frame which is intrinsically localized with respect to $c=1$ (to simplify the amount of notation only)  and $L\geq 1$ and let $D$ be the associated analysis operator. Given any $\mathbf{N}=(N_k)_{k=1}^r\in\bbN^r$ and $\mathbf{s}=(s_k)_{k=1}^r\in\bbN^r$, let $s=s_1+\cdots+s_r$ and $N_0=0$,
$$
B = \sup\br{\nm{(P_\Delta D)^\dagger}: \Delta\subset [N_r], \abs{\Delta\cap (N_k-N_{k-1}]}=s_k}<\infty, $$
and
$$B' =  \sup\br{\nm{(P_\Delta DD^*P_\Delta)^\dagger}_{\ell^\infty \to \ell^\infty}: \Delta\subset [N_r], \abs{\Delta\cap (N_k-N_{k-1}]}=s_k}<\infty.
$$
Let
$
d_{j,k} = \mathrm{dist}(\Lambda_j, \Delta_k),
$
and recall the definition of localized sparsity and localized level sparsities from Definition \ref{def:analysis_sparsity}. Let $p\in (L^{-1},1]$.
Then,
\begin{itemize}
\item[(i)] $$\kappa(\mathbf{N}, \mathbf{s},p) \leq s\cdot  \max\br{ \left( B^{p }(1+1/(Lp-1))\right)^{1/(1-p/2)},B^{p }(1+1/(Lp-1)) } $$
\item[(ii)] For $j=1,\ldots, r$, $$\kappa_j(\mathbf{N}, \mathbf{s},p) \leq s_j \cdot \max\br{
 B^{p/(1-p/2)} \left(\sum_{k=1}^r   \frac{(s_k/s_j)^{1-p/2}}{d_{j,k}^{ Lp-1}}\right)^{1/(1-p/2)},\quad 
 \tilde B^p \sum_{k=1}^r   \frac{s_k/s_j}{d_{j,k}^{ Lp-1}}}.$$

\end{itemize}

\end{proposition}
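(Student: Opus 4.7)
The plan is to bound $\nm{P_\Lambda DD^*x}_{\ell^p}^p$ with $\Lambda=\bbN$ for (i) and $\Lambda=\Lambda_j$ for (ii), for $x\in\Sigma_{\mathbf{s},\mathbf{N}}$. Since $\Psi$ is Parseval, $D^*D=I_\cH$, so $Dg=DD^*x$ when $g=D^*x$, and Corollary \ref{cor:loc_sp} translates the localized sparsity quantities into estimates of this form under the normalisations $\nm{Dg}_{\ell^q}=1$ (for (i)) and $\nm{P_{\Lambda_j}Dg}_{\ell^q}=1$ (for (ii)), with $q\in\br{2,\infty}$.

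The central estimate comes from subadditivity of $t\mapsto t^p$ (valid since $p\in(L^{-1},1]$): expanding $(DD^*x)_m=\sum_k\ip{\psi_k}{\psi_m}x_k$, summing over $m\in\Lambda$, and interchanging summations gives
\bes{
\nm{P_\Lambda DD^*x}_{\ell^p}^p \leq \sum_{l=1}^r I_p(\Delta_l,\Lambda)\sum_{k\in\Delta_l}\abs{x_k}^p,
}
where $\Delta_l=\mathrm{supp}(x)\cap\Lambda_l$. Remark \ref{rem:I_p} gives $I_p(\Delta_l,\bbN)\leq 1+1/(Lp-1)$ for (i), and $I_p(\Delta_l,\Lambda_j)\leq 1/((Lp-1)d_{j,l}^{Lp-1})$ for (ii). H\"older then bounds $\sum_{k\in\Delta_l}\abs{x_k}^p\leq s_l^{1-p/q}\nm{P_{\Delta_l}x}_{\ell^q}^p$ for each $q\in\br{2,\infty}$.

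To close the estimate I use the pseudo-inverse constants: for $x$ supported on $\Delta$ the Parseval isometry $\nm{g}_\cH=\nm{Dg}_{\ell^2}$ combined with the identity $\nm{g}_\cH^2=\ip{x}{P_\Delta DD^*P_\Delta x}$ and the definition of $B$ yields $\nm{x}_{\ell^2}\leq B\nm{Dg}_{\ell^2}$; the identity $x=(P_\Delta DD^*P_\Delta)^{-1}P_\Delta DD^*x$ combined with the definition of $B'$ yields $\nm{x}_{\ell^\infty}\leq B'\nm{Dg}_{\ell^\infty}$. Substituting into the previous display with $\Lambda=\bbN$, and taking $(1-p/q)$-th roots of the two resulting inequalities (one for $q=2$, one for $q=\infty$), proves (i) by the defining minimality of $\kappa(\mathbf{N},\mathbf{s},p)$. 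For (ii), the same substitution combined with factoring $s_j^{1-p/q}$ out of $\sum_l s_l^{1-p/q}/d_{j,l}^{Lp-1}$ produces, after $(1-p/q)$-th roots, the two members of the maximum in the statement.

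The main obstacle is the weaker normalisation in (ii), where $\nm{P_{\Lambda_j}Dg}_{\ell^q}=1$ rather than $\nm{Dg}_{\ell^q}=1$, so the bounds $\nm{x}_{\ell^q}\leq (B\text{ or }B')\nm{Dg}_{\ell^q}$ must be upgraded to involve only $P_{\Lambda_j}Dg$. My plan is to decompose $DD^*x=DD^*P_{\Lambda_j}DD^*x+DD^*P_{\Lambda_j}^\perp DD^*x$ (using $(DD^*)^2=DD^*$) and bound the second summand via intrinsic localisation applied to the off-diagonal block; the resulting geometric decay in $d_{j,l}$ is absorbed into the constants, leaving $\nm{x}_{\ell^q}$ controlled by $\nm{P_{\Lambda_j}Dg}_{\ell^q}$ up to an adjusted constant. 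The constant $\tilde B$ appearing in (ii) is then naturally the resulting $\ell^\infty$ pseudo-inverse constant after this perturbation.
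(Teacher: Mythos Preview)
Your approach is essentially the same as the paper's. For (i), the paper invokes Proposition~\ref{prop:intrinsic_loc}(i)--(ii), whose proof is precisely your subadditivity-of-$t^p$ estimate followed by H\"older and the pseudo-inverse bound $\nm{x}_{\ell^2}\leq \nm{(P_\Delta D)^\dagger}\nm{Dg}_{\ell^2}$, combined with the $I_p(\Delta,\bbN)\leq 1+1/(Lp-1)$ estimate from Remark~\ref{rem:I_p}. This part is correct and matches the paper.

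For (ii), the paper simply cites Proposition~\ref{prop:intrinsic_loc}(iii)--(iv), which again are your central estimate restricted to $\Lambda=\Lambda_j$. You have correctly identified a mismatch that the paper does not address: those parts of Proposition~\ref{prop:intrinsic_loc} assume the \emph{global} normalisation $\nm{Dg}_{\ell^q}=1$, whereas Definition~\ref{def:analysis_sparsity} defines $\kappa_j$ via the \emph{local} normalisation $\nm{P_{\Lambda_j}Dg}_{\ell^q}=1$. Since $\nm{P_{\Lambda_j}Dg}_{\ell^q}\leq\nm{Dg}_{\ell^q}$, the former does not imply the latter. The paper's proof applies (iii)--(iv) directly and never mentions this, so you have spotted a genuine gap in the paper's own argument.

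Your proposed fix, however, does not close it. The decomposition $DD^*x=DD^*P_{\Lambda_j}DD^*x+DD^*P_{\Lambda_j}^\perp DD^*x$ cannot yield a bound of $\nm{x}_{\ell^q}$ by $\nm{P_{\Lambda_j}Dg}_{\ell^q}$ that is uniform over $x\in\Sigma_{\mathbf{s},\mathbf{N}}$: if $x$ is supported on some $\Delta_k$ with $k\neq j$ and $d_{j,k}$ large, intrinsic localisation forces $\nm{P_{\Lambda_j}Dg}_{\ell^q}$ to be small (of order at most $d_{j,k}^{-L+1/q}$) while $\nm{x}_{\ell^q}$ stays of order one, so any ``adjusted constant'' $\tilde B$ absorbing this ratio must grow with $d_{j,k}$ and cannot coincide with the stated bound, which \emph{decays} in $d_{j,k}$. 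In short, both the paper's proof and your proposal establish (ii) only under the global normalisation $\nm{Dg}_{\ell^q}=1$; neither handles the local normalisation from Definition~\ref{def:analysis_sparsity}, and your perturbation sketch cannot repair this.
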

\prf{
Note that $B$ and $B'$ are both finite, since there are finitely many subsets of $[N_r]$ and for each subset $\Delta$, $\br{\psi_j}_{j\in\Delta}$ is necessarily a frame for its span with a strictly positive lower frame bound. (i) follows from taking the maximum of (i) and (ii) of Proposition \ref{prop:intrinsic_loc} over all $\Delta$ subsets with an $(\mathbf{s},\mathbf{N})$-sparsity pattern and plugging in the estimate of $I_p$ from Remark \ref{rem:I_p}. (ii) follows from (iii) and (iv) of Proposition \ref{prop:intrinsic_loc}. 
}

\begin{proposition}\label{prop:intrinsic_loc}
Let $p\in (0,1]$ and let $\Delta\subset \bbN$ with $\abs{\Delta} = s$. Then, for all $g\in\cR(D^* P_\Delta)$, 
\begin{itemize}

\item[(i)] if $\nm{Dg}_{\ell^2} = 1$, then $\nm{D g}_{\ell^p}^p \leq \nm{(P_\Delta D)^\dagger}^p I_p(\Delta, \bbN) s^{1-p/2}$;
\item[(ii)] 
if $\nm{Dg}_{\ell^\infty }=1$, then $\nm{D g}_{\ell^p}^p\leq \nm{(P_\Delta D)^\dagger}^p I_p(\Delta, \bbN) s$.
\end{itemize}
 Let $\br{\Lambda_j}_{j=1}^r$ be a partition for $\bbN$, and let $\Delta_j = \Lambda_j \cap \Delta$ and $s_j = \abs{\Delta_j}$.  Then, for all $g\in\cR(D^* P_\Delta)$,
\begin{itemize}
\item[(iii)] if $\nm{Dg}_{\ell^2} = 1$, then $$\nm{P_{\Lambda_n}D g}_{\ell^p}^p \leq \nm{(P_\Delta D)^\dagger}^p \sum_{m=1}^r I_p(\Delta_m, \Lambda_n) s_m^{1-p/2}, \qquad n=1,\ldots, r;$$
\item[(iv)] 
if $\nm{Dg}_{\ell^\infty }=1$, then
$$
\nm{P_{\Lambda_n} D g}_{\ell^p}^p\leq \nm{(P_\Delta DD^*P_\Delta)^\dagger}_{\ell^\infty\to \ell^\infty}^p \sum_{m=1}^r I_p(\Delta_m, \Lambda_n) s_m, \qquad n=1,\ldots, r.
$$
\end{itemize}
\end{proposition}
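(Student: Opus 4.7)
The plan is to exploit that any $g \in \cR(D^*P_\Delta)$ admits a representation $g = D^*x$ with $x \in \ell^2(\bbN)$ supported on $\Delta$, so that $Dg = DD^*x$ is the Gram operator of the Parseval frame applied to a sparse vector. The entries of $DD^*$ are exactly the inner products $\ip{\psi_k}{\psi_j}$ controlled by the intrinsic localization hypothesis, so $\ell^p$ estimates on $Dg$ for $p\leq 1$ reduce to weighted $\ell^p$ estimates on $x$ combined with row sums of the Gram matrix.

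The first step is to apply the $p$-subadditivity $\abs{\sum_k a_k}^p \leq \sum_k \abs{a_k}^p$ (valid for $p\leq 1$) entrywise to the identity $(Dg)_j = \sum_{k\in\Delta}\ip{\psi_k}{\psi_j}\,x_k$, and then sum in $j$ over any set $\Lambda\subseteq\bbN$. After interchanging the order of summation this gives
\[
\nm{P_\Lambda Dg}_{\ell^p}^p \leq \sum_{k\in\Delta}\abs{x_k}^p \sum_{j\in\Lambda}\abs{\ip{\psi_k}{\psi_j}}^p.
\]
Writing $\Delta = \bigcup_m \Delta_m$ and invoking the definition of $I_p$ yields $\nm{P_\Lambda Dg}_{\ell^p}^p \leq \sum_m I_p(\Delta_m,\Lambda)\,\nm{P_{\Delta_m}x}_{\ell^p}^p$. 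A H\"older inequality on the finite support $\Delta_m$ then converts $\nm{P_{\Delta_m}x}_{\ell^p}^p$ into either $s_m^{1-p/2}\nm{P_{\Delta_m}x}_{\ell^2}^p$ (for the $\ell^2$-normalised cases (i), (iii)) or $s_m\nm{x}_{\ell^\infty}^p$ (for the $\ell^\infty$-normalised cases (ii), (iv)). Taking $\Lambda = \bbN$ collapses the sum over $m$ and produces the unpartitioned bounds (i) and (ii), whereas $\Lambda = \Lambda_n$ gives the partitioned bounds (iii) and (iv).

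What remains, and what I expect to require the most care, is controlling $\nm{x}_{\ell^2}$ and $\nm{x}_{\ell^\infty}$ by the corresponding norms of $Dg$. Both follow from the identity $x = (P_\Delta DD^*P_\Delta)^{-1}P_\Delta Dg$, which is well defined on $\ell^2(\Delta)$ because $\br{\psi_k}_{k\in\Delta}$ is a Riesz sequence (guaranteed implicitly by the finiteness of $B$ in the outer proposition). For the $\ell^2$ estimate one uses $D^*D = I$ to obtain $\nm{Dg}_{\ell^2} = \nm{g}_\cH$, together with the fact that the spectral norm of $(P_\Delta D)^\dagger = (P_\Delta D)^*(P_\Delta DD^*P_\Delta)^{-1}$ equals $\nm{(P_\Delta DD^*P_\Delta)^{-1}}^{1/2}$; this yields $\nm{x}_{\ell^2} \leq \nm{(P_\Delta D)^\dagger}\nm{Dg}_{\ell^2}$ and hence (i) and (iii). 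For the $\ell^\infty$ estimate one instead takes the $\ell^\infty\to\ell^\infty$ operator norm in the same identity, giving $\nm{x}_{\ell^\infty} \leq \nm{(P_\Delta DD^*P_\Delta)^\dagger}_{\ell^\infty \to \ell^\infty}\nm{Dg}_{\ell^\infty}$, which combined with the H\"older step yields (iv) and, when $\Lambda = \bbN$, the analogous estimate for (ii). The only delicate point is matching each pseudoinverse norm (spectral versus $\ell^\infty \to \ell^\infty$) with the correct normalisation on $Dg$, since these two operator norms of the same inverse Gram are in general genuinely distinct quantities and conflating them would weaken the bounds.
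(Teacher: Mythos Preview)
Your approach is essentially identical to the paper's: apply $p$-subadditivity to the Gram expansion $(Dg)_j=\sum_{k\in\Delta}\ip{\psi_k}{\psi_j}x_k$, swap the sums to extract $I_p$, use H\"older on the finite support, and bound $x$ via the pseudoinverse of $P_\Delta D$ (for the $\ell^2$ case) or of the Gram $P_\Delta DD^*P_\Delta$ in $\ell^\infty\to\ell^\infty$ (for the $\ell^\infty$ case). Parts (i), (iii), (iv) match the paper line for line.

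One point worth flagging: for part (ii) your argument naturally delivers the constant $\nm{(P_\Delta DD^*P_\Delta)^{-1}}_{\ell^\infty\to\ell^\infty}^p$, as in (iv), rather than the spectral $\nm{(P_\Delta D)^\dagger}^p$ that the statement records. The paper's own derivation of (ii) obtains the spectral constant by plugging $\nm{D^*P_\Delta x}_{\ell^2}^p\leq s^{p/2}$ into the last line of its display, which amounts to asserting $\nm{x}_{\ell^2}\leq s^{1/2}$ from the hypothesis $\nm{Dg}_{\ell^\infty}=1$ alone; that step is not justified when $DD^*\neq I$. Your explicit warning about matching each pseudoinverse norm to the correct normalisation on $Dg$ is therefore well placed, and the version of (ii) your argument actually proves is the correct one.
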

\begin{proof}
For (i), suppose that $\nm{DD^*P_\Delta x}_{\ell^2}=1$.
\spl{\label{eq:intrinsic1}
&\nm{DD^*P_\Delta x}_{\ell^p}^p = \sum_{k\in\bbN}\abs{\sum_{j\in\Delta} x_j \ip{\psi_k}{\psi_j} }^p
\leq  \sum_{k\in \bbN}\sum_{j\in\Delta} \abs{ x_j}^p\abs{ \ip{\psi_k}{\psi_j} }^p\\
&= \sum_{j\in\Delta }\abs{ x_j}^p\sum_{k\in\Lambda}\abs{ \ip{\psi_k}{\psi_j} }^p
\leq \left(\max_{j\in\Delta} \sum_{k\in \bbN}\abs{ \ip{\psi_k}{\psi_j} }^p \right)\cdot \sum_{j\in\Delta }\abs{ x_j}^p\\
&\leq \left(\max_{j\in\Delta} \sum_{k\in \bbN}\abs{ \ip{\psi_k}{\psi_j} }^p \right)\cdot \nm{x}_{\ell^2}^p \cdot \abs{\Delta}^{1-p/2} \\
&\leq
 \left(\max_{j\in\Delta} \sum_{k\in \bbN}\abs{ \ip{\psi_k}{\psi_j} }^p \right)\cdot \nm{(P_\Delta D)^\dagger}^p \cdot \nm{D^* P_\Delta x}_{\ell^2}^p \cdot s^{1-p/2},
}
where we have applied the Cauchy-Schwarz inequality in the penultimate line. Therefore,
$$
\nm{DD^*P_\Delta x}_{\ell^p}^p \leq I(\Delta, \bbN) \nm{(P_\Delta D)^\dagger}^p s^{1-p/2}.
$$
To show (ii), if we instead assume that $\nm{DD^*P_\Delta x}_{\ell^\infty}=1$, then since
$$
\nm{DD^*P_\Delta x}_{\ell^2}^p = \nm{DD^*}_{\ell^2}^p \nm{x}_{\ell^2}^p \leq \nm{DD^*}_{\ell^2}^p s^{p/2},
$$
by plugging this into the last line of (\ref{eq:intrinsic1}), we obtain
$$
\nm{DD^*P_\Delta x}_{\ell^p}^p \leq I(\Delta, \Lambda) \nm{(P_\Delta D)^\dagger}^p s.
$$

The proof of (iii) is  similar to the above: if $\nm{D^*P_\Delta x}=1$, then for each $n=1,\ldots, r$,
\spl{\label{eq:intrinsic2}
&\nm{P_{\Lambda_n}DD^*P_\Delta x}_{\ell^p}^p 
\leq \sum_{m=1}^r \left(\max_{j\in\Delta_m} \sum_{k\in\Lambda}\abs{ \ip{\psi_k}{\psi_j} }^p \right)\cdot \sum_{j\in\Delta_m }\abs{ x_j}^p\\
&\leq \sum_{m=1}^r \sum_{j\in\Delta_m } \left(\max_{j\in\Delta_m} \sum_{k\in\Lambda}\abs{ \ip{\psi_k}{\psi_j} }^p \right)\cdot \nm{P_{\Delta_m}x}_{\ell^2}^p \cdot \abs{\Delta_m}^{1-p/2} \\
&\leq \nm{(P_\Delta D)^\dagger}^p \cdot \sum_{m=1}^r I(\Delta_m, \Lambda_n)  \cdot s_m^{1-p/2},
}
Finally, to show (iv), if $\nm{D^*P_\Delta x}_{\ell^\infty}=1$, then
for each $n=1,\ldots, r$,
\spl{\label{eq:intrinsic3}
&\nm{P_{\Lambda_n}DD^*P_\Delta x}_{\ell^p}^p \leq \sum_{m=1}^r \left(\max_{j\in\Delta_m} \sum_{k\in\Lambda}\abs{ \ip{\psi_k}{\psi_j} }^p \right)\cdot \sum_{j\in\Delta_m }\abs{ x_j}^p\\
&\leq \sum_{m=1}^r \sum_{j\in\Delta_m } \left(\max_{j\in\Delta_m} \sum_{k\in\Lambda}\abs{ \ip{\psi_k}{\psi_j} }^p \right)\cdot \nm{P_{\Delta_m}x}_{\ell^\infty}^p \cdot \abs{\Delta_m} \\
&\leq \nm{(P_\Delta D D^*P_\Delta)^\dagger}_{\ell^\infty \to \ell^\infty}^p \cdot \sum_{m=1}^r I(\Delta_m, \Lambda_n)\cdot  s_m,
}
 
\end{proof}

\section{Conditions for stable and robust recovery}\label{sec:duals}

Given $\Delta\subset \bbN$ and some $f\in\cH$, the following proposition presents conditions under which one is guaranteed robust and stable recovery up to $\nm{ P_\Delta^\perp D f}_{\ell^1}$.
\begin{proposition}[Dual certificate]\label{prop:dual_certificate}
Let $f\in \cH$.
Let $\Delta \subset \bbN$ be such that $\abs{\Delta} =s$. Let $\cW = \cR( D^* P_{\Delta})$. For $r\in\bbN$, let $\mathbf{q}=\br{q_j}_{j=1}^r\in (0,1]^r$ and let $\br{\Omega_k}_{k=1}^r$ be disjoint subsets of $\bbN$. Let $\Omega = \Omega_1 \cup \cdots \cup \Omega_r$.
Suppose that
\begin{itemize}
\item[(i)] $
\nm{ Q_{\cW}  V^* (q_1^{-1} P_{\Omega_1 }\oplus \cdots \oplus q_r^{-1}  P_{\Omega_r}) V Q_{\cW}  -  Q_{\cW}} < \frac{1}{4},
$
\item[(ii)] $\sup_{j\in\bbN} \nm{ P_{\br{j}}  D  Q_{\cW}^\perp V^* (q_1^{-1} P_{\Omega_1} \oplus \cdots \oplus q_r^{-1} P_{\Omega_r}) V  Q_{\cW}^\perp D^* P_{\br{j}}}  < \frac{5}{4},$
\end{itemize}
and that there exists $\rho = V^* P_\Omega w$ and $L>0$ with the following properties.
\begin{itemize}
\item[(iii)] $\nm{ D^* P_{\Delta} \sgn( P_{\Delta} D x) -  Q_\cW \rho} \leq   q^{1/2}/8$,
\item[(iv)] $\nm{ P_\Delta^\perp  D  Q_{\cW}^\perp \rho}_{\ell^\infty} \leq 1/2 $,
\item[(v)] $\nm{w}_{\ell^2} \leq L \, \sqrt{\kappa}$.
\end{itemize}
Let $y \in\ell^2(\bbN)$ be such that $\nm{ P_\Omega V f - y}  \leq \delta$.
Then,  any minimizer $\hat f \in \cH$ of (\ref{eq:min_orth})
satisfies
\be{\label{eq:error_bd_dual}
\nm{f-\hat f}_\cH \lesssim \delta  \, \left( q^{-1/2}  +  L\sqrt{\kappa}\right) +     \nm{P_\Delta^\perp D f}_{\ell^1}.
}
\end{proposition}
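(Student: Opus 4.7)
The plan is to combine feasibility, a standard cone inequality from the optimality of $\hat f$, and a dual-certificate argument, closing the loop using the tight-frame estimate $\|Dg\|_{\ell^1}\leq \sqrt{\kappa}\|g\|$ for $g\in\cW$ which is encoded by the localized sparsity $\kappa$.

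First I would set $h:=\hat f-f$. Since both $f$ and $\hat f$ are $\delta$-feasible for (\ref{eq:min_orth}), the triangle inequality gives $\|P_\Omega Vh\|_{\ell^2}\leq 2\delta$. Next I would derive a cone condition: writing $\|D\hat f\|_{\ell^1}\leq \|Df\|_{\ell^1}$, decomposing both sides as $P_\Delta$-plus-$P_\Delta^\perp$ parts and pairing the on-support part with $\sgn(P_\Delta Df)$, one obtains, by a computation identical to the orthonormal case,
\[
\|P_\Delta^\perp Dh\|_{\ell^1} \;\leq\; 2\|P_\Delta^\perp Df\|_{\ell^1} \;+\; \bigl|\langle D^*P_\Delta \sgn(P_\Delta Df),\,h\rangle\bigr|.
\]

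The second step is to control the inner product using the dual certificate. Writing $D^*P_\Delta\sgn(P_\Delta Df)=Q_\cW\rho+\xi$ with $\|\xi\|\leq q^{1/2}/8$ by (iii), and then using $\rho = V^*P_\Omega w$ and the orthogonal decomposition $Q_\cW\rho = \rho - Q_\cW^\perp\rho$, I get
\[
\langle D^*P_\Delta\sgn,h\rangle \;=\; \langle w,P_\Omega Vh\rangle \;-\; \langle Q_\cW^\perp\rho,h_1\rangle \;+\;\langle\xi,h\rangle,
\]
where $h_1:=Q_\cW^\perp h$; here I used $\langle Q_\cW^\perp\rho,h\rangle = \langle Q_\cW^\perp\rho,h_1\rangle$. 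The first term is bounded by $2L\sqrt{\kappa}\delta$ via (v) and feasibility. For the second, $h_1\in\cW^\perp$ gives $P_\Delta Dh_1=0$, so $Dh_1$ is supported in $\Delta^c$; the isometry $D^*D=I$ and condition (iv) then yield
\[
|\langle Q_\cW^\perp\rho,h_1\rangle| \;=\; |\langle P_\Delta^\perp DQ_\cW^\perp\rho,\,Dh_1\rangle| \;\leq\; \tfrac{1}{2}\|Dh_1\|_{\ell^1}.
\]
Combining with $|\langle\xi,h\rangle|\leq (q^{1/2}/8)\|h\|$ gives a cone inequality of the form $\|P_\Delta^\perp Dh\|_{\ell^1}\lesssim \|P_\Delta^\perp Df\|_{\ell^1}+q^{1/2}\|h\|+L\sqrt\kappa\,\delta + \|Dh_1\|_{\ell^1}$.

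The third step controls $\|h_0\|$ with $h_0:=Q_\cW h$ through conditions (i) and (ii). From (i), $\|h_0\|^2\lesssim \|W^{1/2}Vh_0\|^2$ where $W$ is the diagonal weighting with entries $q_k^{-1}$ on $\Omega_k$. Splitting $Vh_0=Vh-Vh_1$, the piece $\|W^{1/2}Vh\|\lesssim q^{-1/2}\delta$ follows from feasibility, while $\|W^{1/2}Vh_1\|$ is bounded by writing $h_1=\sum_{j\notin\Delta}(Dh_1)_j Q_\cW^\perp D^*e_j$, applying the triangle inequality, and invoking the per-index bound $\|W^{1/2}VQ_\cW^\perp D^*e_j\|^2\leq 5/4$ from (ii); this produces $\|W^{1/2}Vh_1\|\lesssim \|Dh_1\|_{\ell^1}$. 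Combined, $\|h_0\|\lesssim q^{-1/2}\delta+\|Dh_1\|_{\ell^1}$.

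Finally I would close the loop. Since $P_\Delta Dh = P_\Delta Dh_0$, one has $\|Dh_1\|_{\ell^1}=\|P_\Delta^\perp Dh_1\|_{\ell^1}\leq \|P_\Delta^\perp Dh\|_{\ell^1}+\|P_\Delta^\perp Dh_0\|_{\ell^1}$, and the localized-sparsity bound from Corollary \ref{cor:loc_sp} applied to $h_0\in\cW$ gives $\|P_\Delta^\perp Dh_0\|_{\ell^1}\leq\|Dh_0\|_{\ell^1}\leq\sqrt\kappa\|h_0\|$. Plugging these into the cone inequality and the bound on $\|h_0\|$ yields a coupled pair of linear inequalities in $\|h_0\|$ and $\|P_\Delta^\perp Dh\|_{\ell^1}$; rearranging them, using $q^{1/2}\leq 1$ so that the $q^{1/2}/8$ and $1/2$ coefficients can be absorbed on the left-hand side, solves for both in terms of $\delta$ and $\|P_\Delta^\perp Df\|_{\ell^1}$. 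Since $D$ is an isometry, $\|h\|_\cH^2=\|P_\Delta Dh\|^2+\|P_\Delta^\perp Dh\|^2\leq \|h_0\|^2+\|P_\Delta^\perp Dh\|_{\ell^1}^2$, giving the stated bound (\ref{eq:error_bd_dual}).

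The main technical obstacle, and the key difference from the orthonormal treatment in \cite{adcockbreaking}, is that in the tight-frame setting $Dh_0$ is no longer supported on $\Delta$, so the identity $P_\Delta^\perp Dh_1=P_\Delta^\perp Dh$ fails and $\|Dh_1\|_{\ell^1}$ picks up an extra $\sqrt{\kappa}\|h_0\|$ contribution via localized sparsity. This couples the cone bound on $\|P_\Delta^\perp Dh\|_{\ell^1}$ to the RIP-type bound on $\|h_0\|$, and the constants in (iii)--(v) (in particular the $q^{1/2}/8$ in (iii) and the $L\sqrt\kappa$ in (v)) must be tracked carefully so that the resulting linear system can be decoupled to give the clean scaling $\delta(q^{-1/2}+L\sqrt\kappa)+\|P_\Delta^\perp Df\|_{\ell^1}$.
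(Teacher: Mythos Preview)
Your outline follows the same dual-certificate architecture as the paper, but there is a genuine gap in the closure step that prevents the argument from going through.

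The problem is the detour through $\|Dh_1\|_{\ell^1}$. In Step~4 you write $\|Dh_1\|_{\ell^1}\leq \|P_\Delta^\perp Dh\|_{\ell^1}+\sqrt{\kappa}\,\|h_0\|$ and then substitute into your Step~3 bound $\|h_0\|\lesssim q^{-1/2}\delta+\|Dh_1\|_{\ell^1}$. This gives
\[
\|h_0\|\;\lesssim\; q^{-1/2}\delta+\|P_\Delta^\perp Dh\|_{\ell^1}+\sqrt{\kappa}\,\|h_0\|,
\]
and since $\sqrt{\kappa}$ is typically large (certainly not controlled by any constant in the hypotheses), the $\sqrt{\kappa}\,\|h_0\|$ term cannot be absorbed on the left. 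The coupled system therefore does not decouple, and the desired estimate does not follow. You correctly flagged this coupling as the main obstacle in your final paragraph, but the proposed resolution via localized sparsity does not work precisely because $\kappa$ is not small.

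The paper avoids this entirely by never passing through $\|Dh_1\|_{\ell^1}$. The key identity is
\[
Q_\cW^\perp \;=\; Q_\cW^\perp D^* P_\Delta^\perp D
\]
as operators on $\cH$ (since $D^*D=I$ and $Q_\cW^\perp D^*P_\Delta=0$). Applying this to $h$ rather than to $h_1$ gives $Q_\cW^\perp h=\sum_{j\notin\Delta}(Dh)_jQ_\cW^\perp D^*e_j$, so both the dual-certificate term $|\langle h,Q_\cW^\perp\rho\rangle|$ and the weighted-measurement term $\|W^{1/2}VQ_\cW^\perp h\|$ are bounded directly by $\tfrac12\|P_\Delta^\perp Dh\|_{\ell^1}$ and $\sqrt{5/4}\,\|P_\Delta^\perp Dh\|_{\ell^1}$, respectively. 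With these bounds the cone inequality closes with coefficient strictly less than~$1$ in front of $\|P_\Delta^\perp Dh\|_{\ell^1}$, and no $\sqrt{\kappa}$ ever appears in the coupling. The localized sparsity $\kappa$ enters the final estimate only through condition~(v), exactly as stated.
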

\begin{proof}
Since $D$ is an isometry, $D^* D$ is the identity on $\cH$. Given any $g\in\cH$, 
\be{\label{eq:Q_W}
Q_\cW^\perp g= Q_\cW^\perp D^*  D g =  Q_\cW^\perp D^* P_\Delta^\perp D g,
}
since $Q_\cW$ is the orthogonal projection onto $\cR(D^* P_{\Delta})$.
So, using the assumption that $\nm{D}_{\cH \to \ell^2}\leq 1$, for any $g\in\cH$,
$$
\nm{g} \leq \nm{Q_\cW g} + \nm{Q_\cW^\perp g} \leq \nm{ Q_\cW g}_\cH + \nm{  P_\Delta^\perp   D g}_{\ell^1}.
$$ Now, let $h = f-\hat f$. To bound $\nm{h}_\cH$, it suffices to derive bounds for $\nm{Q_\cW h}$ and $\nm{ P_\Delta^\perp  D h}_{\ell^1}$.

Let $V_{\Omega, \mathbf{q}} =  Q_{\cW} V^* (q_1^{-1} P_{\Omega_1 }\oplus \cdots \oplus q_r^{-1} P_{\Omega_r}) V Q_{\cW}$. We first observe that  (i) implies that $V_{\Omega, \mathbf{q}}$ has a bounded inverse on $ Q_{\cW}(\cH)$, with $$
\nm{
(Q_{\cW} V^* (q_1^{-1} P_{\Omega_1 }\oplus \cdots \oplus q_r^{-1} P_{\Omega_r}) V  Q_{\cW})^{-1}}_{\cH \to \cH} \leq \frac{4}{3}, $$
and 
$$
\nm{
 (q_1^{-1/2} P_{\Omega_1 }\oplus \cdots \oplus q_r^{-1/2}  P_{\Omega_r}) V  Q_{\cW}}  \leq \sqrt{\frac{5}{4}}.
$$
Observe also that
$$
\nm{P_\Omega V h}  \leq \nm{ P_\Omega V f - y}  + \nmu{ P_\Omega V \hat f - y} \leq 2\delta.
$$
By applying the above observations, we have that
\spl{\label{prop:dual:1}
\nm{Q_\cW h} &= \nm{V_{\Omega, \mathbf{q}}^{-1} V_{\Omega, \mathbf{q}}  h}\\&
\leq \nm{V_{\Omega, \mathbf{q}}^{-1}}_{\cH} \nm{ Q_{\cW} V^*(q_1^{-1}P_{\Omega_1} \oplus \cdots \oplus q_r^{-1} P_{\Omega_r}) V (h- Q_\cW^\perp)h}\\
&\leq \frac{4 }{3 }\,  \left( \sqrt{5}\, q^{-1/2}\, \delta + \sqrt{\frac{5}{4}}  \, \nm{(q_1^{-1/2} P_{\Omega_1 }\oplus \cdots \oplus q_r^{-1/2}  P_{\Omega_r}) V  Q_{\cW}^\perp h}  \right).
}
Also, by (ii),
\eas{
&\nm{(q_1^{-1/2} P_{\Omega_1 }\oplus \cdots \oplus q_r^{-1/2}  P_{\Omega_r}) V  Q_{\cW}^\perp h}  = \nm{(q_1^{-1/2} P_{\Omega_1 }\oplus \cdots \oplus q_r^{-1/2}  P_{\Omega_r}) V  Q_{\cW}^\perp  D^* P_\Delta^\perp   D h}_{\ell^2}\\
&\leq \sup_{j\in\Delta^c} \nm{(q_1^{-1/2} P_{\Omega_1 }\oplus \cdots \oplus q_r^{-1/2}  P_{\Omega_r})  V  Q_{\cW}^\perp  D^*e_j}_{\ell^2} \nm{ P_\Delta^\perp   D h}_{\ell^1} \leq \sqrt{\frac{5}{4}} \nm{ P_\Delta^\perp   D h}_{\ell^1}.
}
Plugging this into (\ref{prop:dual:1}) yields
\be{\label{eq:prop:dual:1}
\nm{ Q_\cW h}_\cH
\leq 2  \left(  q^{-1/2}\, \delta +  \nm{ P_\Delta^\perp   D h}_{\ell^1}\right).
}
So, to bound $\nm{h}_\cH$, it suffices to bound $\nm{ P_\Delta^\perp   D h}_{\ell^1}$.

Observe that
\eas{
\nmu{ D \hat f}_{\ell^1} &= \nmu{ P_\Delta^\perp   D (f+h)}_{\ell^1} + \nmu{  P_{\Delta} D (f+h)}_{\ell^1}\\ &
\geq \nm{ P_\Delta^\perp   D h}_{\ell^1} - \nm{ P_\Delta^\perp   D f}_{\ell^1} + \nm{ P_{\Delta} D f}_1 + \Re \ip{ P_{\Delta} D  h}{\sgn( P_{\Delta} D f)}\\ &
=  \nm{ P_\Delta^\perp   D h}_{\ell^1} - 2\nm{ P_\Delta^\perp   D f}_{\ell^1} + \nm{ D f}_1 + \Re \ip{ P_{\Delta} D  h}{\sgn( P_{\Delta} D f)}.
}
Since $\hat f$ is a minimizer, we can deduce that
$$
\nm{ P_\Delta^\perp   D h}_{\ell^1} \leq \abs{\ip{ P_{\Delta} D  h}{\sgn( P_{\Delta} D f)}} + 2\nm{ P_\Delta^\perp   D f}_{\ell^1}.
$$
Using the existence of $\rho =  V^*  P_\Omega w$ and recalling that $ Q_\cW^\perp =  Q_{\cW}^\perp  D^*  P_\Delta^\perp   D$ from (\ref{eq:Q_W}), we have that
\eas{
&\abs{\ip{ P_{\Delta} D  h}{\sgn( P_{\Delta} D f)}}
= \abs{\ip{ h}{ D^*\sgn( P_{\Delta} D f)}}\\&
\leq \abs{\ip{ h}{ D^*\sgn( P_{\Delta} D f) -  Q_\cW \rho}} + \abs{\ip{h}{\rho}} +\abs{\ip{h}{ Q_\cW^\perp\rho}}\\
&\leq \nm{ Q_{\cW} h}_\cH \nm{ D^*\sgn( P_{\Delta} D f) -  Q_\cW \rho}_\cH + \nm{ P_\Omega  V h}_{\ell^2} \nm{w}_{\ell^2} + \abs{\ip{ D^*  P_\Delta^\perp   D h}{ Q_\cW^\perp\rho}}\\
&\leq \frac{\sqrt{q}}{8  }\nm{ Q_{\cW} h}_\cH  + 2\delta \, L \, \sqrt{\kappa} + \frac{1}{4} \nm{ P_\Delta^\perp   D h}_{\ell^1} \\
&\leq \delta \, \left(\frac{1}{4}  + 2L\sqrt{\kappa}\right) + \frac{3}{4}\nm{ P_\Delta^\perp   D h}_{\ell^1},
}
where we have used the bound on $\nm{ Q_\cW h}_\cH$ from (\ref{eq:prop:dual:1}) to obtain the last inequality.
Therefore,
$$
\nm{ P_\Delta^\perp   D h}_{\ell^1} \leq 8\nm{ P_\Delta^\perp   D f}_{\ell^1} + \delta \, \left(1 + 8 L\sqrt{\kappa}\right).
$$
Finally, combining this with (\ref{eq:prop:dual:1}) yields
$$
\nm{h}_\cH \leq \delta \, \left(2  q^{-1/2}  + 3 \left(1  + 8 L\sqrt{\kappa}\right)\right) + 16\,\nm{ P_\Delta^\perp   D f}_{\ell^1}.
$$
\end{proof}

\begin{proposition}[Dual certificate for the unconstrained problem]\label{prop:dual_certificate_unconstr}
Consider the setting of Proposition \ref{prop:dual_certificate} and assume that conditions (i)-(v) are satisfied. Let $\alpha>0$ and suppose that $\hat f\in\cH$ is a minimizer of
$$
\inf_{g\in\cH} \alpha \nm{ D g}_{\ell^1} + \nm{ P_\Omega  V g - y}_{\ell^2}^2,
$$
where $y \in\ell^2(\bbN)$ is such that $\nm{ P_\Omega V f - y}  \leq \delta$.
Then,
$$
\nm{f-\hat f}_\cH \lesssim \frac{\delta^2}{\alpha} +  \alpha \left(\frac{1}{\sqrt{q}} +  L \sqrt{\kappa}\right)^2 + \delta \left(\frac{1}{\sqrt{q}} +  L \sqrt{\kappa}\right)+ \nm{ P_\Delta^\perp  D f}_{\ell^1}.
$$
Thus, if $\alpha = \sqrt{q}\delta$, then
$$
\nm{f-\hat f}_\cH \lesssim \delta\left(\frac{1}{\sqrt{q}} + L\sqrt{\kappa} + \sqrt{q}L^2 \kappa\right) + \nm{ P_\Delta^\perp  D f}_{\ell^1}.
$$
\end{proposition}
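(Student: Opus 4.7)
The plan is to mirror the proof of Proposition \ref{prop:dual_certificate} step by step, with one essential modification to accommodate the unconstrained optimality: since $\hat f$ is no longer required to satisfy $\nm{P_\Omega V\hat f - y}_{\ell^2}\leq \delta$, the bound $\nm{P_\Omega V h}_{\ell^2}\leq 2\delta$ (with $h := f-\hat f$) used in the constrained proof must be replaced by $\nm{P_\Omega V h}_{\ell^2}\leq \delta + \nm{r}_{\ell^2}$, where $r := P_\Omega V\hat f - y$ denotes the residual of $\hat f$. The size of $\nm{r}_{\ell^2}$ is then controlled by the unconstrained optimality inequality obtained by feeding $f$ into the objective,
\bes{
\alpha\nm{D\hat f}_{\ell^1} + \nm{r}_{\ell^2}^2 \,\leq\, \alpha\nm{Df}_{\ell^1} + \delta^2.
}

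Following verbatim the steps of the constrained proof, I would decompose $\nm{D\hat f}_{\ell^1}$ along $P_\Delta$ and $P_\Delta^\perp$, apply the reverse--triangle estimates $\nm{P_\Delta D\hat f}_{\ell^1}\geq \nm{P_\Delta Df}_{\ell^1} - \Re\ip{P_\Delta Dh}{\sgn(P_\Delta Df)}$ and $\nm{P_\Delta^\perp D\hat f}_{\ell^1}\geq \nm{P_\Delta^\perp Dh}_{\ell^1} - \nm{P_\Delta^\perp Df}_{\ell^1}$, and arrive at
\bes{
\alpha\nm{P_\Delta^\perp Dh}_{\ell^1} + \nm{r}_{\ell^2}^2 \,\leq\, 2\alpha\sigma + \delta^2 + \alpha\abs{\ip{P_\Delta Dh}{\sgn(P_\Delta Df)}},
}
where $\sigma := \nm{P_\Delta^\perp Df}_{\ell^1}$. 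The dual certificate $\rho = V^*P_\Omega w$, together with conditions (iii)--(v) and the identity $Q_\cW^\perp = Q_\cW^\perp D^* P_\Delta^\perp D$ (from $D^*D = I$), bounds the inner product by $\tfrac{\sqrt q}{8}\nm{Q_\cW h} + L\sqrt\kappa\,\nm{P_\Omega V h}_{\ell^2} + \tfrac12\nm{P_\Delta^\perp Dh}_{\ell^1}$, exactly as in Proposition \ref{prop:dual_certificate}, the only change being that $\nm{P_\Omega V h}_{\ell^2}$ is now bounded by $\delta + \nm{r}_{\ell^2}$ rather than $2\delta$. Conditions (i) and (ii) yield, as in the constrained case, $\nm{Q_\cW h} \lesssim q^{-1/2}(\delta + \nm{r}_{\ell^2}) + \nm{P_\Delta^\perp Dh}_{\ell^1}$, so after absorbing the $\tfrac{\sqrt q}{8}\nm{Q_\cW h}$ contribution on the right (using $q\leq 1$), I would obtain an inequality of the shape
\bes{
\tfrac{\alpha}{4}\nm{P_\Delta^\perp Dh}_{\ell^1} + \nm{r}_{\ell^2}^2 \,\lesssim\, \alpha\sigma + \delta^2 + \alpha\Xi_L(\delta + \nm{r}_{\ell^2}), \qquad \Xi_L := \tfrac14 + L\sqrt\kappa.
}

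The genuinely new step, and the main technical point, is the decoupling of the linear-in-$\nm{r}_{\ell^2}$ term on the right from the quadratic $\nm{r}_{\ell^2}^2$ on the left, achieved via Young's inequality $\alpha\Xi_L\nm{r}_{\ell^2}\leq \tfrac12\nm{r}_{\ell^2}^2 + \tfrac12(\alpha\Xi_L)^2$. Absorption of $\tfrac12\nm{r}_{\ell^2}^2$ into the left-hand side yields two separate estimates: $\nm{r}_{\ell^2}^2 \lesssim \alpha\sigma + \delta^2 + \alpha\Xi_L\delta + (\alpha\Xi_L)^2$ and $\nm{P_\Delta^\perp Dh}_{\ell^1} \lesssim \sigma + \delta^2/\alpha + \Xi_L\delta + \alpha\Xi_L^2$. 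Combining these with $\nm{h}_\cH \leq \nm{Q_\cW h} + \nm{P_\Delta^\perp Dh}_{\ell^1} \lesssim q^{-1/2}(\delta + \nm{r}_{\ell^2}) + \nm{P_\Delta^\perp Dh}_{\ell^1}$ and invoking $\Xi_L \leq \Xi := q^{-1/2} + L\sqrt\kappa$ together with $q^{-1}\leq \Xi^2$, the cross-terms $q^{-1/2}\sqrt{\alpha\sigma}$ and $q^{-1/2}\sqrt{\alpha\Xi_L\delta}$ arising from $q^{-1/2}\nm{r}_{\ell^2}$ can be tamed by $\sqrt{ab}\leq \tfrac12(a+b)$, rewritten respectively as $\sqrt{(q^{-1}\alpha)\sigma}$ and $\sqrt{(q^{-1}\alpha)(\Xi_L\delta)}$, and folded into $\alpha\Xi^2 + \sigma$ and $\alpha\Xi^2 + \Xi\delta$. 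This delivers $\nm{h}_\cH \lesssim \delta^2/\alpha + \alpha\Xi^2 + \Xi\delta + \sigma$, and the specialization $\alpha = \sqrt q\delta$ then follows by substitution using $\Xi^2 = q^{-1} + 2q^{-1/2}L\sqrt\kappa + L^2\kappa$. The principal obstacle is precisely this bookkeeping: the cross-term $q^{-1/2}\sqrt{\alpha\Xi_L\delta}$ must be grouped as $\sqrt{(q^{-1}\alpha)(\Xi_L\delta)}$, not the seemingly equivalent $\sqrt{(q^{-1}\alpha\Xi_L)\delta}$, for otherwise the resulting $\alpha\Xi^3$ term would destroy the clean $\sqrt q L^2\kappa$ behavior under $\alpha = \sqrt q\delta$.
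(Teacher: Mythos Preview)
Your proof is correct and follows the same overall architecture as the paper's: decompose $\nm{h}_\cH$ into $\nm{Q_\cW h}$ and $\nm{P_\Delta^\perp Dh}_{\ell^1}$, use optimality of $\hat f$ with $f$ as competitor to generate the key inequality coupling $\nm{r}_{\ell^2}^2$ and $\nm{P_\Delta^\perp Dh}_{\ell^1}$, and bound the inner product via the dual certificate exactly as in Proposition~\ref{prop:dual_certificate}. The one substantive difference is in how you extract the bound on the residual $\nm{r}_{\ell^2}$ (your $\nm{r}_{\ell^2}$ is the paper's $\lambda$): the paper treats the inequality $\lambda^2 - \alpha\Xi_L\lambda - (\cdots) \leq 0$ as a quadratic in $\lambda$ and applies the quadratic formula explicitly, then uses $\sqrt{ab}\leq (a+b)/2$ to simplify the discriminant; you instead use Young's inequality $\alpha\Xi_L\nm{r}_{\ell^2}\leq \tfrac12\nm{r}_{\ell^2}^2 + \tfrac12(\alpha\Xi_L)^2$ to absorb the linear term directly. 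These are equivalent manoeuvres (completing the square in disguise), and your version is arguably cleaner because it avoids carrying the square root through subsequent steps. Your care with the grouping of $q^{-1/2}\sqrt{\alpha\Xi_L\delta}$ as $\sqrt{(q^{-1}\alpha)(\Xi_L\delta)}$ is well placed; the paper handles the analogous cross terms in the same spirit but after the quadratic-formula step, where they appear as $(q^{-1/2}+\Xi_L)\sqrt{\alpha\nm{P_\Delta^\perp Df}_{\ell^1}}$ and similar.
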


\begin{proof}
Let $h=  f- \hat f$, just as in Proposition \ref{prop:dual_certificate}, 
$$
\nm{h}_\cH \leq \nm{ Q_\cW h}_\cH + \nm{ P_\Delta^\perp  D h}_{\ell^1}
$$
and it suffices to bound the two terms on the right side of the inequality. We first consider $\nm{ Q_\cW h}_\cH$. By applying assumptions (i) and (ii), we can proceed as in the proof of Proposition \ref{prop:dual_certificate} to derive
\be{
\nm{ Q_\cW h}_\cH \leq \frac{4 }{3}\left( \sqrt{\frac{3}{2}}q^{-1/2} \nm{ P_\Omega  V h}_{\ell^2} + \frac{5}{4}\nm{ P_\Delta^\perp  D h}_{\ell^1}\right).
}
Then, by letting $\lambda = \nm{y -  P_\Omega  V \hat f}_{\ell^2}$ and observing that
$$\nm{ P_\Omega  V h}_{\ell^2}  \leq \nm{ P_\Omega  V f - y}_{\ell^2} + \nm{ P_\Omega  V \hat f - y}_{\ell^2} \leq \delta + \lambda,
$$
 we have that
\spl{\label{eq:uc1}
\nm{ Q_\cW h}_\cH &\leq \frac{4 }{3}\left( \sqrt{\frac{3}{2}}q^{-1/2} (\delta+\lambda) + \frac{5}{4}\nm{ P_\Delta^\perp  D h}_{\ell^1}\right)\\
&\leq 2  \left(\frac{(\delta+\lambda)}{\sqrt{q}} + \nm{ P_\Delta^\perp  D h}_{\ell^1}\right).
}

To bound $ \nm{ P_\Delta^\perp  D h}_{\ell^1}$, first observe that
\eas{
\alpha \nm{ D \hat f}_{\ell^1} \geq &\alpha \nm{ P_\Delta^\perp  D h}_{\ell^1} - 2\alpha \nm{ P_\Delta^\perp  D f}_{\ell^1} + \alpha \nm{ D f}_{\ell^1} + \alpha \Re\ip{ P_\Delta  D h}{\sgn( P_\Delta  D f)} \\&+ \lambda^2-\lambda^2 + \nm{y- P_\Omega  V f}_{\ell^2}^2 -\delta^2.
}
Since $\hat f$ is a minimizer, it follows that $
\alpha \nm{ D \hat f}_{\ell^1}+\lambda^2 \leq \alpha \nm{ D f}_{\ell^1} + \nm{y- P_\Omega  V f}_{\ell^2}^2$ and therefore,
\be{\label{eq:uc2}
\alpha \nm{ P_\Delta^\perp  D h}_{\ell^1} +  \lambda^2 \leq 2\alpha \nm{ P_\Delta^\perp  D f}_{\ell^1}  + \alpha \abs{\ip{ P_\Delta  D h}{\sgn( P_\Delta  D f)}}  +\delta^2.
}
In the same way as in the proof of Proposition \ref{prop:dual_certificate}, we may apply the properties of the dual certificate $\rho =  V^*  P_\Omega w$ to bound $\abs{\ip{ P_\Delta  D h}{\sgn( P_\Delta  D f)}}$, so that the following holds.
\eas{
\abs{\ip{ P_\Delta  D h}{\sgn( P_\Delta  D f)}} &\leq \frac{\sqrt{q}}{8 } \nm{ Q_\cW h}_\cH + \nm{ P_\Omega  V h}_{\ell^2}\nm{w}_{\ell^2} + \frac{1}{4}\nm{ P_\Delta^\perp  D h}_{\ell^1}.
}
By inserting the bound from (\ref{eq:uc1}), recalling that $\nm{ P_\Omega  V h}_{\ell^2}\leq \delta + \lambda$ and that $\nm{w}_{\ell^2}\leq L \sqrt{\kappa}$, it follows that
\bes{
\abs{\ip{ P_\Delta  D h}{\sgn( P_\Delta  D f)}} \leq(\delta+\lambda)\left(\frac{1}{4}+ L\sqrt{\kappa}\right) + \frac{1}{2}\nm{ P_\Delta^\perp  D h}_{\ell^1}.
}
Plugging this bound into (\ref{eq:uc2}) now yields
\be{\label{eq:uc3}
\lambda^2 + \frac{\alpha}{2}\nm{ P_\Delta^\perp  D h}_{\ell^1} \leq 2\alpha \nm{ P_\Delta^\perp  D f}_{\ell^1}+ \alpha(\delta+\lambda)\left(\frac{1}{4}+ L\sqrt{\kappa}\right)+\delta^2.
}
This implies that
$$\lambda^2-\alpha\left(\frac{1}{4}+L\sqrt{\kappa}\right)\lambda -\delta^2-\alpha\delta\left(\frac{1}{4}+L\sqrt{\kappa}\right)\leq 0
$$
and by applying the quadratic formula and observing that $\lambda\geq 0$, it follows that
$$
\lambda\leq \frac{\alpha/4+L\alpha\sqrt{\kappa} + \sqrt{(\alpha/4+L\alpha\sqrt{\kappa})^2+ 4\left(\delta^2+\alpha\delta (1/4+L\sqrt{\kappa})+2\alpha \nm{ P_\Delta^\perp  D f}_{\ell^1}\right)}}{2}.
$$
Note that $\nm{\cdot}_{\ell^2}\leq \nm{\cdot}_{\ell^1}$, and so,
\spl{\label{eq:uc4}
\lambda&\leq \alpha/4+L\alpha\sqrt{\kappa} + \delta + \sqrt{\alpha \delta(1/4+L\sqrt{\kappa})} + \sqrt{2\alpha \nm{ P_\Delta^\perp  D f}_{\ell^1}}\\
&\leq \alpha/4+L\alpha\sqrt{\kappa} + \delta + \alpha(1/8+L\sqrt{\kappa}/2)+\delta/2 + \sqrt{2\alpha \nm{ P_\Delta^\perp  D f}_{\ell^1}}\\
&=\frac{3\alpha}{2}(1/4+L\sqrt{\kappa}) + \frac{3\delta}{2} + \sqrt{2\alpha \nm{ P_\Delta^\perp  D f}}_{\ell^1},
}
where the second inequality comes from the fact that $\sqrt{ab}\leq (a+b)/2$ for any $a,b\geq 0$. We also know from (\ref{eq:uc3}) that
$$
\nm{ P_\Delta^\perp  D h}_{\ell^1} \leq 4 \nm{ P_\Delta^\perp  D f}_{\ell^1} + (\delta+\lambda)\left(\frac{1}{2}+ 2L\sqrt{\kappa}\right)+ \frac{2\delta^2}{\alpha}.
$$
By combining this with the bound from (\ref{eq:uc1}), 
\eas{
\nm{h}_{\cH} &\leq \nm{ Q_\cW h}_\cH + \nm{ P_\Delta^\perp  D h}_{\ell^1}\\
&\leq 3 \left( 4\nm{ P_\Delta^\perp  D f}_{\ell^1} + \delta\left(\frac{1}{2}+2L\sqrt{\kappa}\right) + \frac{2\delta^2}{\alpha}\right)\\& + \frac{2 \delta}{\sqrt{q}}
+ \left(\frac{2 }{\sqrt{q}} + 3\left(\frac{1}{2}+2L\sqrt{\kappa}\right)\right)\lambda\\
&\lesssim   \nm{ P_\Delta^\perp  D f}_{\ell^1} +\delta (1+L\sqrt{\kappa}) + \frac{\delta^2}{\alpha} + \frac{\delta}{\sqrt{q}} + \left(\frac{1}{\sqrt{q}} + 1+L \sqrt{\kappa}\right)\lambda.
}
Recalling (\ref{eq:uc4}),
\eas{
\left(\frac{1}{\sqrt{q}} + 1+L \sqrt{\kappa}\right)\lambda &\lesssim
\left(\frac{1}{\sqrt{q}} + 1+L \sqrt{\kappa}\right)\left(\alpha(1 +L\sqrt{\kappa}) + \delta + \sqrt{\alpha \nm{ P_\Delta^\perp  D f}}_{\ell^1}\right)\\
&\lesssim\left(\frac{1}{\sqrt{q}} + 1+L \sqrt{\kappa}\right)\left(\alpha(1 +L\sqrt{\kappa}) + \delta\right)
+ \frac{\alpha}{q} + \nm{ P_\Delta^\perp  D f}_{\ell^1} + \alpha (1+L \sqrt{\kappa})\\
&\leq \alpha \left(\frac{1}{\sqrt{q}} + 1+L \sqrt{\kappa}\right)^2 + \delta \left(\frac{1}{\sqrt{q}} + 1+L \sqrt{\kappa}\right)+ \nm{ P_\Delta^\perp  D f}_{\ell^1}.
}
Therefore,
$$
\nm{h}_\cH \leq   \frac{\delta^2}{\alpha} +  \alpha \left(\frac{1}{\sqrt{q}} + 1+L \sqrt{\kappa}\right)^2 + \delta \left(\frac{1}{\sqrt{q}} + 1+L \sqrt{\kappa}\right)+ \nm{ P_\Delta^\perp  D f}_{\ell^1}.
$$
\end{proof}

\section{Overview of the proof of Theorem \ref{thm:main}}
The remainder of this paper is focussed on the proof of Theorem \ref{thm:main}, and we begin by setting some notation which will be used throughout.

Let
 $ V,  D \in\cB(\cH, \ell^2(\bbN))$ be isometries. Let $f\in\cH$.  
 For $r\in\bbN$, $M\in\bbN$ and $N\in\bbN$, let $\mathbf{N} = \br{N_k}_{k=1}^r\in\bbN^r$, $\mathbf{M} = \br{M_k}_{k=1}^r\in\bbN^r$, $\mathbf{s} = \br{s_k}_{k=1}^r\in\bbN^r$, $\mathbf{m} = \br{m_k}_{k=1}^r\in\bbN^r$, with
 \begin{itemize}
 \item $0= M_0< M_1<\cdots<M_r=:M$, and let  $\Gamma_k = (M_{k-1}, M_k] \cap \bbN$ and $\Omega_k \sim \mathrm{Ber}(q_k, \Gamma_k)$.
  \item $0= N_0< N_1<\cdots<N_r=:N$, and let $\Lambda_k = (N_{k-1}, N_k] \cap \bbN$ for $k=1,\ldots,r-1$ and $\Lambda_r = (N_{r-1}, \infty)\cap \bbN$.

 \item $m_k\leq M_k-M_{k-1}$ and let $\mathbf{q}=\br{q_j}_{j=1}^r\in (0,1]^r$ with $q_j = m_j/(M_j-M_{j-1})$.
 \item $s_k\leq N_k-N_{k-1}$ and let  $\Delta \subset [N]$ be such that $\abs{\Delta} =s_1+\cdots+s_r =: s$ and $\Delta_k =\Lambda_k \cap \Delta$, $\abs{\Delta_k} = s_k$. Let $\cW = \cR( D^*  P_{\Delta})$.
 \end{itemize}  
For some $p\in (0,1]$, we will write $\boldsymbol{\kappa} = \br{\kappa_j}_{j=1}^r$ with $\kappa_l = \kappa_l(\mathbf{N}, \mathbf{s},p)$ and $\hat \kappa_l = \hat \kappa_l(\mathbf{N}, \mathbf{M}, \boldsymbol{\kappa})$ for each $l=1,\ldots, r$. Let $\kappa_{\min} = r\min_{l=1}^r \kappa_l$ and $\kappa_{\max} = r\max_{l=1}^r \kappa_l$. Note that $\kappa_{\min}\leq \sum_{l=1}^r \kappa_j \leq \kappa_{\max}$.
 
 We also define $T\in\cB(\ell^2(\bbN), \ell^2(\bbN))$ such that given any $x=(x_j)_{j\in\bbN}\in\ell^2(\bbN)$,
$$
Tx = y, \qquad y_j = \frac{x_j}{\max\br{1,\sqrt{r\kappa_k}}}, \quad j\in\Lambda_k, \quad k=1,\ldots, r.
$$ 
Observe that $T$ is an invertible operator, $\nm{T} \leq 1/\sqrt{ \kappa_{\min}}$ and $\nm{T^{-1}}\leq \sqrt{\kappa_{\max}}$.

\subsection{Outline of the proof}

To prove Theorem \ref{thm:main}, it suffices to show that conditions (i)-(v) of Proposition \ref{prop:dual_certificate} are satisfied with high probability whenever the sampling scheme is the multilevel sampling scheme $\Omega = \Omega_{\mathbf{M}, \mathbf{m}}$ described in Theorem \ref{thm:main}. To this end, we first remark that it has become customary in compressed sensing theory to deduce recovery statements for uniform sampling models by first proving statements based on some alternative sampling model which is easier to analyse. One approach, considered in \cite{candes2006robust, adcockbreaking, BAACHGSCS} is to consider a Bernoulli sampling model, defined below. 
\begin{definition}
Let $\mathbf{M} = \br{M_k}_{k=1}^r$ with $0=M_0<M_1<\cdots<M_r$. Let $\Omega^{\mathrm{Ber}}_{\mathbf{M}, \mathbf{m}} := \Omega_1^{\mathrm{Ber}}\cup \cdots \cup \Omega_r^{\mathrm{Ber}}$, where  $\Omega_k^{\mathrm{Ber}} := \br{\delta_j \cdot j : j\in \Gamma_k}$ with $\Gamma_k = \bbN \cap (M_{k-1}, \cdots, M_k]$, and $\delta_j$ are independent random variables such that $\bbP(\delta_j = 1) = m_k/(M_k - M_{k-1})$ and $\bbP(\delta_j = 0) = 1-m_k/(M_k - M_{k-1})$. 
The Bernoulli sampling set $\Omega_k^{\mathrm{Ber}} $ described will be denoted by $\Omega_k^{\mathrm{Ber}}  \sim \mathrm{Ber}(m_k/(M_k - M_{k-1}), \Gamma_k)$ and we say that $\Omega^{\mathrm{Ber}}_{\mathbf{M}, \mathbf{m}}= \Omega_1^{\mathrm{Ber}}\cup \cdots \cup \Omega_r^{\mathrm{Ber}}$ is a Bernoulli $(\mathbf{M}, \mathbf{m})$-sampling scheme.

\end{definition}
As explained in \cite[II.C]{candes2006robust} (see also \cite{foucart2013mathematical}), the probability that one of the conditions of Proposition \ref{prop:dual_certificate} fails for $\Omega = \Omega_{\mathbf{M}, \mathbf{m}}$ chosen uniformly at random is up to a constant bounded from above by the probability that one of these conditions fails under the Bernoulli $(\mathbf{M}, \mathbf{m})$-sampling scheme, $\Omega= \Omega^{\mathrm{Ber}}_{\mathbf{M}, \mathbf{m}}$.

So, to prove Theorem \ref{thm:main}, it suffices to show that conditions (i) to (v) of Proposition \ref{prop:dual_certificate} hold with probability exceeding $(1-\epsilon)$ with $\Omega = \Omega_{\mathbf{M}, \mathbf{m}}^{\mathrm{Ber}}$ satisfying the following assumption.

\begin{assumption}\label{dual_assumptions} 
Let $\cL = \left(\log( \epsilon^{-1}) +1\right)\, \log(\tilde M q^{-1} \kappa_{\max}^{1/2} \nm{ D D^*}_{\ell^\infty\to \ell^\infty}  )$ and 
 $$ 
B   = \max \br{\nm{DQ_{\cW}^\perp D^*}_{\ell^\infty \to \ell^\infty} , \sqrt{\nm{DQ_{\cW}D^*}_{\ell^\infty \to \ell^\infty} \max_{l=1}^r\br{\sum_{t=1}^r \nm{P_{\Lambda_l}D Q_{\cR(D^*P_\Delta)} D^* P_{\Lambda_t}}_{\ell^\infty \to \ell^\infty}}}}.
$$
Let $$
\tilde M = \min\br{i\in\bbN : 
\max_{j\geq i} \nm{ P_{[M]}  V  D^* e_j}_{\ell^2} \leq \frac{q}{8\sqrt{\kappa_{\max}}}, \qquad \max_{j\geq i} \nm{Q_{\cR(D^*P_{[N]})} D^* e_j}\leq \frac{\sqrt{5q}}{4}},
$$

Suppose that
\begin{itemize}
\item[(a)] $
\nm{ D  Q_\cW V^*  P_{[M]}^\perp  V Q_\cW D   }_{\ell^2 \to \ell^2} \leq \sqrt{\frac{\kappa_{\min}}{2\kappa_{\max}}}\log^{-1/2}(4q^{-1}\sqrt{\kappa_{\max}}\tilde M).
$
\item[(b)] $
\nm{    D  Q_{\cW}^\perp  V^*  P_{[M]}  V Q_\cW D^* T^{-1}}_{\ell^2 \to \ell^\infty} \leq \frac{1}{8\sqrt{\kappa_{\max}}}.
$
\item[(c)] For $k=1,\ldots, r$, $$
 q_k \gtrsim \sqrt{r}\, \cL  \, B \, \sum_{l=1}^r \mu_{\mathbf{N}, \mathbf{M}}^2(k,l) \, \kappa_l.
$$
\item[(d)]For $k=1,\ldots, r,$
$q_k \geq \cL \, \hat q_k$ with $\br{\hat q_k}_{k=1}^r$ satisfying
$$
1\gtrsim r\, B^2 \,  \sum_{k=1}^r (\hat q_k^{-1}-1)  
\,\mu_{\mathbf{N}, \mathbf{M}}^2(k,j) \, \hat  \kappa_k, \qquad j=1,\ldots, r.
$$
\end{itemize}
\end{assumption}
Note that this assumption is strictly weaker than the assumptions of Theorem \ref{thm:main}, and by showing that conditions (i) to (v) of Proposition \ref{prop:dual_certificate}, we will prove that the error bound (\ref{eq:error_bd_dual}) holds for one support set $\Delta$. So, by ensuring that the conditions of this assumption hold over all $\Delta$ sets which are $(\mathbf{N},\mathbf{s})$ sparse patterns (as required by Theorem \ref{thm:main}), we can conclude that (\ref{eq:error_bd_dual}) holds for any  $(\mathbf{N},\mathbf{s})$ sparse support sets.

Under this assumption, 
\begin{itemize}
\item \S \ref{sec:dual_constr} will show that  conditions (iii) to (v) of Proposition \ref{prop:dual_certificate} are satisfied with probability exceeding $1-5\epsilon/6$;
\item  \S \ref{sec:isometr_props} will show that conditions (i) and (ii) of Proposition \ref{prop:dual_certificate} are satisfied with probability exceeding $1-\epsilon/6$;
\item \S \ref{sec:prelim} will present some preliminary results for use in \S \ref{sec:dual_constr} and \S \ref{sec:isometr_props}.
\end{itemize}

\paragraph{The proof of Corollary \ref{cor:unconstr}}
Once we have shown that the conditions of Proposition \ref{prop:dual_certificate} hold with probability exceeding $1-\epsilon$, the conclusion of Proposition \ref{prop:dual_certificate_unconstr} automatically follows and we may conclude Corollary \ref{cor:unconstr}.

\section{Preliminary results}\label{sec:prelim}\label{sec:prelims}
In this section, we present four propositions which will be applied to show that the conditions of Proposition \ref{prop:dual_certificate} are satisfied with high probability under the conditions of Theorem \ref{thm:main} with a Bernoulli sampling scheme. The results in this section are derived using Talagrand's inequality and Bernstein inequalities (for random variables and random matrices) which we state below.

\begin{theorem}\label{Talagrand}\textnormal{ (Talagrand \cite[Cor.\ 
7.8]{Ledoux})}
There exists a number $K$ with the following property. Consider $n$
independent random variables $X_i$ valued in a measurable space
$\Omega $ and let $\mathcal{F}$ be a (countable) class  of measurable
functions on $\Omega.$ Let $Z$ be the random variable $Z = \sup_{f \in
  \mathcal{F}}\sum_{i \leq n} f(X_i)$ and define
$$
S = \sup_{f \in \mathcal{F}}\|f\|_{\infty}, \qquad V = \sup_{f \in
  \mathcal{F}} \mathbb{E}\left( \sum_{i \leq n}f(X_i)^2  
 \right).
$$
If $ \mathbb{E}(f(X_i)) = 0$ for all $f \in \mathcal{F}$ and $i\leq n$, then, 
for each $t > 0$, 
we have
$$
\mathbb{P}(|Z - \mathbb{E}(Z)| \geq t) \leq 3 \exp \left(
-\frac{1}{K}\frac{t}{S} \log\left( 1 + \frac{tS}{V+S\mathbb{E}(\overline{Z})}  
\right)\right),
$$
where $\overline{Z} = \sup_{f\in\mathcal{F}}|\sum_{i \leq n} f(X_i)|$.
\end{theorem}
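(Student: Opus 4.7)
The statement is Talagrand's concentration inequality for suprema of empirical processes in the form stated by Ledoux; it is cited as a tool rather than proved in this paper, so any proposal I give is really a reconstruction of the argument in Ledoux's monograph. The plan is to follow the entropy method (rather than Talagrand's original isoperimetric route), which gives essentially the cleanest derivation of a Bennett-type bound of the form $\exp(-c(t/S)\log(1+tS/(V+S\mathbb{E}\overline{Z})))$.

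First I would set up the moment generating function. For $\lambda>0$ define $\psi(\lambda)=\log\mathbb{E}[e^{\lambda(Z-\mathbb{E}Z)}]$, and try to bound $\psi(\lambda)$ in terms of the variance-like quantity $V+S\mathbb{E}\overline{Z}$ and the sup norm $S$. The key identity driving Herbst's argument is
\[
\lambda \psi'(\lambda) - \psi(\lambda) \;=\; \frac{\mathrm{Ent}(e^{\lambda Z})}{\mathbb{E}[e^{\lambda Z}]}-\lambda \mathbb{E}Z,
\]
so a modified log-Sobolev inequality controlling $\mathrm{Ent}(e^{\lambda Z})$ yields a differential inequality for $\psi$ from which an explicit bound follows by integration.

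Next I would establish the modified log-Sobolev inequality. By the tensorization property of entropy for product measures, $\mathrm{Ent}(e^{\lambda Z}) \le \sum_i \mathbb{E}[\mathrm{Ent}_i(e^{\lambda Z})]$, where $\mathrm{Ent}_i$ is the entropy with respect to the law of $X_i$ alone. The coordinate-wise entropy is then estimated via the variational formula $\mathrm{Ent}_i(e^{\lambda Z}) \le \mathbb{E}_i[\phi(\lambda(Z-Z_i'))e^{\lambda Z}]$ with $\phi(x)=e^x-x-1$, where $Z_i'$ is $Z$ with $X_i$ replaced by an independent copy; for the supremum structure $Z=\sup_f\sum_j f(X_j)$ one shows $|Z-Z_i'|\le 2S$ and that $\sum_i (Z-Z_i')^2$ can be dominated, in expectation, by $V+S\mathbb{E}\overline{Z}$ using the fact that at the maximizer $f^*$ the contribution of coordinate $i$ is $f^*(X_i)-f^*(X_i')$. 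Assembling, one obtains a bound of the form $\mathrm{Ent}(e^{\lambda Z}) \le C\lambda S\,\mathbb{E}[e^{\lambda Z}]\,(\psi'(\lambda)+\mathbb{E}Z) + C\lambda^2(V+S\mathbb{E}\overline{Z})\mathbb{E}[e^{\lambda Z}]$.

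The final step is Herbst's integration: substitute the above into the differential identity for $\psi$ and solve to get $\psi(\lambda) \lesssim (V+S\mathbb{E}\overline{Z})\,h(\lambda S)/S^2$ for an explicit convex function $h$ whose Legendre dual is $x\mapsto (x/S)\log(1+xS/(V+S\mathbb{E}\overline{Z}))$. Then Chernoff's bound $\mathbb{P}(Z-\mathbb{E}Z\ge t)\le \exp(-\sup_\lambda(\lambda t-\psi(\lambda)))$ optimized over $\lambda$ produces the stated tail, and a symmetric argument (or working with $-Z$ through the same scheme) yields the two-sided bound with the factor $3$. The hardest step, and the one where the logarithmic factor in the exponent really originates, is Step 2: the modified log-Sobolev inequality must be tuned so that the coordinate-wise entropy estimate involves both a ``variance'' term $V+S\mathbb{E}\overline{Z}$ and a boundedness term $S$, and this asymmetry is precisely what turns a Gaussian-type tail into the Bennett/Bernstein form with $\log(1+\cdot)$.
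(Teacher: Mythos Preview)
Your assessment is correct: the paper does not prove this statement but simply quotes it from Ledoux's monograph as a black-box tool (it is one of the three probabilistic inequalities stated in \S\ref{sec:prelims} and invoked later in Proposition~\ref{prop1}). So there is nothing in the paper to compare against, and your reconstruction via the entropy method is an appropriate substitute. The outline you give --- tensorized modified log-Sobolev inequality for the exponential of a supremum, Herbst integration of the resulting differential inequality for the log-MGF, then Chernoff --- is indeed the route Ledoux follows in the cited chapter, and it is the cleanest modern derivation of the Bennett-type tail with the $\log(1+\cdot)$ factor. One small caveat: the lower-tail bound (needed for the two-sided inequality with the factor $3$) is not quite ``a symmetric argument with $-Z$'', since $-\sup_f$ is not a supremum; in Ledoux's treatment the lower tail is handled by a separate (and somewhat easier) self-bounding argument, so you may want to flag that asymmetry rather than sweep it into Step~3.
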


\begin{theorem}[Bernstein inequality for random variables \cite{foucart2013mathematical}]\label{thm:bernstein}
Let $Z_1,\ldots, Z_M \in\bbC$ be independent random variables with zero mean such that
$
\abs{Z_j} \leq K
$ almost surely for  all $l=1,\ldots, M$ and some constant $K>0$. Assume also that $\sum_{j=1}^M\bbE\abs{Z_j}^2 \leq \sigma^2$ for some constant $\sigma^2 >0$. Then for $t>0$,
$$
\bbP\left(\abs{\sum_{j=1}^M Z_j} \geq t\right) \leq 4\exp\left(-\frac{t^2/4}{\sigma^2 + Kt/(3\sqrt{2})}\right).
$$
If $Z_1,\ldots, Z_M \in\bbR$ are real instead of complex random variables, then 
$$
\bbP\left(\abs{\sum_{j=1}^M Z_j} \geq t\right) \leq 2\exp\left(-\frac{t^2/2}{\sigma^2 + Kt/3}\right).
$$
\end{theorem}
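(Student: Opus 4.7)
The plan is to prove the real case via the classical Chernoff--Cramér argument and then reduce the complex case to it. For real $Z_j$ I would begin with the Markov inequality applied to the moment generating function: for $\lambda>0$,
\[
\bbP\!\left(\sum_{j=1}^M Z_j \geq t\right) \leq e^{-\lambda t}\prod_{j=1}^{M}\bbE\bigl(e^{\lambda Z_j}\bigr).
\]
To control each factor I would Taylor-expand $e^{\lambda Z_j}$ and exploit the zero-mean hypothesis together with the boundedness $\abs{Z_j}\leq K$, which yields $\abs{\bbE(Z_j^k)}\leq K^{k-2}\bbE(Z_j^2)$ for every $k\geq 2$. Summing the resulting series gives
\[
\bbE\bigl(e^{\lambda Z_j}\bigr)\leq \exp\!\left(\frac{\bbE(Z_j^2)}{K^2}\bigl(e^{\lambda K}-1-\lambda K\bigr)\right).
\]
Taking the product over $j$ and invoking $\sum_j\bbE Z_j^2\leq\sigma^2$ produces Bennett's bound
\[
\bbP\!\left(\sum_{j=1}^M Z_j \geq t\right) \leq \exp\!\left(\tfrac{\sigma^2}{K^2}\bigl(e^{\lambda K}-1-\lambda K\bigr)-\lambda t\right).
\]

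The next step is optimisation in $\lambda$. The canonical choice $\lambda=K^{-1}\log(1+Kt/\sigma^2)$ yields the Bennett form $\exp\!\bigl(-\tfrac{\sigma^2}{K^2}h(Kt/\sigma^2)\bigr)$ with $h(x)=(1+x)\log(1+x)-x$, and the elementary convex inequality $h(x)\geq x^2/(2+2x/3)$ converts this into the desired Bernstein tail $\exp\bigl(-t^2/(2(\sigma^2+Kt/3))\bigr)$. A symmetry argument (replacing $Z_j$ by $-Z_j$) together with a union bound then delivers the two-sided real estimate with the stated prefactor $2$.

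For the complex case, write $Z_j=X_j+iY_j$. The hypothesis $\abs{Z_j}\leq K$ forces $\abs{X_j},\abs{Y_j}\leq K$, the zero-mean and independence properties are inherited, and $\abs{Z_j}^2=X_j^2+Y_j^2$ yields $\sum_j\bbE X_j^2\leq\sigma^2$ and $\sum_j\bbE Y_j^2\leq\sigma^2$. Since $\abs{\sum_j Z_j}^2=\abs{\sum_j X_j}^2+\abs{\sum_j Y_j}^2$, the event $\{\abs{\sum_j Z_j}\geq t\}$ forces at least one of $\abs{\sum_j X_j}$, $\abs{\sum_j Y_j}$ to exceed $t/\sqrt{2}$. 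Applying the real bound to each component with threshold $t/\sqrt{2}$ and taking a union bound produces the factor $4$ and the constant $3\sqrt{2}$ in the denominator of the stated complex estimate. The main technical subtlety is the passage from Bennett's inequality to the clean Bernstein form through the convex lower bound on $h$; the remaining arithmetic, independence and union bound steps are routine once the moment generating function estimate is in hand.
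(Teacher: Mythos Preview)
Your argument is correct and follows the standard route to Bernstein's inequality: Chernoff's method yields Bennett's bound, the convex inequality $h(x)\geq x^2/(2+2x/3)$ converts it to Bernstein form, and the complex case is handled by splitting into real and imaginary parts and applying a union bound at threshold $t/\sqrt{2}$, which accounts exactly for the constants $4$ and $3\sqrt{2}$ in the statement. Note, however, that the paper does not actually prove this theorem; it is quoted as a known result from \cite{foucart2013mathematical} and used as a black box in the later probabilistic estimates, so there is no paper proof to compare against.
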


\begin{theorem}[Bernstein inequality for rectangular matrices \cite{tropp2012user}]\label{thm:matrixBernstein}
Let $Z_1,\ldots, Z_M \in\bbC^{d_1\times d_2}$ be independent random matrices  such that $\bbE Z_j = 0$ for each  $j=1,\ldots, M$ and $\nm{Z_j}_{2\to 2} \leq K$ almost surely for each  $j=1,\ldots, M$ and some constant $K>0$. Let $$\sigma^2 := \max \br{\nm{\sum_{j=1}^M \bbE(Z_j Z_j^*)}_{\ell^2\to \ell^2}, \nm{\sum_{j=1}^M \bbE(Z_j^* Z_j)}_{\ell^2 \to \ell^2}}.$$
Then, for $t>0$,
$$
\bbP\left(\nm{\sum_{j=1}^M Z_j}_{\ell^2\to \ell^2}\geq t\right) \leq 2(d_1+d_2)\exp\left(\frac{-t^2/2}{\sigma^2 + Kt/3}\right)
$$
\end{theorem}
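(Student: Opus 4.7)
The plan is to reduce the rectangular case to the Hermitian case via the Hermitian dilation, and then to prove the Hermitian matrix Bernstein bound by adapting Bernstein's classical scalar argument (Chernoff bound on the moment generating function) to the matrix setting using the Laplace transform method and Lieb's concavity theorem.

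First, I would set up the dilation. Given $Z\in\bbC^{d_1\times d_2}$, define
$$
\cH(Z) := \begin{pmatrix} 0 & Z \\ Z^* & 0 \end{pmatrix} \in \bbC^{(d_1+d_2)\times (d_1+d_2)},
$$
which is Hermitian with $\lambda_{\max}(\cH(Z)) = \nm{Z}_{\ell^2\to \ell^2}$, and
$
\cH(Z)^2 = \diag(ZZ^*, Z^*Z).
$
Setting $Y_j = \cH(Z_j)$, the $Y_j$'s are independent Hermitian random matrices of dimension $d:=d_1+d_2$ with $\bbE Y_j = 0$, $\nm{Y_j}\leq K$ a.s., and
$$
\nm{\sum_{j=1}^M \bbE Y_j^2} = \max\br{ \nm{\sum_{j=1}^M \bbE(Z_jZ_j^*)}, \nm{\sum_{j=1}^M \bbE(Z_j^*Z_j)}} = \sigma^2.
$$
Since $\nm{\sum_j Z_j}_{\ell^2\to\ell^2} = \lambda_{\max}(\sum_j Y_j)$, it suffices to prove a one-sided deviation bound
$$
\bbP\bigl(\lambda_{\max}(S)\geq t\bigr) \leq d\exp\!\left(\frac{-t^2/2}{\sigma^2 + Kt/3}\right), \qquad S:=\sum_{j=1}^M Y_j,
$$
and then apply the same to $-S$, giving the factor $2$ and the dimension $d_1+d_2$ stated in the theorem.

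Next, I would prove this Hermitian bound by the Laplace transform method. For any $\theta>0$, since $e^{\theta \lambda_{\max}(S)} \leq \mathrm{tr}\, e^{\theta S}$, Markov's inequality gives
$$
\bbP(\lambda_{\max}(S)\geq t) \leq e^{-\theta t}\, \bbE\,\mathrm{tr}\, e^{\theta S}.
$$
The central step is the \emph{master tail bound}: by Lieb's concavity theorem, the map $A\mapsto \mathrm{tr}\,\exp(H + \log A)$ is concave on positive-definite $A$, and a standard conditioning/Jensen argument (Tropp's deduction) yields
$$
\bbE\,\mathrm{tr}\,\exp\!\Bigl(\sum_{j=1}^M \theta Y_j\Bigr) \leq \mathrm{tr}\,\exp\!\Bigl(\sum_{j=1}^M \log \bbE e^{\theta Y_j}\Bigr).
$$
This reduces the problem to bounding each individual matrix cumulant $\log \bbE e^{\theta Y_j}$.

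The final ingredient is a matrix analogue of Bernstein's scalar cumulant estimate. Because $\bbE Y_j = 0$ and $\nm{Y_j}\leq K$, one expands
$$
e^{\theta Y_j} = I + \theta Y_j + \sum_{k\geq 2}\frac{\theta^k Y_j^k}{k!},
$$
and uses the operator-norm inequality $Y_j^k \preceq K^{k-2} Y_j^2$ (valid in the semidefinite order because $-K I \preceq Y_j\preceq KI$) to obtain
$$
\bbE e^{\theta Y_j} \preceq I + g(\theta)\,\bbE Y_j^2, \qquad g(\theta) := \frac{e^{\theta K}-1-\theta K}{K^2}.
$$
Since $\log(1+x)\leq x$ for $x\geq 0$, operator monotonicity of $\log$ gives $\log\bbE e^{\theta Y_j}\preceq g(\theta)\,\bbE Y_j^2$, so summing and applying $\lambda_{\max}(\exp(\cdot))$ to the master bound yields
$$
\bbE\,\mathrm{tr}\,e^{\theta S} \leq d\cdot \exp\!\bigl(g(\theta)\,\sigma^2\bigr).
$$
Plugging this into the Markov step and optimizing over $\theta$, one uses the elementary estimate $g(\theta) \leq \theta^2/(2(1-K\theta/3))$ for $0<\theta<3/K$; taking $\theta = t/(\sigma^2 + Kt/3)$ gives
$$
\bbP(\lambda_{\max}(S)\geq t) \leq d\,\exp\!\left(\frac{-t^2/2}{\sigma^2 + Kt/3}\right),
$$
and a union bound over $\pm S$ completes the proof.

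The principal obstacle is the matrix subadditivity step: the scalar identity $\bbE \exp(\sum \theta X_j) = \prod \bbE \exp(\theta X_j)$ fails in the noncommutative setting, and its substitute requires Lieb's concavity theorem, which is the only deep ingredient. The bound $Y_j^k \preceq K^{k-2} Y_j^2$ used to control the matrix MGF, the operator monotonicity of the logarithm, and the dilation trick are all routine once Lieb's inequality is in place.
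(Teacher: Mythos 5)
The paper does not prove this statement; it is imported verbatim from the cited reference \cite{tropp2012user}, and your proposal is precisely Tropp's argument (Hermitian dilation, the Laplace-transform master bound via Lieb's concavity theorem, the cumulant estimate $Y_j^k \preceq K^{k-2}Y_j^2$, and the choice $\theta = t/(\sigma^2+Kt/3)$), all steps of which are correct. The only minor remark is that the union bound over $\pm S$ is unnecessary, since the dilation has symmetric spectrum so $\lambda_{\max}(\cH(W))=\nm{W}_{\ell^2\to\ell^2}$, which would give the sharper constant $d_1+d_2$; your factor $2(d_1+d_2)$ is simply a harmless relaxation that matches the constant as stated in the paper.
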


\begin{proposition}\label{prop1}
Let $g\in\cW$ and let $\alpha>0$ and $\gamma \in [0,1]$.   Suppose that
\be{\label{eq:prop1_assp}
\nm{T D  (Q_{\cW}  V^*  P_{[M]}  V  Q_{\cW} -    Q_{\cW}  ) D^*T^{-1}}_{\ell^2\to \ell^2} \leq \alpha/2.
}
Then
$$
\bbP\left(\nm{ T  D ( Q_{\cW}  V^* (q_1^{-1}  P_{\Omega_1}\oplus \cdots \oplus q_r^{-1}  P_{\Omega_r})  V  Q_{\cW} -  Q_{\cW}) g}_{\ell^2} \geq \alpha \nm{T  D  g}_{\ell^2}\right) \leq \gamma
$$
provided that
\be{\label{eq:prop1_cond1}
  \sqrt{ r}\tilde B \log\left(\frac{3}{\gamma}\right) \sum_{l=1}^r \mu^2_{\mathbf{N},\mathbf{M}}(k,l) \kappa_l  \lesssim  \alpha, \qquad k=1,\ldots, r,
} and 
\be{\label{eq:prop1_cond2}
r \tilde B^2 \log\left(\frac{3}{\gamma}\right)  \sum_{k=1}^r (  q_k^{-1}-1) \, \mu_{\mathbf{N},\mathbf{M}}^2(k,l) \, \hat \kappa_k
\lesssim \alpha^2, \qquad l=1,\ldots, r,
}
where
$$\tilde B^2 =\nm{DQ_\cW D^*}_{\ell^\infty \to \ell^\infty} \max_{l=1}^r  \sum_{t=1}^r \nm{P_{\Lambda_l}DQ_\cW D^* P_{\Lambda_t} }_{\ell^\infty \to \ell^\infty}.$$

\end{proposition}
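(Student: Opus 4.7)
My plan is to split the quantity of interest into a zero-mean random sum plus a deterministic residual that is controlled by hypothesis~(\ref{eq:prop1_assp}), and then to apply a vector-valued concentration inequality to the random part. Since $g\in\cW$ gives $Q_\cW g=g$, and since $D$ is an isometry so $D^*D=I$, we may write $g=D^*T^{-1}(TDg)$. Setting $P_{\mathbf{q}}:=q_1^{-1}P_{\Omega_1}\oplus\cdots\oplus q_r^{-1}P_{\Omega_r}$ and inserting $\pm\,TDQ_\cW V^*P_{[M]}VQ_\cW g$, we decompose
\begin{equation*}
TD(Q_\cW V^*P_{\mathbf{q}}VQ_\cW-Q_\cW)g
= TDQ_\cW V^*(P_{\mathbf{q}}-P_{[M]})Vg
+ TD(Q_\cW V^*P_{[M]}VQ_\cW-Q_\cW)D^*T^{-1}\cdot TDg.
\end{equation*}
By~(\ref{eq:prop1_assp}) the second (deterministic) summand has $\ell^2$-norm at most $(\alpha/2)\nm{TDg}_{\ell^2}$, so it suffices to show that the first, random summand is also $\le(\alpha/2)\nm{TDg}_{\ell^2}$ with probability at least $1-\gamma$.

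\textbf{Concentration setup.}
Unfolding $P_{\mathbf{q}}-P_{[M]}=\sum_{k=1}^r\sum_{j\in\Gamma_k}(\delta_j/q_k-1)e_je_j^*$, I rewrite the random part as $\sum_{k,j}Z_j$ with independent mean-zero $\ell^2$-valued summands
\begin{equation*}
Z_j=\Bigl(\frac{\delta_j}{q_k}-1\Bigr)\,TDQ_\cW V^*e_j\cdot\ip{V^*e_j}{g},\qquad j\in\Gamma_k.
\end{equation*}
I would then apply Talagrand's inequality (Theorem~\ref{Talagrand}) through the duality $\nm{\sum Z_j}_{\ell^2}=\sup_{\nm{\xi}_{\ell^2}\le1}\Re\ip{\xi}{\sum Z_j}$, exactly as in the analogous step for the orthonormal case in~\cite{adcockbreaking}. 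This reduces the argument to estimating the uniform bound $S_0:=\sup_{k,j}\nm{Z_j}_{\ell^2}$ and a variance proxy $V_0:=\sup_{\nm{\xi}\le1}\sum_{k,j}\bbE|\ip{\xi}{Z_j}|^2$, and a standard symmetrisation argument controls the required $\bbE\nm{\sum Z_j}_{\ell^2}$ in terms of $S_0$ and $\sqrt{V_0}$. Calibrating these two quantities against $(\alpha/2)\nm{TDg}_{\ell^2}$ and $\log(3/\gamma)$ will produce the two conditions~(\ref{eq:prop1_cond1}) and~(\ref{eq:prop1_cond2}) respectively.

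\textbf{Estimating $S_0$ and $V_0$, and the main obstacle.}
For the uniform bound, when $j\in\Gamma_k$ I would split $\nm{TDQ_\cW V^*e_j}_{\ell^2}^2\le\sum_{l=1}^r(r\max\{1,\kappa_l\})^{-1}\nm{P_{\Lambda_l}DQ_\cW V^*e_j}_{\ell^2}^2$, bound $\nm{P_{\Lambda_l}DQ_\cW V^*e_j}_{\ell^\infty}$ via the local coherence $\mu_{\mathbf{N},\mathbf{M}}(k,l)$, and then promote this to an $\ell^2$-bound weighted by $\sqrt{\kappa_l}$ using Corollary~\ref{cor:loc_sp} (since $DQ_\cW V^*e_j=DD^*P_\Delta z$ for a suitable $z\in\Sigma_{\mathbf{s},\mathbf{N}}$-compatible vector). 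The scalar $|\ip{V^*e_j}{g}|=|\ip{T^{-1}DQ_\cW V^*e_j}{TDg}|$ is handled similarly, with the $\ell^\infty\to\ell^\infty$ factor $\nm{DQ_\cW D^*}_{\ell^\infty\to\ell^\infty}$ entering naturally. After collecting terms, this gives $\nm{Z_j}_{\ell^2}\lesssim q_k^{-1}\tilde B\sqrt{r^{-1}\sum_l\mu^2_{\mathbf{N},\mathbf{M}}(k,l)\kappa_l}\cdot\nm{TDg}_{\ell^2}$, and the $\sqrt{r}$ in Talagrand's sup-norm term absorbs the $r^{-1/2}$ to yield~(\ref{eq:prop1_cond1}). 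For the variance, I would use $\sum_{j\in\Gamma_k}|\ip{V^*e_j}{g}|^2=\nm{P_{\Gamma_k}Vg}^2$ together with the observation that the weights of $T$ imply $\nm{P_{\Lambda_l}Dg}^2\le r\kappa_l\nm{TDg}^2$, so that $g/(\sqrt{r}\,\nm{TDg})$ lies in the admissible set $\Theta$ of Definition~\ref{def:rel_sparsity2}; this yields $\nm{P_{\Gamma_k}Vg}^2\le r\hat\kappa_k\nm{TDg}^2$. Combined with the coherence bound on the other factor, $V_0$ produces the $r\tilde B^2\sum_k(q_k^{-1}-1)\mu^2_{\mathbf{N},\mathbf{M}}(k,l)\hat\kappa_k$ structure of~(\ref{eq:prop1_cond2}). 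The main obstacle is the careful bookkeeping of the weights in $T$ against the localized (level) sparsities so that the factor $\tilde B^2 = \nm{DQ_\cW D^*}_{\ell^\infty\to\ell^\infty}\cdot\max_{l}\sum_t\nm{P_{\Lambda_l}DQ_\cW D^*P_{\Lambda_t}}_{\ell^\infty\to\ell^\infty}$ emerges precisely in the stated form rather than a looser spectral substitute; in particular, one must remain in the $\ell^\infty\to\ell^\infty$ framework when invoking Corollary~\ref{cor:loc_sp} (and Lemma~\ref{lem:loc_sp_2}~(iii) to pass from $\ell^2$- to $\ell^1$-normalised arguments inside the relative-sparsity step), so that $\kappa_l$ rather than the ambient dimension drives the final bound.
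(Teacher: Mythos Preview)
Your global architecture is the paper's: split off the deterministic bias via~(\ref{eq:prop1_assp}), then apply Talagrand (Theorem~\ref{Talagrand}) to the mean-zero sum $\sum_j Z_j$ through $\ell^2$-duality, bounding the sup $S_0$, the variance $V_0$, and $\bbE\nm{\sum Z_j}$. Your variance sketch is also on track: $\nm{TDg}_{\ell^2}=1$ forces $\nm{P_{\Lambda_l}Dg}_{\ell^2}^2\le r\kappa_l$, hence $\nm{P_{\Gamma_k}Vg}^2\le r\hat\kappa_k$ by Definition~\ref{def:rel_sparsity2}, and combining this with a level decomposition of $\nm{TDQ_\cW V^*e_j}$ is exactly how~(\ref{eq:prop1_cond2}) with the factor $\tilde B^2$ emerges. (A minor point: the paper does not symmetrise for $\bbE\nm{\sum Z_j}$; it simply uses Jensen on $\bbE\nm{\sum Z_j}^2=\sum_j\bbE\nm{Z_j}^2$, which coincides with the variance bound.)

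The genuine gap is in your $S_0$ estimate. You propose to bound $\nm{P_{\Lambda_l}DQ_\cW V^*e_j}_{\ell^\infty}$ by the level-resolved coherence $\mu_{\mathbf{N},\mathbf{M}}(k,l)$, but this fails: $Q_\cW$ does not respect the level decomposition, so $P_{\Lambda_l}DQ_\cW V^*e_j=P_{\Lambda_l}DQ_\cW D^*\,DV^*e_j$ depends on \emph{all} entries of $DV^*e_j$, not only those in $\Lambda_l$. The best one gets along this route is $\nm{P_{\Lambda_l}DQ_\cW V^*e_j}_{\ell^\infty}\le\nm{DQ_\cW D^*}_{\ell^\infty\to\ell^\infty}\,\mu(P_{\Gamma_k}VD^*)$, the same for every $l$. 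Consequently your square-root form $\sqrt{r^{-1}\sum_l\mu^2_{\mathbf{N},\mathbf{M}}(k,l)\kappa_l}$ cannot be reached, and in any case it does not match the \emph{linear} sum $\sqrt{r}\sum_l\mu^2_{\mathbf{N},\mathbf{M}}(k,l)\kappa_l$ that~(\ref{eq:prop1_cond1}) demands; the remark about ``the $\sqrt{r}$ in Talagrand's sup-norm term absorbing the $r^{-1/2}$'' has no basis in Theorem~\ref{Talagrand}.

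The paper's move is to treat the two factors of each rank-one summand \emph{asymmetrically}, so that the two halves of $\mu_{\mathbf{N},\mathbf{M}}^2(k,l)=\mu(P_{\Gamma_k}VD^*P_{\Lambda_l})\cdot\mu(P_{\Gamma_k}VD^*)$ come from different places. The \emph{scalar} factor carries the level resolution: $\abs{\ip{Vg}{e_j}}=\abs{\ip{VD^*Dg}{e_j}}\le\sum_l\mu(P_{\Gamma_k}VD^*P_{\Lambda_l})\,\nm{P_{\Lambda_l}Dg}_{\ell^1}$, and Corollary~\ref{cor:loc_sp}(4) turns $\nm{P_{\Lambda_l}Dg}_{\ell^2}\le\sqrt{r\kappa_l}$ into $\nm{P_{\Lambda_l}Dg}_{\ell^1}\le\sqrt{r}\,\kappa_l$, giving $\abs{\ip{Vg}{e_j}}\le\sqrt{r}\sum_l\mu(P_{\Gamma_k}VD^*P_{\Lambda_l})\kappa_l$. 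The \emph{vector} factor is bounded without level resolution: since $\nm{P_{\Lambda_l}DQ_\cW D^*}_{\ell^\infty\to\ell^2}\le\sqrt{\kappa_l}\,\nm{DQ_\cW D^*}_{\ell^\infty\to\ell^\infty}$ (Corollary~\ref{cor:loc_sp}(3)), one obtains $\nm{TDQ_\cW V^*e_j}_{\ell^2}\le\nm{DQ_\cW D^*}_{\ell^\infty\to\ell^\infty}\,\mu(P_{\Gamma_k}VD^*)$. Multiplying the two then produces $S_0$ of order $\sqrt{r}\,\nm{DQ_\cW D^*}_{\ell^\infty\to\ell^\infty}\sum_l\mu_{\mathbf{N},\mathbf{M}}^2(k,l)\kappa_l$, linear in the sum as required.
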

\begin{proof}
Without loss of generality, assume that $\nm{T D g}_{\ell^2} =1$.
Let $\br{\delta_j}_{j=1}^M$ be random Bernoulli variables such that $\bbP(\delta_j = 1) = \tilde q_j$ where $\tilde q_j = q_k$ for $j=M_{k-1}+1,\ldots, M_k$. Then,
\eas{
&   T  D (  Q_{\cW}  V^* (q_1^{-1}  P_{\Omega_1}\oplus \cdots \oplus q_r^{-1}  P_{\Omega_r})  V  Q_{\cW}  - Q_{\cW} ) g\\
&= \sum_{j=1}^M (\tilde q_j^{-1} \delta_j -1)  T  D  Q_{\cW}  V^* (e_j \otimes \overline{e}_j)  V   Q_{\cW}g +  T D  (Q_{\cW}  V^*  P_{[M]}  V  Q_{\cW} -    Q_{\cW}  ) g,
}
where $\otimes$ is the Kronecker product.
Since $D^*D=I$ and since (\ref{eq:prop1_assp}) holds by assumption,
\eas{
&\nm{T D  (Q_{\cW}  V^*  P_{[M]}  V  Q_{\cW} -    Q_{\cW}  ) g}_{\ell^2}
= \nm{T D  (Q_{\cW}  V^*  P_{[M]}  V  Q_{\cW} -    Q_{\cW}  ) D^*T^{-1}TD g}_{\ell^2}\\
&\leq \nm{ T D  (Q_{\cW}  V^*  P_{[M]}  V  Q_{\cW} -    Q_{\cW}  ) D^*T^{-1}}_{\ell^2 \to \ell^2}
\leq \alpha/2.
}
So, it suffices to show that 
$$
\bbP\left( \nm{\sum_{j=1}^M Y_j}_{\ell^2} \geq \alpha/2 \right)\leq \gamma,
$$
where for each $j=1,\ldots, M$,  $Y_j =(\tilde q_j^{-1} \delta_j -1)      D  Q_{\cW}  V^* (e_j \otimes \overline{e}_j)  V  g$. We will aim to apply Talagrand's inequality (Theorem \ref{Talagrand}) to obtain this probability bound.

Let $\cG$ be a countable set of vectors in the unit ball of $\ell^2(\bbN)$ and for each $\zeta \in\cG$, define the linear functionals $\hat \zeta_1, \hat \zeta_2: \ell^2(\bbN)\to \bbR$ by
$$
\hat \zeta_1(y) = \Re \ip{y}{\zeta}, \quad \hat \zeta_2(y) = - \Re \ip{y}{\zeta}, \qquad \forall y\in\ell^2(\bbN).
$$
Let $\cF = \br{\hat \zeta_1, \hat \zeta_2: \zeta \in\cG}$. Then, $Z := \sup_{f\in\cF}\sum_{j=1}^M f(Y_j) = \nm{\sum_{j=1}^M Y_j}_{\ell^2}$.

\begin{itemize}
\item To bound $\cS = \max_{j}\nm{Y_j}_{\ell^2}$: \eas{
&\nm{Y_j}_{\ell^2} \leq \tilde q_j^{-1} \nm{T D  Q_{\cW}  V^* (e_j \otimes \overline{e}_j)  V  g}_{\ell^2}
=\tilde q_j^{-1} \sup_{\nm{x}_{\ell^2}=1} \abs{\ip{ TD  Q_{\cW}  V^* (e_j \otimes \overline{e}_j)  V  g}{x}}\\
&= \tilde q_j^{-1} \sup_{\nm{x}_{\ell^2}=1} \abs{\ip{  V  g}{e_j} \ip{TD  Q_{\cW}  V^* e_j}{x}}
}
For each $j\in\Gamma_k$, 
$$\abs{\ip{  V  g}{e_j}} = \abs{\ip{  VD^*D  g}{e_j}} \leq \sum_{l=1}^r \abs{\ip{VD^*P_{\Lambda_l}Dg}{e_j}} \leq \sum_{l=1}^r \mu(P_{\Gamma_k} VD^* P_{\Lambda_l}) \nm{P_{\Lambda_l}Dg}_{\ell^1}.
$$
Observe that $\nm{TDg}_{\ell^2}=1$ implies $\nm{P_{\Lambda_l}Dg}_{\ell^2}\leq \sqrt{r \kappa_l}$ for each $l=1,\ldots, r$. Furthermore, by (4) of Corollary \ref{cor:loc_sp}, this implies that $\nm{P_{\Lambda_l} D g}_{\ell^1}\leq \kappa_l\sqrt{r}$. So, it follows that $$\abs{\ip{  V  g}{e_j}}  \leq \sqrt{r}\sum_{l=1}^r \mu(P_{\Gamma_k} VD^* P_{\Lambda_l}) \kappa_l.$$ Also, if $\nm{x}_{\ell^2}=1$,
\spl{\label{eq:prop1_bd}
&\abs{\ip{TD  Q_{\cW}  V^* e_j}{x}}^2 \leq \sum_{l=1}^r \nm{P_{\Lambda_l}TD  Q_{\cW}  V^* e_j}_{\ell^2}^2
= \sum_{l=1}^r \frac{1}{r\kappa_l} \nm{P_{\Lambda_l} D  Q_{\cW}  D^*DV^* e_j}_{\ell^2}^2
\\
&\leq \sum_{l=1}^r \frac{1}{r\kappa_l} \nm{P_{\Lambda_l} D  Q_{\cW}  D^*}_{\ell^\infty \to \ell^2}^2 \nm{DV^* e_j}_{\ell^\infty}^2 \leq \nm{D  Q_{\cW}  D^*}_{\ell^\infty \to \ell^\infty}^2\mu^2(P_{\Gamma_k}VD),
}
where the last inequality follows because  $\nm{P_{\Lambda_l}D f}_{\ell^2}^2 \leq \kappa_l$ for all $f\in\cW$ with $\nm{Df}_{\ell^\infty}\leq 1$. Therefore,
$$
\max_{j}\nm{Y_j}_{\ell^2} \leq \max_{k=1}^r \nm{D  Q_{\cW}  D^*}_{\ell^\infty \to \ell^\infty} \sqrt{r}\sum_{l=1}^r \mu^2_{\mathbf{N},\mathbf{M}}(k,l) \kappa_l.
$$

\item To bound $\cV= \sup_{f\in\cF} \bbE \sum_{j=1}^M f(Y_j)^2 = \sup_{\zeta \in\cG} \bbE \sum_{j=1}^M \abs{\ip{\zeta}{Y_j}}^2$:
\spl{\label{eq:prop1_V}
\cV &\leq \sup_{\zeta \in\cG} \sum_{j=1}^M (\tilde q_j^{-1}-1) \abs{\ip{e_j}{Vg}}^2 \abs{\ip{V^* e_j}{Q_\cW D^* T \zeta}}^2\\
&\leq   \sup_{\zeta \in\cG} \sum_{k=1}^r (  q_k^{-1}-1) \nm{P_{\Gamma_k}Vg}_{\ell^2}^2 \max_{j\in\Gamma_k} \abs{\ip{V^* e_j}{Q_\cW D^* T \zeta}}^2\\
&\leq  \sum_{k=1}^r (  q_k^{-1}-1) \nm{P_{\Gamma_k}Vg}_{\ell^2}^2  \max_{j\in\Gamma_k} \nm{TDQ_\cW V^* e_j}_{\ell^2}^2 .
}
By combining $\nm{P_{\Lambda_l}D g}_{\ell^2} \leq \sqrt{r \kappa_l}$ (which follows from $\nm{TDg}_{\ell^2}=1$) with the definition of $\hat \kappa_k$, we obtain $\nm{P_{\Gamma_k}V g}_{\ell^2}^2 \leq r\hat \kappa_k$.  Also, for each $j\in\Gamma_k$,
\eas{
\nm{TDQ_\cW V^* e_j}_{\ell^2}^2 &= \sum_{t=1}^r \frac{1}{r\kappa_t}\nm{P_{\Lambda_t}DQ_\cW V^* e_j}_{  \ell^2}^2 \\
&= \sum_{t=1}^r \frac{1}{r\kappa_t} \nm{P_{\Lambda_t}DQ_\cW D^*}_{\ell^\infty \to \ell^2} \nm{DV^* e_j}_{\ell^\infty} \nm{P_{\Lambda_t}DQ_\cW V^* e_j}_{\ell^2}\\
&\leq \sum_{t=1}^r \frac{1}{r\sqrt{\kappa_t}} \nm{DQ_\cW D^*}_{\ell^\infty \to \ell^\infty}  \nm{DV^* e_j}_{\ell^\infty} \sum_{l=1}^r \nm{P_{\Lambda_t}DQ_\cW D^*{\Lambda_l}}_{\ell^\infty \to \ell^2} \nm{P_{\Lambda_l} V^* e_j}_{\ell^\infty}\\
&\leq \sum_{t=1}^r \frac{1}{r } \nm{DQ_\cW D^*}_{\ell^\infty \to \ell^\infty}    \sum_{l=1}^r \nm{P_{\Lambda_t}DQ_\cW D^*{\Lambda_l}}_{\ell^\infty \to \ell^\infty} \mu_{\mathbf{N},\mathbf{M}}(k,l)
}
where we have used, from the definition of the $\kappa_j$'s,  $\nm{P_{\Lambda_t}D f}_{  \ell^2} \leq \sqrt{\kappa_t}$ whenever $t=1,\ldots, r$, $f\in\cW$ and $\nm{Df}_{\ell^\infty}\leq 1$.
Therefore,
\eas{
\cV & \leq  r\nm{DQ_\cW D^*}_{\ell^\infty \to \ell^\infty}   \sum_{l=1}^r\sum_{k=1}^r (  q_k^{-1}-1) \hat \kappa_k  \mu_{\mathbf{N},\mathbf{M}}^2(k,l)   \sum_{t=1}^r \frac{1}{r }\nm{P_{\Lambda_t}DQ_\cW D^* P_{\Lambda_l} }_{\ell^\infty \to \ell^\infty}\\
&\leq r \tilde B^2 \max_{l=1}^r \sum_{k=1}^r (  q_k^{-1}-1) \, \mu_{\mathbf{N},\mathbf{M}}^2(k,l) \, \hat \kappa_k,
}
where we have applied
\eas{&\frac{1}{r}\nm{DQ_\cW D^*}_{\ell^\infty \to \ell^\infty} \sum_{l=1}^r  \sum_{t=1}^r \nm{P_{\Lambda_t}DQ_\cW D^* P_{\Lambda_l} }_{\ell^\infty \to \ell^\infty}
\\&
\leq \nm{DQ_\cW D^*}_{\ell^\infty \to \ell^\infty} \max_{l=1}^r \sum_{t=1}^r\nm{P_{\Lambda_l}DQ_\cW D^*P_{\Lambda_t}}_{\ell^\infty \to \ell^\infty}
=: \tilde B^2.
}

\item To bound $\bbE(Z) = \bbE \nm{\sum_{j=1}^M Y_j}$:
\eas{
&\bbE\left(\nm{\sum_{j=1}^M Y_j}_{\ell^2}^2\right) = \sum_{j=1}^M \bbE \nm{Y_j}_{\ell^2}^2
= \sum_{j=1}^M (\tilde q_j^{-1} -1) \nm{TDQ_\cW V^* e_j}_{\ell^2}^2 \abs{\ip{Vg}{e_j}}^2\\
&\leq \sum_{k=1}^r (q_k^{-1}-1) \nm{P_{\Gamma_k}V g}_{\ell^2}^2 \max_{j\in\Gamma_k} \nm{TDQ_\cW V^* e_j}_{\ell^2}^2.
}
This is the same upper bound as obtained in (\ref{eq:prop1_V}), so from the bound on $\cV$, we have that
$$
\bbE\left(\nm{\sum_{j=1}^M Y_j}_{\ell^2}^2\right)\leq \tilde B^2 \max_{l=1}^r \sum_{k=1}^r (  q_k^{-1}-1) \, \mu_{\mathbf{N},\mathbf{M}}^2(k,l) \, \hat \kappa_k.
$$
Finally, by Jensen's inequality,
$$
\bbE \nm{\sum_{j=1}^M Y_j} \leq  \sqrt{\bbE\left(\nm{\sum_{j=1}^M Y_j}_{\ell^2}^2\right) } \leq \sqrt{\tilde B^2 \max_{l=1}^r \sum_{k=1}^r (  q_k^{-1}-1) \, \mu_{\mathbf{N},\mathbf{M}}^2(k,l) \, \hat \kappa_k}.
$$
\end{itemize}

Let $C = \max\br{C_1, 4 C_2/\alpha  }$, where
$$
C_1 = \nm{D  Q_{\cW}  D^*}_{\ell^\infty \to \ell^\infty} \sqrt{r}\max_{k=1}^r  \sum_{l=1}^r \mu_{\mathbf{N},\mathbf{M}}(k,l) \kappa_l,
$$
and
$$
C_2 = r\tilde B^2 \max_{l=1}^r \sum_{k=1}^r (  q_k^{-1}-1) \, \mu_{\mathbf{N},\mathbf{M}}^2(k,l) \, \hat \kappa_k.
$$
Note that $C  \geq \cS$, $\alpha C/4 \geq \cV$. Suppose that $C_2\leq \alpha^2/16$, then $\bbE(Z)\leq  \alpha /4$, since $  \left(\bbE(Z)\right)^2 \leq C_2$. By applying Talagrand's inequality with the upper bound of $C\geq \cS$ and $\bbE(Z)\leq  \alpha /4$,
\eas{
&\bbP\left(Z \geq \frac{\alpha}{2}\right) \leq \bbP\left(Z \geq \frac{\alpha}{4} + \bbE(Z)\right)
\geq \bbP\left(\abs{Z-\bbE(Z)} \leq \frac{\alpha}{4}\right)\\
&\leq 3\exp\left(-\frac{\alpha}{4KC} \log \left(1+ \frac{C\alpha/4}{\cV + C\alpha/4}\right)\right)
\leq 3\exp\left(-\frac{\alpha}{4KC} \log \left(1+ \frac{C\alpha/4}{C\alpha/4 + C\alpha/4}\right)\right)\\
&\leq 3\exp\left(-\frac{\alpha}{4KC} \log \left(\frac{3}{2}\right)\right),
}
where $K$ is the constant from Talagrand's inequality. So, $\bbP\left(Z \geq \frac{\alpha}{2}\right) \leq \gamma$ provided that
$$
C\log\left(\frac{3}{\gamma}\right) \leq \frac{\alpha}{4K} \log\left(\frac{3}{2}\right).
$$
as well as $C_2 \leq \alpha^2/16$.
Therefore, the require result would follow provided that
$$
\nm{D  Q_{\cW}  D^*}_{\ell^\infty \to \ell^\infty} \sqrt{r}\log\left(\frac{3}{\gamma}\right) \sum_{l=1}^r \mu_{\mathbf{N},\mathbf{M}}(k,l) \kappa_l  \leq \frac{\alpha}{4K} \log\left(\frac{3}{2}\right), \qquad k=1,\ldots, r,
$$ and 
$$
r\tilde B^2  \log\left(\frac{3}{\gamma}\right) \sum_{k=1}^r (  q_k^{-1}-1) \, \mu_{\mathbf{N},\mathbf{M}}^2(k,l) \, \hat \kappa_k
\leq \frac{\alpha^2}{16}\min\br{1, \,\frac{1}{K} \log\left(\frac{3}{2}\right)}, \qquad l=1,\ldots, r.
$$

\end{proof}

\begin{proposition}\label{prop2}
Fix $g\in\cH$ and let $\alpha>0$ and $\gamma \in [0,1]$. Suppose that
\be{\label{eq:prop2assp}
\nm{   D  Q_{\cW}^\perp  V^*  P_{[M]}  V Q_\cW D^* T^{-1}}_{\ell^2 \to \ell^\infty} \leq \frac{\alpha}{2 }.
}
Let $B = \nm{ D Q_\cW^\perp D^* }_{\ell^\infty \to \ell^\infty}$.
Let $$
\tilde M =\min\br{i\in\bbN : \max_{j\geq i} 2
\sqrt{\kappa_{\max}} \cdot \max_{k=1}^r q_k^{-1} \cdot \nm{ P_{[M]}  V  D^* e_j}_{\ell^2} \leq  \alpha}, \qquad \kappa_{\max}= r\max\br{\kappa_j}_{j=1}^r.
$$
Then $\tilde M$ is finite and
$$
\bbP\left( \nm{ P_\Delta^\perp   D  Q_{\cW}^\perp  V^* (q_1^{-1} P_{\Omega_1} \oplus \cdots \oplus q_r^{-1} P_{ \Omega_r})  V   Q_{\cW} g}_{\ell^\infty} \geq \alpha \nm{ T D g}_{\ell^2}\right) \leq \gamma,
$$
provided that
$$
\sqrt{r} \,B\, \log\left(\frac{4 \tilde M}{\gamma}\right)\,
q_k^{-1} \,   \sum_{l=1}^r \mu_{\mathbf{N}, \mathbf{M}}^2(k,l) \, \kappa_l \lesssim \alpha, \qquad k=1,\ldots, r
$$ 
and
 $$
r\, B^2\, \log\left(\frac{4 \tilde M}{\gamma}\right) \, \sum_{k=1}^r (q_k^{-1} -1) \, \mu_{\mathbf{N},\mathbf{M}}^2(k,j) \, \hat \kappa_k \lesssim \alpha^2, \qquad j=1,\ldots, r.
$$

\end{proposition}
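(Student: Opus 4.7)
\textbf{Proof plan for Proposition \ref{prop2}.} The argument parallels Proposition~\ref{prop1} but, since the conclusion is now an $\ell^\infty$ rather than $\ell^2$ estimate, I would replace the Talagrand concentration inequality by a coordinate-wise application of the scalar Bernstein inequality (Theorem~\ref{thm:bernstein}), followed by a union bound over a truncated range of coordinates; the parameter $\tilde M$ is exactly the cutoff that makes this union finite. By homogeneity, normalize so that $\nm{TDg}_{\ell^2}=1$. Writing $VQ_\cW g = VQ_\cW D^* T^{-1}(TDg)$ and invoking the operator-norm hypothesis~\eqref{eq:prop2assp} immediately yields $\nm{Y_{\mathrm{bias}}}_{\ell^\infty}\leq \alpha/2$ for the deterministic bias $Y_{\mathrm{bias}}:= P_\Delta^\perp D Q_\cW^\perp V^*P_{[M]}VQ_\cW g$, so it remains to show that the centred random part
$$X_{\mathrm{rand}}:= \sum_{j'=1}^M (\tilde q_{j'}^{-1}\delta_{j'}-1)\,P_\Delta^\perp D Q_\cW^\perp V^*(e_{j'}\otimes \bar e_{j'})VQ_\cW g$$
has $\ell^\infty$ norm at most $\alpha/2$ with probability $\geq 1-\gamma$. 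Finiteness of $\tilde M$ follows because $P_{[M]}VD^*$ is finite-rank: for each fixed $n\leq M$ the sequence $\bigl(\ip{DV^*e_n}{e_j}\bigr)_{j\in\bbN}$ lies in $\ell^2$, hence $\nm{P_{[M]}VD^*e_j}_{\ell^2}\to 0$ as $j\to\infty$.

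Fix $j\in\Delta^c$ and write $\langle X_{\mathrm{rand}},e_j\rangle = \sum_{j'=1}^M \xi_{j',j}$ with
$$\xi_{j',j} = (\tilde q_{j'}^{-1}\delta_{j'}-1)\,\overline{(VQ_\cW^\perp D^*e_j)_{j'}}\,(VQ_\cW g)_{j'},$$
a sum of independent mean-zero scalars to which Theorem~\ref{thm:bernstein} applies. The Bernstein parameters are controlled via three estimates, all uniform in $j$. First, using the identity $Q_\cW^\perp D^* = Q_\cW^\perp D^*\cdot DD^*$ and the definition $B = \nm{DQ_\cW^\perp D^*}_{\ell^\infty\to\ell^\infty}$, one obtains $|(VQ_\cW^\perp D^*e_j)_{j'}|\leq B\,\mu(P_{\Gamma_k}VD^*)$ for $j'\in\Gamma_k$. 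Second, Cauchy--Schwarz gives $|(VQ_\cW g)_{j'}|\leq \nm{T^{-1}DQ_\cW V^*e_{j'}}_{\ell^2}$, which I would estimate level by level via Corollary~\ref{cor:loc_sp}, using that $Q_\cW g = D^*\tilde\eta$ for some $\tilde\eta\in\Sigma_{\mathbf{s},\mathbf{N}}$ (take $\tilde\eta$ supported in $\Delta$), to produce a bound in terms of the local coherences $\mu_{\mathbf{N},\mathbf{M}}(k,l)$ and localized sparsities $\kappa_l$. Third, summing the squared magnitudes over $j'\in\Gamma_k$ yields $\nm{P_{\Gamma_k}VQ_\cW g}_{\ell^2}^2 \leq r\hat\kappa_k$ directly from the definition of the relative sparsity. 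Combining these three estimates produces
$$\cS_j\lesssim \sqrt{r}\,B\,q_k^{-1}\sum_{l=1}^r \mu_{\mathbf{N},\mathbf{M}}^2(k,l)\kappa_l,\qquad \sigma_j^2\lesssim r\,B^2 \sum_{k=1}^r (q_k^{-1}-1)\mu_{\mathbf{N},\mathbf{M}}^2(k,j)\hat\kappa_k,$$
and applying Theorem~\ref{thm:bernstein} with $t=\alpha/2$ and per-coordinate failure probability $\gamma/(4\tilde M)$ recovers exactly the two stated hypotheses, the $\log(4\tilde M/\gamma)$ factor being the cost of the subsequent union bound.

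A union bound over $j\in\Delta^c\cap[\tilde M-1]$ then controls the supremum on that finite range. For $j\geq \tilde M$, one bounds $|\langle X,e_j\rangle|$ deterministically by applying Cauchy--Schwarz to $\langle X,e_j\rangle = \langle P_\Omega^q VQ_\cW g, VQ_\cW^\perp D^*e_j\rangle$ and using $\nm{P_\Omega^q VQ_\cW g}_{\ell^2}\leq \max_k q_k^{-1}\sqrt{\kappa_{\max}}$ together with the defining inequality $\nm{P_{[M]}VD^*e_j}_{\ell^2}\leq \alpha/(2\sqrt{\kappa_{\max}}\max_k q_k^{-1})$, after decomposing $Q_\cW^\perp = I - Q_\cW$ so that the $VD^*e_j$ contribution yields the desired $\alpha$-type bound while the $VQ_\cW D^*e_j$ contribution is reabsorbed through the already-controlled bias. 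The main obstacle is arranging this truncation cleanly, since $\nm{P_{[M]}VQ_\cW D^*e_j}_{\ell^2}$ need not be small for large $j$ and the cancellation between the random part and the bias on the tail $j\geq \tilde M$ must be extracted carefully; the remaining computations—producing $\cS_j$ and $\sigma_j^2$ with the correct dependence on $B$, the local coherences, and the two sparsity notions $\kappa_l$ and $\hat\kappa_k$—are direct analogues of those in Proposition~\ref{prop1}, with the $\ell^\infty\to\ell^\infty$ operator norm on $DQ_\cW^\perp D^*$ playing the role of the quadratic-form norm on $DQ_\cW D^*$.
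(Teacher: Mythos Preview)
Your overall architecture is exactly the paper's: split off the deterministic bias via~\eqref{eq:prop2assp}, apply the scalar Bernstein inequality coordinate by coordinate to the centred random part, take a union bound over a finite range governed by $\tilde M$, and handle the tail deterministically.

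Two points in your sketch need repair. First, the tail argument is simpler than you fear: there is no need to decompose $Q_\cW^\perp = I - Q_\cW$ or to reabsorb anything into the bias. One bounds the centred sum at coordinate $i$ directly by
\[
\Bigl|\sum_{j'} Z^i_{j'}\Bigr| \;\leq\; \nm{DQ_\cW g}_{\ell^2}\cdot \max_k q_k^{-1}\cdot \nm{P_{[M]}VQ_\cW^\perp D^* e_i}_{\ell^2}
\;\leq\; \sqrt{\kappa_{\max}}\,\max_k q_k^{-1}\,\nm{P_{[M]}VQ_\cW^\perp D^* e_i}_{\ell^2},
\]
and this tends to $0$ because $P_{[M]}VQ_\cW^\perp D^*$ is itself finite-rank (both $P_{[M]}$ and $Q_\cW$ are finite-rank projections). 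The set where this exceeds $\alpha/2$ is therefore finite, and its cardinality is bounded by $\tilde M$.

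Second, your Cauchy--Schwarz route for the second factor, $|(VQ_\cW g)_{j'}|\leq \nm{T^{-1}DQ_\cW V^*e_{j'}}_{\ell^2}$, does not lead to the required form $\sum_l \mu^2_{\mathbf{N},\mathbf{M}}(k,l)\kappa_l$ for the uniform bound $K$; it produces the wrong powers of $\kappa_l$ and loses the level-$l$ coherence $\mu(P_{\Gamma_k}VD^*P_{\Lambda_l})$. The paper instead uses $\ell^1$--$\ell^\infty$ H\"older on $VQ_\cW g = VD^*(Dg)$ to obtain
\[
|(Vg)_{j'}| \;\leq\; \sum_{l=1}^r \mu(P_{\Gamma_k}VD^*P_{\Lambda_l})\,\nm{P_{\Lambda_l}Dg}_{\ell^1}
\;\leq\; \sqrt{r}\sum_{l=1}^r \mu(P_{\Gamma_k}VD^*P_{\Lambda_l})\,\kappa_l,
\]
the last step by item~(4) of Corollary~\ref{cor:loc_sp} (since $\nm{TDg}_{\ell^2}=1$ forces $\nm{P_{\Lambda_l}Dg}_{\ell^2}\leq\sqrt{r\kappa_l}$). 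Combining with $|(VQ_\cW^\perp D^*e_i)_{j'}|\leq B\,\mu(P_{\Gamma_k}VD^*)$ gives the product $\mu^2_{\mathbf{N},\mathbf{M}}(k,l)$. The variance bound likewise needs an asymmetric treatment of the two copies of $\langle V^*e_{j'},Q_\cW^\perp D^*e_i\rangle$: one is bounded by $B\mu(P_{\Gamma_k}VD^*)$, the other is written as $\nm{P_{\Gamma_k}VD^*\alpha}_{\ell^\infty}$ with $\alpha = DQ_\cW^\perp D^*e_i$, $\nm{\alpha}_{\ell^1}\leq B$, and then expanded over sparsity levels; simply squaring your pointwise bound would give $\mu(P_{\Gamma_k}VD^*)^2$ rather than the local coherence $\mu^2_{\mathbf{N},\mathbf{M}}(k,j)$ appearing in the hypothesis.
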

\begin{proof}
Without loss of generality, assume that $\nm{T D g}_{\ell^2} =1$.  
 Let $\br{\delta_j}_{j=1}^M$ be random Bernoulli variables such that $\bbP(\delta_j = 1) = \tilde q_j$ where $\tilde q_j = q_k$ for $j=M_{k-1}+1,\ldots, M_k$. Observe that $$Q_{\cW}^\perp = Q_{\cW}^\perp D^*D = Q_{\cW}^\perp D^* P_\Delta^\perp D,$$ so we have that
 \eas{
 & P_\Delta^\perp   D  Q_{\cW}^\perp  V^*(q_1^{-1} P_{\Omega_1 }\oplus \cdots \oplus q_r^{-1}  P_{\Omega_r})  V   g \\
 &=
 \sum_{j=1}^M (\tilde q_j^{-1} \delta_j-1) P_\Delta^\perp   D  Q_{\cW}^\perp D^* P_\Delta^\perp D V^* (e_j\otimes \overline e_j)  V   g +  P_\Delta^\perp   Q_{\cW}^\perp  V^*  P_{[M]}  V Q_\cW  D^* T^{-1} T D g.
 }
Since $\nm{ P_\Delta^\perp  D  Q_{\cW}^\perp  V^*  P_{[M]}  V Q_\cW D^*T^{-1}}_{\ell^2 \to \ell^\infty} \leq \alpha/2$, it suffices to show that
 $$
 \bbP\left(\nm{\sum_{j=1}^M (\tilde q_j^{-1} \delta_j -1) P_\Delta^\perp   D  Q_{\cW}^\perp  V^* (e_j\otimes \overline e_j)  V g}_{\ell^\infty}>  \alpha/2 \right) \leq \gamma.
 $$ 
For each $i\in\Delta^c$ and $j=1,\ldots, M$, let
$$
Z_j^i = (\tilde q_j^{-1} \delta_j -1)\ip{ P_\Delta^\perp   D  Q_{\cW}^\perp  V^* (e_j\otimes \overline e_j)  V D g}{e_i}.
$$
For each $i\in\Delta^c$,  we will first apply Bernstein's inequality (Theorem \ref{thm:bernstein}) to consider upper bounds for $\bbP\left( \abs{ \sum_{j=1}^M Z_j^i} \geq   \alpha \right)$. Observe that
\eas{
\bbE\left(\abs{Z_j^i}^2\right) &= (\tilde q_j^{-1} -1) \absu{\ip{ P_\Delta^\perp   D   Q_{\cW}^\perp  V^* (e_j\otimes \overline e_j)  V  g}{e_i}}^2\\
&= (\tilde q_j^{-1} -1) \absu{\ip{  V^* e_j}{  Q_{\cW}^\perp  P_\Delta^\perp D e_i}}^2 \absu{\ip{e_j}{ V g}}^2.
}
Let $B =  \nm{DQ_\cW^\perp D^*}_{\ell^\infty \to \ell^\infty}$,
Then,
\spl{\label{eq:card_bound2}
&\sum_{j=1}^M \bbE(\abs{Z_j^i}^2)
\leq \sup_{\nm{\alpha}_{\ell^1}\leq B} \sum_{k=1}^r (q_k^{-1}-1) \nm{P_{\Gamma_k}VD^* \alpha}_{\ell^\infty}^2 \nm{P_{\Gamma_k}V  g}_{\ell^2}^2\\
&\leq B \sup_{\nm{\alpha}_{\ell^1}\leq B} \sum_{k=1}^r (q_k^{-1}-1) \mu(P_{\Gamma_k}VD^*) \nm{P_{\Gamma_k}VD^* \alpha}_{\ell^\infty} \nm{P_{\Gamma_k}V  g}_{\ell^2}^2\\
&\leq B \sup_{\nm{\alpha}_{\ell^1}\leq B} \sum_{k=1}^r (q_k^{-1}-1) \mu(P_{\Gamma_k}VD^*) \sum_{l\in\bbN}\abs{\alpha_l} \nm{P_{\Gamma_k}VD^* e_l}_{\ell^\infty}  \nm{P_{\Gamma_k}V g}_{\ell^2}^2\\
&= B \sup_{\nm{\alpha}_{\ell^1}\leq B} \sum_{l\in\bbN}\abs{\alpha_l} \sum_{k=1}^r (q_k^{-1}-1) \mu(P_{\Gamma_k}VD^*)  \nm{P_{\Gamma_k}VD^* e_l}_{\ell^\infty}  \nm{P_{\Gamma_k}V  g}_{\ell^2}^2\\
&\leq B^2 \sup_{l\in\bbN} \sum_{k=1}^r (q_k^{-1}-1) \mu(P_{\Gamma_k}VD^*)  \nm{P_{\Gamma_k}VD^* e_l}_{\ell^\infty}  \nm{P_{\Gamma_k}V g}_{\ell^2}^2\\
&\leq B^2 \max_{j=1}^r \sum_{k=1}^r (q_k^{-1}-1) \mu_{\mathbf{N}, \mathbf{M}}^2(k,j)  \nm{P_{\Gamma_k}V g}_{\ell^2}^2\\
&\leq rB^2 \max_{j=1}^r \sum_{k=1}^r (q_k^{-1}-1) \mu_{\mathbf{N}, \mathbf{M}}^2(k,j) \hat \kappa_k =:\sigma^2,
} 
where the last line follows because $\nm{TDg}_{\ell^2}=1$ implies that $\nm{P_{\Lambda_k} D g}_{\ell^2}\leq \sqrt{r\kappa_k}$, and by definition of $\hat \kappa_k$,  $\nm{P_{\Gamma_k}V g}_{\ell^2}^2\leq r\hat \kappa_k$.

Also, we have that
\eas{
\abs{Z_j^i} &\leq \tilde q_j^{-1} \absu{\ip{  V^* e_j}{  Q_{\cW}^\perp  D^* e_i}} \absu{\ip{e_j}{ V D^*D g }}\\
&\leq  \max_{k=1}^r q_k^{-1} \, \mu( P_{\Gamma_k}  V  Q_{\cW}^\perp   D^*  P_\Delta^\perp ) \, \sum_{l=1}^r \mu(  P_{\Gamma_k}  V D^* P_{\Lambda_l}) \, \nm{P_{\Lambda_l}D g}_{\ell^1}\\
&\leq   \nm{DQ_\cW^\perp D^*}_{\ell^\infty \to \ell^\infty} \max_{k=1}^r q_k^{-1} \,   \sum_{l=1}^r \mu_{\mathbf{N}, \mathbf{M}}(k,l) \,  \nm{P_{\Lambda_l}D g}_{\ell^1}\\
&\leq   \sqrt{r} \nm{DQ_\cW^\perp D^*}_{\ell^\infty \to \ell^\infty} \max_{k=1}^r q_k^{-1} \,   \sum_{l=1}^r \mu_{\mathbf{N}, \mathbf{M}}(k,l) \,  \kappa_l =:K,
}
where the last line follows because $\nm{P_{\Lambda_l}D g}_{\ell^2}\leq \sqrt{r\kappa_l}$ along 
with (4) of Corollary \ref{cor:loc_sp} implies that $\nm{P_{\Lambda_l}Dg}_{\ell^1}\leq \sqrt{r}\kappa_l$.

Let $\Upsilon \subset \bbN$ be  such that
$$
\bbP\left(\sup_{i\in\Upsilon} \abs{\sum_{j=1}^M Z_j^i} \geq   \alpha \right) = 0
$$
and suppose that $\abs{\Upsilon^c}\leq \tilde M$.
Then, by Theorem \ref{thm:bernstein} and the union bound,
$$
\bbP\left(\sup_{i\in\Delta^c} \abs{\sum_{j=1}^M Z_j^i} \geq \frac{\alpha}{2}\right)
\leq \bbP\left(\sup_{i\in\Upsilon^c} \abs{\sum_{j=1}^M Z_j^i} \geq \frac{\alpha}{2}\right)
\leq 4\abs{\Upsilon^c} \exp\left(-\frac{\alpha^2/16}{\sigma^2 + K\alpha/(6\sqrt{2})}\right),
$$
which is true provided that
$$
\log\left(\frac{4\tilde M}{\gamma}\right) \, \sigma^2 \leq \frac{\alpha^2}{32}, \qquad 
\log\left(\frac{4\tilde M}{\gamma}\right) \, K \leq \frac{\alpha}{8}.
$$
which are simply the assumptions of this theorem.

It remains to show that such as set $\Upsilon^c$ exists: First note that $\nm{Dg} \leq \nm{T^{-1}}\nm{TDg} \leq \sqrt{\kappa_{\max}}$ with $\kappa_{\max}= r\max\br{\kappa_j}_{j=1}^r$. So, using the fact that $D$ and $V$ are isometries and hence of norm 1, notice that
\eas{
\abs{\sum_{j=1}^M Z_j^i} &= \abs{\ip{ (\tilde q_k^{-1}-1) P_\Delta^\perp   D  Q_{\cW}^\perp  V^* (q_1^{-1} P_{\Omega_1} \oplus \cdots \oplus q_r^{-1} P_{\Omega_r})  V g}{e_i}}\\
&\leq \nm{D g}_{\ell^2} \nm{D Q_\cW  V^*  (q_1^{-1} \Omega_1 \oplus \cdots q_r^{-1} \Omega_r)  V  Q_{\cW}^\perp  D^*  P_\Delta^\perp  e_i}_{\ell^2}\\
&\leq \sqrt{\kappa_{\max}} \, \max_{k=1}^r q_k^{-1} \, \nm{ P_{[M]}  V  Q_{\cW}^\perp  D^* e_i}_{\ell^2}\to 0
}
as $i\to \infty$ since $ P_{[M]}  V  Q_{\cW}^\perp  D^*$ is of finite rank. Thus, for $\alpha >0$, it suffices to let
$$
 \Upsilon^c := \br{i\in\bbN : 
\sqrt{\kappa_{\max}}   \, \max_{k=1}^r q_k^{-1} \, \nm{ P_{[M]}  V  Q_{\cW}^\perp  D^* e_i}_{\ell^2} \geq   \alpha}
$$
which is a finite set. To conclude this proof, observe that $\abs{\Upsilon^c} \leq \tilde M<\infty$. 
\end{proof}

\begin{proposition}\label{prop3}
Let $\alpha>0$ and $\gamma \in [0,1]$. Suppose that
\be{\label{eq:prop3:assump}
\nm{  Q_{\cW}  V^*  P_{[M]}^\perp  V  Q_{\cW}}_{\cH\to \cH} \leq \alpha/2.
}
Then,
$$
\bbP\left( \nm{ Q_{\cW}  V^* (q_1^{-1} P_{\Omega_1 }\oplus \cdots \oplus q_r^{-1}  P_{\Omega_r})  V  Q_{\cW}  -  Q_{\cW}  V^*  V  Q_{\cW}}_{\cH \to \cH}\geq \alpha \right) \leq \gamma,
$$
provided that
$$
q_k \gtrsim \alpha^{-2} \,  \log \left( \frac{4\tilde M}{\gamma}\right) \sum_{l=1}^r \mu_{\mathbf{N}, \mathbf{M}}^2(k,l) \, \kappa_l, \qquad k=1,\ldots, r.
$$

\end{proposition}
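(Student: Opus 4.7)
The plan begins by exploiting that $V$ is an isometry, so $V^*V = I_\cH$ and hence $Q_\cW V^* V Q_\cW = Q_\cW$. Writing $\tilde P := q_1^{-1}P_{\Omega_1}\oplus\cdots\oplus q_r^{-1}P_{\Omega_r}$ and splitting
\[
Q_\cW V^*\tilde P V Q_\cW - Q_\cW V^* V Q_\cW = Q_\cW V^*(\tilde P - P_{[M]}) V Q_\cW - Q_\cW V^* P_{[M]}^\perp V Q_\cW,
\]
the hypothesis \eqref{eq:prop3:assump} controls the deterministic tail piece (contributing at most $\alpha/2$ to the norm), so it remains to show that with probability at least $1-\gamma$,
\[
\Bigl\|\sum_{j=1}^M X_j\Bigr\|_{\cH\to\cH}\leq \alpha/2, \qquad X_j := (\tilde q_j^{-1}\delta_j - 1)\,Q_\cW V^* e_j e_j^* V Q_\cW,
\]
where $\delta_j$ is the Bernoulli indicator of $j\in\Omega_k$ and $\tilde q_j := q_k$ for $j\in\Gamma_k$. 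The $X_j$'s are independent, mean-zero, self-adjoint, rank-at-most-one operators whose image and cokernel both lie in the finite-dimensional subspace $\cW$ (with $\dim\cW\leq |\Delta|$), so the natural tool is Bernstein's inequality for rectangular matrices (Theorem \ref{thm:matrixBernstein}) applied to the restrictions of $X_j$ to $\cW$.

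To apply Theorem \ref{thm:matrixBernstein}, I need a uniform bound $\cK\geq \max_j\nm{X_j}$ and the variance proxy $\sigma^2\geq \nm{\sum_j\bbE[X_j^2]}$. For $\cK$, I use $\nm{X_j}\leq q_k^{-1}\nm{Q_\cW V^* e_j}_\cH^2 = q_k^{-1}\nm{DQ_\cW V^* e_j}_{\ell^2}^2$ (the second equality because $D$ is an isometry) and decompose by levels:
\[
\nm{DQ_\cW V^* e_j}_{\ell^2}^2 = \sum_{l=1}^r \nm{P_{\Lambda_l} DQ_\cW V^* e_j}_{\ell^2}^2.
\]
Since $g := Q_\cW V^* e_j \in \cW = \cR(D^* P_\Delta)$ with $\Delta$ an $(\mathbf{s},\mathbf{N})$-sparse support, Corollary \ref{cor:loc_sp}(3) gives $\nm{P_{\Lambda_l} Dg}_{\ell^2}^2 \leq \kappa_l\nm{P_{\Lambda_l} Dg}_{\ell^\infty}^2$. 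Writing $DQ_\cW V^* = (DQ_\cW D^*)(DV^*)$ via $D^*D = I_\cH$, bounding the $\ell^\infty$ factor using $\nm{P_{\Lambda_t} DV^* e_j}_{\ell^\infty}\leq \mu(P_{\Gamma_k} VD^* P_{\Lambda_t})$ for $j\in\Gamma_k$, and pairing $\mu(P_{\Gamma_k} VD^* P_{\Lambda_t})\,\mu(P_{\Gamma_k} VD^*) = \mu_{\mathbf{N},\mathbf{M}}^2(k,t)$ (in the style of the $\cS$-bound of Proposition \ref{prop1}) yields $\nm{X_j}\lesssim q_k^{-1}\sum_l\mu_{\mathbf{N},\mathbf{M}}^2(k,l)\kappa_l$. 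For the variance, a direct computation gives $(Q_\cW V^* e_j e_j^* V Q_\cW)^2 = \nm{Q_\cW V^* e_j}_\cH^2 \cdot Q_\cW V^* e_j e_j^* V Q_\cW$, so
\[
\Bigl\|\sum_{j=1}^M\bbE[X_j^2]\Bigr\| \leq \max_j\nm{X_j}\cdot\nm{Q_\cW V^* V Q_\cW}\leq \max_j\nm{X_j},
\]
hence $\sigma^2$ obeys the same scaling as $\cK$.

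Inserting these bounds into the matrix Bernstein tail and demanding it lie below $\gamma$ at $t = \alpha/2$ reproduces exactly the sufficient condition $q_k\gtrsim \alpha^{-2}\log(4\tilde M/\gamma)\sum_l\mu_{\mathbf{N},\mathbf{M}}^2(k,l)\kappa_l$; the dimension prefactor $\sim \dim\cW\leq |\Delta|$ in the Bernstein tail matches $\log\tilde M$ under the hypotheses of the main theorem (either because $|\Delta|\leq\tilde M$ in the regime of interest, or via a truncation/union-bound argument analogous to the $\Upsilon^c$ passage in the proof of Proposition \ref{prop2}). The main obstacle is the sharp pointwise estimate $\nm{X_j}\lesssim q_k^{-1}\sum_l\mu_{\mathbf{N},\mathbf{M}}^2(k,l)\kappa_l$: a naive level decomposition yields only the loose bound $\bigl(\max_l\mu_{\mathbf{N},\mathbf{M}}^2(k,l)\bigr)\bigl(\sum_l\kappa_l\bigr)$, and recovering the correct pairing $\sum_l\mu_{\mathbf{N},\mathbf{M}}^2(k,l)\kappa_l$ demands carefully interleaving Cauchy--Schwarz with the level-wise $\ell^\infty\to\ell^\infty$ norms of $DQ_\cW D^*$ encoded in the factor $B$, exactly the bookkeeping that appears in Proposition \ref{prop1}.
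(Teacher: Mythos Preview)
Your overall strategy---split off the deterministic tail via \eqref{eq:prop3:assump} and apply the matrix Bernstein inequality to the centred rank-one pieces---matches the paper exactly. The two places where your argument diverges are worth flagging.

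\textbf{Dimension prefactor.} You restrict to $\cW$ and invoke Bernstein on $(\dim\cW)\times(\dim\cW)$ matrices; the paper instead sandwiches by $P_{[\tilde M]}D$ (using $\|P_{[J]}^\perp DQ_\cW\|\to 0$ since $DQ_\cW$ has finite rank) and applies Bernstein on $\bbC^{\tilde M\times\tilde M}$. Your route is cleaner and in fact yields the smaller factor $\log(\dim\cW)$; since $\dim\cW\le|\Delta|\le\tilde M$ in the regime of interest, this implies the statement as written. So your concern here is not really an obstacle.

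\textbf{Pointwise bound on $\|X_j\|$.} Here your instinct that $B$ must enter is mistaken, and this is the one genuine gap in your proposal relative to the proposition \emph{as stated} (which contains no $B$). The paper avoids $B$ by using Corollary~\ref{cor:loc_sp}(4) rather than (3): for $j\in\Gamma_k$ and any $g\in\cW$ with $\|g\|=1$,
\[
\|Q_\cW V^* e_j\| = \sup_{g}|\langle e_j, VD^*Dg\rangle| \le \sup_g \sum_{l=1}^r \mu(P_{\Gamma_k}VD^*P_{\Lambda_l})\,\|P_{\Lambda_l}Dg\|_{\ell^1},
\]
and since $\|Dg\|_{\ell^2}=1$, part (4) gives $\|P_{\Lambda_l}Dg\|_{\ell^1}\le\sqrt{\kappa_l}$ directly---no factor of $DQ_\cW D^*$ ever appears. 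Squaring and using $\mu(P_{\Gamma_k}VD^*P_{\Lambda_l})^2\le\mu_{\mathbf{N},\mathbf{M}}^2(k,l)$ then yields $\|X_j\|\lesssim q_k^{-1}\sum_l\mu_{\mathbf{N},\mathbf{M}}^2(k,l)\kappa_l$. Your route through (3) forces you to bound $\|P_{\Lambda_l}DQ_\cW V^*e_j\|_{\ell^\infty}$, which inevitably drags in $\|DQ_\cW D^*\|_{\ell^\infty\to\ell^\infty}$; that proves a weaker proposition (with an extra $B$) rather than Proposition~\ref{prop3} itself. The bookkeeping you anticipate from Proposition~\ref{prop1} is needed there precisely because of the $T$-weighting; here the unweighted $\ell^2$ norm lets you sidestep it entirely.
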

\begin{proof}
Let $\tilde q_j = q_k$ for $j\in\Gamma_k$. Let $\br{\delta_j}_{j=1}^r$ be Bernoulli random variables such that $\bbP(\delta_j = 1) = \tilde q_j$. 
\eas{
& \nm{ Q_{\cW}  V^* (q_1^{-1} P_{\Omega_1 }\oplus \cdots \oplus q_r^{-1}  P_{\Omega_r})  V  Q_{\cW}  -  Q_\cW  V^*  V  Q_{\cW}} 
 \leq \nm{ \sum_{j=1}^M \tilde q_j^{-1} \delta_j   Q_\cW  V^* (e_j \otimes \overline e_j)  V  Q_{\cW}} \\
& \leq  \nm{ \sum_{j=1}^M (\tilde q_j^{-1} \delta_j -1)     D Q_\cW V^* (e_j \otimes \overline e_j)  V  Q_\cW D^*   } 
+ \nm{  Q_{\cW}  V^*  P_{[M]}^\perp  V  Q_{\cW}} \\
&\leq   \nm{ \sum_{j=1}^M (\tilde q_j^{-1} \delta_j -1)     D Q_\cW V^* (e_j \otimes \overline e_j)  V  Q_\cW D^*   } +\alpha/2.
}
Let $$U = \sum_{j=1}^M (\tilde q_j^{-1} \delta_j -1) (e_j \otimes \overline e_j).
$$ Then, for each $J\in\bbN$, $\nm{D Q_\cW V^*  U V Q_\cW D}$ is bounded above by
\eas{
 &  \nm{P_{[J]} D Q_\cW V^*  U V Q_\cW D^* P_{[J]}}
 +\nm{D Q_\cW V^*  U V Q_\cW D^* P_{[J]}^\perp}
 + \nm{P_{[J]}^\perp D Q_\cW V^*  U V Q_\cW D^*}\\
 &\leq \nm{P_{[J]} D Q_\cW V^*  U V Q_\cW D^* P_{[J]}} + 2 \nm{P_{[J]}^\perp D Q_\cW}\nm{U}\\
& \leq \nm{P_{[J]} D Q_\cW V^*  U V Q_\cW D^* P_{[J]}} + 2 q^{-1}\nm{P_{[J]}^\perp D Q_\cW}
}
where $q = \min\br{q_j}_{j=1}^r$. Note that since $DQ_\cW$ has finite rank, $\nm{P_{[J]}^\perp D Q_\cW}\to 0$ as $J\to \infty$. Let
$$
\tilde M = \min\br{j\in\bbN : 8\nm{P_{[j]}^\perp D Q_\cW} \leq q\alpha }.
$$
Then, it suffices to show that
$$
\bbP\left(\nm{P_{[\tilde M]} D Q_\cW V^*  U V Q_\cW D^* P_{[\tilde M]}} > \alpha/4 \right)\leq \gamma
$$
For each $j=1,\ldots, M$, define
$Z_j = (\tilde q_j^{-1} \delta_j -1) P_{[\tilde M]} D Q_\cW V^*  (e_j \otimes \overline e_j)  V Q_\cW D^* P_{[\tilde M]}$. 
We will aim to apply Theorem \ref{thm:matrixBernstein} to derive the following.
$$
\bbP\left(\nm{ \sum_{j=1}^M Z_j} \geq   \frac{\alpha}{4} \right) \leq \gamma.
$$ Notice that $\br{Z_j}_{j=1}^M$ are independent mean-zero matrices.  Let $\xi_j = P_{[\tilde M]}  D Q_\cW V^* e_j$. Then,
$$
\nm{Z_j} \leq \tilde q_j^{-1} \norm{\xi_j \otimes \overline\xi_j} \leq \tilde q_j^{-1} \nm{\xi_j}^2.
$$
To bound this, note that for each $j=1,\ldots, M$,
\eas{
\nm{\xi_j}^2 &= \sup_{\alpha\in\bbC^{\tilde M}} \abs{\ip{DQ_\cW V^* e_j}{\alpha}}^2
= \sup_{\alpha\in\bbC^{\tilde M}, \nm{\alpha}_{\ell^2}=1} \abs{\ip{DQ_\cW D^*D V^* e_j}{\alpha}}^2\\
&\leq \sup_{\alpha\in\bbC^{\tilde M}, \nm{\alpha}_{\ell^2}=1} \sum_{l=1}^r\abs{\ip{DQ_\cW D^* P_{\Lambda_l}D V^* e_j}{\alpha}}^2\\
&\leq \sup_{\alpha\in\bbC^{\tilde M}, \nm{\alpha}_{\ell^2}=1} \sum_{l=1}^r \nm{P_{\Lambda_l}D V^* e_j}_{\ell^\infty}^2
\nm{ P_{\Lambda_l} D Q_\cW D^* \alpha}_{\ell^1}^2\\
&\leq \sup_{g\in\cW, \nm{g}=1} \sum_{l=1}^r \nm{P_{\Lambda_l}D V^* e_j}_{\ell^\infty}^2
\nm{ P_{\Lambda_l} D g}_{\ell^1}^2= \sup_{g\in\cW, \nm{g}=1} \sum_{l=1}^r \mu_{\mathbf{N}, \mathbf{M}}^2(j,l)
\nm{ P_{\Lambda_l} D g}_{\ell^1}^2.
}
 So, since $\nm{D g}=1$ implies that $\nm{ P_{\Lambda_l} D g}_{\ell^1}^2  \leq \kappa_l(\mathbf{N},\mathbf{s})$ for $l=1,\ldots,r$ by (4) of Corollary \ref{cor:loc_sp}, we have that
$$
\nm{Z_j} \leq \max_{j=1}^r q_j^{-1}  \sup_{g\in\cW} \sum_{l=1}^r \mu_{\mathbf{N}, \mathbf{M}}^2(j,l)
\, \kappa_l(\mathbf{N},\mathbf{s}) =:K.
$$
Also,
\eas{
&\nm{ \sum_{j=1}^M \bbE(Z_j^* Z_j)}_{\ell^2 \to \ell^2} = \sup_{\nm{x}_{\ell^2} =1} \abs{\sum_{j=1}^M (\tilde q_j^{-1}-1) \ip{(\xi_j \otimes \overline\xi_j) x}{(\xi_j \otimes \overline\xi_j) x}}\\
& = \sup_{\nm{x}  =1} \abs{\sum_{j=1}^M (\tilde q_j^{-1}-1) \ip{\xi_j}{\xi_j} \ip{\xi_j}{x} \ip{\xi_j}{x}}\\
&\leq  \sup_{\nm{x}  =1} \left(\max_{k=1}^M \br{(\tilde q_k^{-1} -1) \, \nm{\xi_k }^2  }\right) \,   \sum_{j=1}^M \abs{ \ip{e_j}{ V  D^*  P_{[\tilde M]}x}}^2\\
&\leq   \sup_{\nm{x} =1} \left(\max_{k=1}^M \br{(\tilde q_k^{-1} -1) \, \nm{\xi_k }^2  }\right) \,  \nm{VD^* } \nm{x} \\
&=\max_{k=1}^M \br{(\tilde q_k^{-1} -1) \, \nm{\xi_k }^2  } \leq K.
}
Thus, by Theorem \ref{thm:matrixBernstein},
\eas{
&\bbP\left( \nm{ Q_{\cW}  V^* (q_1^{-1} P_{\Omega_1 }\oplus \cdots \oplus q_r^{-1}  P_{\Omega_r})  V  Q_{\cW}  -  Q_\cW  V^*  V  Q_{\cW}}_{\cH \to \cH} \geq \alpha \right)
\leq  4\tilde M \exp\left(- \frac{\alpha^2/8}{K+ K  \alpha/6}\right) \leq \gamma
}
provided that
\be{\label{eq:prop3_final}
\log\left(\frac{4\tilde M}{\gamma}\right)K  \leq \frac{\alpha^2}{16}, \qquad
\log\left(\frac{4\tilde M}{\gamma}\right)K \leq \alpha/2,}
which are implied by the given assumptions.

\end{proof}

\begin{proposition}\label{prop4}
Let $\alpha >0$ and let $\gamma \in [0,1]$. 
Let
$$
\tilde M = \min\br{i\in\bbN : \sup_{j\geq i} \nm{ P_{[M]}  V    D^* e_j}_{\ell^2} + \nm{Q_{\cR(D^*P_{[N]})}  D^* e_j}_{\ell^2} < \sqrt{\frac{5 q}{4}}  } .
$$
Then $\tilde M$ is finite and
$$
\bbP\left( \sup_{j\in\bbN} \nm{ P_{\br{j}}  D  Q_{\cW}^\perp  V^* (q_1^{-1}  P_{\Omega_1} \oplus \cdots \oplus q_r^{-1}  P_{ \Omega_r})  V  Q_{\cW}^\perp  D^*  P_{\br{j}}} \geq \frac{5}{4} \right) \leq \gamma
$$
provided that for each $k=1,\ldots, r$ and each $j\in\bbN$,
$$
1 \gtrsim   B^2 \max_{j=1}^r (q_k^{-1}-1) \, \mu_{\mathbf{N}, \mathbf{M}}(k,j) \, \log\left(\frac{2\tilde M}{\gamma}\right),
$$
where $B= \nm{DQ_\cW^\perp D^*}_{\ell^\infty \to \ell^\infty}$.

\end{proposition}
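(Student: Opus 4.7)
The quantity $\nm{P_{\br{j}}  D  Q_{\cW}^\perp  V^* (q_1^{-1}  P_{\Omega_1} \oplus \cdots \oplus q_r^{-1}  P_{ \Omega_r})  V  Q_{\cW}^\perp  D^*  P_{\br{j}}}$ is the operator norm of a rank-one self-adjoint operator on $\mathrm{span}\br{e_j}$, so it equals the scalar $T_j := \sum_{k=1}^r q_k^{-1} \sum_{i\in\Gamma_k} \delta_i \abs{\ip{\xi_j}{e_i}}^2$, where $\xi_j := V Q_{\cW}^\perp D^* e_j$ and $\delta_i$ is the Bernoulli indicator that $i\in\Omega$. My plan is to dispatch the tail indices $j\geq \tilde M$ deterministically and then treat $j<\tilde M$ using the scalar Bernstein inequality followed by a union bound. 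Finiteness of $\tilde M$ is immediate: both $P_{[M]}VD^*$ and $Q_{\cR(D^*P_{[N]})}D^*$ are finite-rank operators (since $P_{[M]}$ and $Q_{\cR(D^*P_{[N]})}$ are), so $\nm{P_{[M]}VD^* e_j}_{\ell^2}$ and $\nm{Q_{\cR(D^*P_{[N]})}D^* e_j}_{\ell^2}$ tend to zero as $j\to\infty$.

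For $j\geq \tilde M$, I estimate $T_j \leq q^{-1}\nm{P_{[M]}\xi_j}_{\ell^2}^2$ and, using the inclusion $\cW \subseteq \cR(D^*P_{[N]})$ (so that $\nm{Q_\cW D^* e_j}\leq \nm{Q_{\cR(D^*P_{[N]})} D^* e_j}$) together with the triangle inequality, I obtain $\nm{P_{[M]}\xi_j}\leq \nm{P_{[M]}VD^* e_j} + \nm{Q_{\cR(D^*P_{[N]})}D^* e_j}<\sqrt{5q/4}$ by the definition of $\tilde M$. Hence $T_j<5/4$ for every such $j$ with probability one, and the bad event is contained in $\br{\exists j<\tilde M : T_j \geq 5/4}$.

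For each fixed $j<\tilde M$, I decompose $T_j = \mathbb{E}(T_j) + \sum_{i=1}^M Z_i^{(j)}$ with $Z_i^{(j)}:=(\tilde q_i^{-1}\delta_i -1)\abs{\ip{\xi_j}{e_i}}^2$ (real mean-zero) and $\tilde q_i = q_k$ for $i\in\Gamma_k$. Since $V$ and $Q_\cW^\perp D^*$ are contractions, $\mathbb{E}(T_j)=\nm{P_{[M]}\xi_j}^2\leq 1$, so it suffices to show $\bbP(\abs{\sum_i Z_i^{(j)}}\geq 1/4)\leq \gamma/\tilde M$ and union-bound. The Bernstein parameters are controlled by a single pointwise estimate: once $\abs{\ip{\xi_j}{e_i}}^2\leq B^2 \max_{l=1}^r\mu_{\mathbf{N},\mathbf{M}}^2(k,l)$ is available for every $i\in\Gamma_k$, we obtain $\sigma^2\leq B^2\max_k (q_k^{-1}-1)\max_l \mu_{\mathbf{N},\mathbf{M}}^2(k,l)$ (using also $\nm{P_{[M]}\xi_j}^2\leq 1$) and $K\leq B^2\max_k q_k^{-1}\max_l \mu_{\mathbf{N},\mathbf{M}}^2(k,l)$; feeding these into Theorem \ref{thm:bernstein} with $t=1/4$ then reproduces the stated sampling condition.

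The crux of the proof, and what I expect to be the main obstacle, is this uniform pointwise bound. A direct H\"older estimate on $\ip{VD^* y}{e_i}$ against rows of $VD^*$ fails: the $\ell^1$-row norm of $VD^*$ need not be finite and the $\ell^2$-row norm only yields the trivial value one. The fix is to transpose the inner product and use the isometry identity $D^*D = I_\cH$. Namely,
\bes{
\ip{\xi_j}{e_i} = \ip{D^* e_j}{Q_\cW^\perp V^* e_i}_\cH = (D Q_\cW^\perp V^* e_i)_j,
}
and writing $Q_\cW^\perp V^* e_i = Q_\cW^\perp D^* (DV^* e_i)$ one gets $\abs{\ip{\xi_j}{e_i}}\leq \nm{DQ_\cW^\perp D^*}_{\ell^\infty\to\ell^\infty}\nm{DV^* e_i}_{\ell^\infty} = B\cdot \mu(P_{\br{i}} VD^*)\leq B\mu(P_{\Gamma_k}VD^*)$ for $i\in\Gamma_k$ (using that the entries of $DV^*$ are conjugates of those of $VD^*$). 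Finally, the definition of the local coherences gives $\mu(P_{\Gamma_k}VD^*) = \max_{l=1}^r \mu_{\mathbf{N},\mathbf{M}}(k,l)$, yielding the desired estimate uniformly in $j$.
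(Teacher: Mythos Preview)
Your proof is correct and follows essentially the same route as the paper: reduce the rank-one operator norm to the scalar $T_j$, bound the deterministic part $\nm{P_{[M]}\xi_j}^2\leq 1$, dispatch the tail $j\geq \tilde M$ deterministically via the finite-rank observation, and apply the scalar Bernstein inequality with a union bound over the remaining $\tilde M$ indices. Your derivation of the key pointwise estimate $\abs{\ip{\xi_j}{e_i}}\leq B\,\mu(P_{\Gamma_k}VD^*)$ by inserting $D^*D$ and reading off the $\ell^\infty\to\ell^\infty$ norm of $DQ_\cW^\perp D^*$ is exactly the paper's mechanism, presented somewhat more directly than the paper's $\sup_{\nm{\alpha}_{\ell^1}\leq B}$ chain; both arrive at the same bound $B^2\max_l \mu_{\mathbf{N},\mathbf{M}}^2(k,l)$.
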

\begin{proof} 
Let $\br{\delta_j}_{j=1}^M$ be Bernoulli random variables such that $\bbP(\delta_j = 1) = \tilde q_j$ where $\tilde q_j = q_k$ for $j=\Gamma_k$. Observe that for each $j\in\bbN$,
\eas{
 &\nm{ P_{\br{j}}  D  Q_\cW^\perp V^* (q_1^{-1} \Omega_1 \oplus \cdots \oplus q_r^{-1} \Omega_r)  V   Q_\cW^\perp D^*  P_{\br{j}}}\\
& = \abs{\sum_{k=1}^M (\tilde q_k^{-1} \delta_k -1)  P_{\br{j}}  D  Q_\cW^\perp V^* (e_k \otimes \overline e_k) V  Q_\cW^\perp D^*  P_{\br{j}}
 +  P_{\br{j}}  D Q_\cW^\perp V^*  P_{[M]} V Q_\cW^\perp D^*  P_{\br{j}}
 }\\
 &\leq \abs{\sum_{k=1}^M (\tilde q_k^{-1} \delta_k -1) \abs{\ip{ V Q_\cW^\perp D^* e_j}{e_k}}^2} + 1,
}
where we have applied $\nm{V} = \nm{D} = 1$ in the last line.
For each $j\in\bbN$ and $k=1,\ldots, M$, define
$Z_k^j =  (\tilde q_k^{-1} \delta_k -1) \abs{\ip{ V Q_\cW^\perp D^* e_j}{e_k}}^2$. To prove this proposition, we need to derive conditions under which
\be{\label{eq:prop4:ets}
\bbP\left(\sup_{j\in\bbN}\abs{\sum_{k=1}^M Z_k^j} \geq \frac{1}{4}\right).
}
We first seek to apply  Theorem \ref{thm:bernstein} to analyse  $
\bbP\left(\abs{\sum_{k=1}^M Z_k^j} > \frac{1}{4}\right)
$ for each $j\in\bbN$. Observe that
$$
\abs{Z_k^j} \leq \sup_{j\in\bbN} \max_{l=1}^r (q_l^{-1}-1) \, \mu^2( P_{\Gamma_l}  V Q_\cW^\perp  D^*  P_{\br{j}} ) =:K,
$$
and
\eas{
&\sum_{k=1}^M \bbE(\absu{Z_k^j}^2)
 = \sum_{k=1}^M ( \tilde q_k^{-1} -1) \abs{\ip{ V Q_\cW^\perp D^* e_j}{e_k}}^4\\
& \leq \max_{l=1}^r (q_l^{-1}-1) \, \mu^2( P_{\Gamma_l}  V Q_\cW^\perp D^*  P_{\br{j}})\, \nm{ V Q_\cW^\perp  D^* e_j}_{\ell^2}^2\\
& \leq \sup_{j\in\bbN}\max_{l=1}^r (q_l^{-1}-1) \, \mu( P_{\Gamma_l}  V Q_\cW^\perp D^* P_{\br{j}} )^2 =: \sigma^2.
 } Thus, by applying Theorem \ref{thm:bernstein},
 $$
 \bbP\left(\abs{\sum_{k=1}^r Z_k^j}^2 \geq \frac{1}{4}\right) \leq 2\exp\left(    -\frac{1/32}{\sigma^2 + K  /12} \right).
 $$
 In order to use this to bound (\ref{eq:prop4:ets}), we will proceed as in Proposition \ref{prop2} to show that there exists $\Upsilon$ be such that $\Upsilon^c$ is a finite set and
 $$
 \bbP\left(\sup_{j\in\Upsilon} \nm{ P_{\br{j}}  D Q_\cW^\perp   V^* (q_1^{-1} \Omega_1 \oplus \cdots \oplus q_r^{-1} \Omega_r)  V  Q_\cW^\perp  D^*  P_{\br{j}}}_{\ell^2 \to \ell^2} \geq \frac{5 }{4}\right) = 0.
 $$
 Let $q = \min\br{q_k: k=1,\ldots, r}$.
 Since $ P_{[M]}  V D^*$ and $Q_{\cR(D^*P_{[N]})}  D^* $ are both of finite rank,
 \eas{
 &\nm{ P_{\br{j}}  D  Q_\cW^\perp   V^* (q_1^{-1} \Omega_1 \oplus \cdots \oplus q_r^{-1} \Omega_r)  V   Q_\cW^\perp  D^*  P_{\br{j}}}_{\ell^2 \to \ell^2} \\
& \leq \frac{1}{q} \nm{ P_{[M]}  V   Q_\cW^\perp  D^* e_j}_{\ell^2}^2  \leq \frac{1}{q}\left(\nm{ P_{[M]}  V    D^* e_j}_{\ell^2} + \nm{Q_{\cR(D^*P_{[N]})}  D^* e_j}_{\ell^2}\right)^2 \to 0
 }
 as $j\to \infty$. Therefore, it suffices to let
 $$
 \Upsilon^c = \br{j\in\bbN : \nm{ P_{[M]}  V    D^* e_j}_{\ell^2} + \nm{Q_{\cR(D^*P_{[N]})}  D^* e_j}_{\ell^2}  > \sqrt{\frac{5 q}{4}}}
 $$
which is a finite set. Observe also that
 $\abs{\Upsilon^c} \leq \tilde M  < \infty$.  Therefore, by applying the union bound,
  $$
 \bbP\left(\sup_{j\in\bbN} \nm{ P_{\br{j}}  D Q_\cW^\perp   V^* (q_1^{-1} \Omega_1 \oplus \cdots \oplus q_r^{-1} \Omega_r)  V  Q_\cW^\perp  D^*  P_{\br{j}}}  \geq \frac{5 }{4}\right)  \leq
 2 \tilde M \exp\left(    -\frac{1/32}{\sigma^2 + K /12} \right).
 $$
 Thus, it suffices to let, for each $k=1,\ldots, r$ and each $j\in\bbN$,
 $$
 1 \gtrsim   (q_k^{-1}-1) \, \mu( P_{\Gamma_k}  V  Q_{\cW}^\perp  D^*  e_j )^2 \, \log\left(\frac{2\tilde M}{\gamma}\right).
$$
Let $B = \nm{DQ_{\cW}^\perp  D^*  e_j }_{\ell^1}$. Finally, the following observation concludes the proof of this proposition.
\eas{
&(q_k^{-1}-1) \, \mu( P_{\Gamma_k}  V  Q_{\cW}^\perp  D^*  e_j )^2 \, \log\left(\frac{2\tilde M}{\gamma}\right)\\
&\leq B \sup_{\nm{\alpha}_{\ell^1}\leq B} (q_k^{-1}-1) \, \mu( P_{\Gamma_k}  V D^*) \nm{ P_{\Gamma_k}  V D^* \alpha }_{\ell^\infty} \, \log\left(\frac{2\tilde M}{\gamma}\right)\\
&\leq B \sup_{\nm{\alpha}_{\ell^1}\leq B} \sum_{j\in\bbN} \abs{\alpha_j} (q_k^{-1}-1) \, \mu( P_{\Gamma_k}  V D^*) \nm{ P_{\Gamma_k}  V D^* e_j }_{\ell^\infty} \, \log\left(\frac{2\tilde M}{\gamma}\right)
\\
&\leq B^2 \max_{j=1}^r (q_k^{-1}-1) \, \mu^2_{\mathbf{N}, \mathbf{M}}(k,j) \, \log\left(\frac{2\tilde M}{\gamma}\right).
}

\end{proof}

\section{Constructing the dual certificate}\label{sec:dual_constr}

This section will show that, with high probability, one can construct $\rho \in\cR( V^*  P_\Omega)$ which satisfies conditions (iii) to (v) of Proposition \ref{prop:dual_certificate} if $\Omega = \Omega_{\mathrm{M}, \mathbf{m}}^{\mathrm{Ber}}$ is a Bernoulli multilevel sampling scheme satisfying Assumption \ref{dual_assumptions}.

As explained in \cite{BAACHGSCS}, the sampling model of $\Omega = \Omega_1 \cup \cdots \cup \Omega_r$ with $\Omega_i \sim \mathrm{Ber}(q_k, \Gamma_k)$ is equivalent to the following sampling model. $\Omega = \Omega_1\cup \cdots \Omega_r$ with
$$
\Omega_k = \Omega_k^1 \cup \cdots \Omega_k^\mu, \qquad \Omega_k^j\sim \mathrm{Ber}(q_k^j, \Gamma_k), \quad k=1,\ldots, r, \quad j=1,\ldots, \mu,
$$
for $\mu\in\bbN$ and $\br{q_k^j}_{j=1}^\mu$ such that
\be{\label{eq:q_k^j}
(1-q_k^1)(1-q_k^r)\cdots (1-q_k^\mu) = 1-q_k.
}
We will consider this alternative sampling model throughout this section so that we can apply the golfing scheme of \cite{gross2011recovering} to construct the dual certificate described in Proposition \ref{prop:dual:1}.
This section consists of the following steps:
\begin{enumerate}
\item Define the dual certificate.
\item Show that the constructed dual certificate satisfies conditions (iii) to (v) of Proposition \ref{prop:dual_certificate} provided that certain events occur.
\item Show that the events described in step 2 occur with high probability.
\end{enumerate}

\paragraph{Definition of the dual certificate}
Let $\gamma = \epsilon/6$. 
Let $\cL = \log(4 q^{-1} \sqrt{\kappa} \tilde M \nm{ D D^*}_{\ell^\infty \to \ell^\infty})$. Define $\mu, \nu\in\bbN$, $\br{\alpha_j}_{j=1}^\mu$ and $\br{\beta_j}_{j=1}^\mu$ as follows.
\eas{
&\mu = 8 \lceil 3\nu + \log(\gamma^{-1/2}) \rceil, \quad \nu = \log\left(8  q^{-1}\sqrt{\kappa_{\max}}  \tilde M\nm{ D D^*}_{\ell^\infty \to \ell^\infty}\right),\\
&q_k^1=q_k^1=\frac{1}{4} q_k, \quad \tilde q_k = q_k^3=\cdots = q_k^\mu,\qquad k=1,\ldots,r,\\
&\alpha_1 = \alpha_2 = \frac{1}{2 \cL^{1/2}}, \quad \alpha_i = \frac{1}{2}, \quad i=3,\ldots, \mu,\\
&\beta_1 =\beta_2 = \frac{1}{4}, \quad \beta_i = \frac{\cL}{4}, \quad i=3,\ldots, \mu.
}
For $j=1,\ldots, \mu$, define $ U_j: \cB(\ell^2(\bbN), \ell^2(\bbN))$ by $$ U_j =
\frac{1}{q^j_1} P_{\Omega^j_1}\oplus \cdots \oplus \frac{1}{q^j_r}  P_{\Omega^j_r}.$$

Let $Z_0 =  D^* \sgn( P_{\Delta}  D f)$ and for $i=1,2$, define
$$
Z_i = Z_0 -  Q_{\cW} Y_i, \quad Y_i = \sum_{j=1}^i  V^*  U_j  V  Z_{j-1}.
$$
Let $\Theta_1 = \br{1}$, $\Theta_2 = \br{1,2}$ and for $i\geq 3$, define 
\eas{
\Theta_i &= \begin{cases}
\Theta_{i-1}\cup \br{i} & \nm{ T D (Z_{i-1} -    V^*  U_i  V     Z_{i-1})}_{\ell^2} \leq \alpha_i \nm{T  D Z_{i-1}}_{\ell^2}\\
& \nm{  P_\Delta^\perp   D  Q_\cW^\perp  V^*  U_i  V     Z_{i-1} }_{\ell^\infty} \leq \beta_i \nm{ T D Z_{i-1}}_{\ell^2}\\
\\
\Theta_{i-1} & \text{otherwise.}
\end{cases}\\
Y_i &= \begin{cases}
\sum_{j\in\Theta_i}  V^* U_j  V     Z_{j-1} & i \in \Theta_i\\
Y_{i-1} &\text{otherwise}.
\end{cases}\\
Z_i &= \begin{cases}
Z_0 -  Q_{\cW}Y_i & i \in \Theta_i\\
Z_{i-1} &\text{otherwise}.
\end{cases}
} 
Note that $Z_i\in\cW$ for each $i=1,\ldots, \mu$.
Define the following events.
\eas{
A_i :\quad &\nm{  T D(  Z_{i-1} -  V^* U_i  V     Z_{i-1})}_{\ell^2} \leq \alpha_i \nm{ T D Z_{i-1}}_{\ell^2}, \quad i=1,2,\\
B_i :\quad &\nm{  P_\Delta^\perp   D  Q_\cW^\perp  V^* U_i  V     Z_{i-1} }_{\ell^\infty} \leq \beta_i \nm{  T D Z_{i-1}}_{\ell^2}, \quad i=1,2,\\
B_3: \quad& \abs{\Theta_\mu} \geq \nu,\\
B_4 :\quad &\cap_{i=1}^2 A_i \cap \cap_{i=1}^3 B_i.
}
Let $\tau(j)$ denote the $j^{th}$ element of $\Theta_\mu$ (in order of appearance). 

\paragraph{Properties of the dual certificate}
Suppose that $B_4$ occurs, and let $\rho = Y_{\tau(\nu)}$. By definition, $\rho =  V^*  P_\Omega w$ for some $w\in\ell^2(\bbN)$. We now show that $\rho$ satisfies (iii) and (iv) of Proposition \ref{prop:dual_certificate} and derive an upper bound on $\nm{w}_{\ell^2}$. 
By definition,
\be{\label{eq:Z_j}
Z_{\tau(i)} = Z_0 -  Q_\cW \sum_{j\in\Theta(i)}  V^*  U_{j}  V  Z_{j-1}
= ( Q_{\cW}-  Q_{\cW}  V^*  U_{\tau(i)}  V   ) Z_{\tau(i-1)}.
}
\begin{enumerate}
\item Since $D^*D = I$, by construction of $\rho$,   
\eas{
&\nm{Z_0 -  Q_\cW \rho} = \nm{Z_0 - Q_\cW Y_{\tau(\nu)}} =\nm{Z_{\tau(\nu)}} \\
&= \nm{D^*D( Q_\cW-  Q_{\cW}  V^* U_{\tau(\nu)}  V   Q_{\cW}) Z_{\tau(\nu-1)}}\\
&= \nm{T^{-1}T D ( Q_\cW-  Q_{\cW}  V^* U_{\tau(\nu)}  V   Q_{\cW}) Z_{\tau(\nu-1)} }.
}
Recalling the definition of $(\alpha_{i})_{i=1}^\mu$ and recalling that $\kappa_{\max} = r  \max\br{\kappa_j}$ and $\nm{T^{-1}}\leq \sqrt{\kappa_{\max}}$ , it follows that
\eas{
&\nm{Z_0 -  Q_\cW \rho}  \leq \sqrt{r \kappa }    \,  \nm{T   D \left(Z_{\tau(\nu-1)} -   V^*  U_{\tau(\nu)}  V   Q_{\cW} Z_{\tau(\nu-1)} \right)}_{\ell^2}\\
&\leq \sqrt{\kappa_{\max}}\,  \alpha_{\tau(\nu)} \,  \nm{T D Z_{\tau(\nu-1)}}_{\ell^2}\\
&\leq \sqrt{ \kappa_{\max}}\,    \nm{ T  D Z_0}_{\ell^2} \, \prod_{j=1}^\nu \alpha_{\tau(j)}\\
& \leq \sqrt{\kappa_{\max}} \, \nm{    D  D^*   }_{\ell^\infty \to \ell^\infty} \, \prod_{j=1}^\nu \alpha_{\tau(j)}\\
&\leq \sqrt{ \kappa_{\max}}\,  \nm{  D  D^* }_{\ell^\infty \to \ell^\infty}\, 2^{-\nu} \leq \frac{\sqrt{q}}{8},
}
by our choice of $\nu$. Note that to get from the third line to the forth line, we observe that $\nm{DZ_0}_{\ell^\infty}\leq \nm{DD^*}_{\ell^\infty\to \ell^\infty}$, then recall from (3) of Corollary \ref{cor:loc_sp} that $\nm{P_{\Lambda_j} D g}_{\ell^2}^2 \leq \kappa_j$ for all $g=D^*P_\Delta x$ with $\nm{Dg}_{\ell^\infty}\leq 1$. Therefore,
$$
\nm{TDZ_0}_{\ell^2}^2 = \sum_{j=1}^r \frac{\nm{P_{\Lambda_j}D Z_0}}{r\kappa_j}\leq \nm{DD^*}_{\ell^\infty \to \ell^\infty}.
$$
\item 
Recalling our definition of $(\beta_j)_{j=1}^\mu$ and the estimate used in the previous step to bound $\nm{  T D Z_{\tau(j)}}_{\ell^2}$,
\eas{
&\nm{ P_\Delta^\perp   D  Q_\cW^\perp \rho}_{\ell^\infty} \leq \sum_{j=1}^\nu \nm{ P_\Delta^\perp   D  Q_\cW^\perp  V^*  U_{\tau(j)}  V    Q_\cW Z_{\tau(j-1)}}_{\ell^\infty}\\
&\leq \sum_{j=1}^\nu \beta_{\tau(j)} \nm{  T D Z_{\tau(j-1)}}_{\ell^2}\\
&\leq \nm{   D  D^* }_{\ell^\infty \to \ell^\infty} \sum_{j=1}^\nu \beta_{\tau(j)} \prod_{i=1}^{j-1} \alpha_{\tau(i)}\\
&\leq  \frac{1}{4} \left(1+ \frac{1}{2\sqrt{\cL}}+ \frac{\cL}{2^2 \cL}+ \cdots  + \frac{\cL}{\cL 2^{\nu-1}}\right)
\leq \frac{1}{2}.
}

\item By definition, $\rho =   V^*  P_\Omega w$ with $w= \sum_{j=1}^\nu w_j$ and $w_j = U_{\tau(j)}  V    Z_{\tau(j-1)}$.
For each $j=1,\ldots,\nu$,
\eas{
\nm{w_j}^2 &= \ip{U_{\tau(j)}  V  Z_{\tau(j-1)}}{ U_{\tau(j)}  V Z_{\tau(j-1)}}
 = \sum_{k=1}^r \frac{1}{(q_k^{\tau(j)})^2} \nm{ P_{\Omega_k^{\tau(j)}}  V  Z_{\tau(j-1)}}^2\\
&\leq K_{\tau(j)} \sum_{k=1}^r \frac{1}{q_{k}^{\tau(j)}} \ip{ V^*  P_{\Omega_k^{\tau(j)}}  V  Z_{\tau(j-1)}}{ \rZ_{\tau(j-1)}}
=  K_{\tau(j)} \ip{ V^* U_{\tau(j)}  V  Z_{\tau(j-1)}}{ Z_{\tau(j-1)}},
}
where $K_{\tau(j)}  = \max \br{1/q_k^{\tau(j)}: k=1,\ldots, r}$.
Using (\ref{eq:Z_j}), we have that
\eas{
&\ip{ V^* U_{\tau(j)}  V  Z_{\tau(j-1)}}{ Z_{\tau(j-1)}} \\
&=\ip{    V^* U_{\tau(j)}  V  Z_{\tau(j-1)} - Z_{\tau(j-1)}}{ Z_{\tau(j-1)}} + \ip{Z_{\tau(j-1)}}{ Z_{\tau(j-1)}}\\
&= \ip{Z_{\tau(j)}}{ Z_{\tau(j-1)}} + \ip{Z_{\tau(j-1)}}{ Z_{\tau(j-1)}}.
}
Since $I = D^*D$ and $\nm{T^{-1}} \leq \kappa_{\max}$, we have that 
\eas{
&\abs{\ip{ V^*  U_{\tau(j)}  V   Z_{\tau(j-1)}}{Z_{\tau(j-1)}} } =\abs{\ip{D Z_{\tau(j)}}{D Z_{\tau(j-1)}} + \ip{DZ_{\tau(j-1)}}{ DZ_{\tau(j-1)}} } \\
&\leq  \nm{ D Z_{\tau(j-1)}} \left( \nm{ D Z_{\tau(j)}} + \nm{  D Z_{\tau(j-1)}} \right)\\
&\leq \kappa_{\max} \, \nm{T D Z_{\tau(j-1)}}_{\ell^2} \left( \nm{ T   D Z_{\tau(j)}}_{\ell^2} + \nm{ T   D Z_{\tau(j-1)}}_{\ell^2}\right).
}
Using $\nm{ T  D Z_{\tau(j)}}_{\ell^2} \leq \alpha_{\tau(j)}\nm{ T   D Z_{\tau(j-1)}}_{\ell^2}$, we obtain,
\bes{
\abs{\ip{ V^*  U_{\tau(j)}  V  Z_{\tau(j-1)}}{Z_{\tau(j-1)}} } 
\leq \kappa_{\max} \,(\alpha_{\tau(j)}+ 1) \, \left(\prod_{i=1}^{j-1} \alpha_{\tau(i)}\right)^2.
}
Therefore,
$$
\nm{w_j}_{\ell^2} \leq \sqrt{K_{\tau(j)}\, \kappa_{\max} \,(\alpha_{\tau(j)}+ 1)} \, \left(\prod_{i=1}^{j-1} \alpha_{\tau(i)}\right).
$$
Recall that for $k=1,\ldots, r$, $q_k^j = q_k/4$ for $j=1,2$ and $q_k^j = \tilde q_k$ for all $j\geq 3$. Let $K = \max_{k=1}^r\br{q_k^{-1}}$ and  for $j\geq 3$, first note that $(1-q_k^1) \cdots (1-q_k^\mu) = 1-q_k$ implies that $q_k^1+\cdots + q_k^\mu \geq q_k$. So, we have that $2(\mu - 2) \tilde q_k \geq q_k$ since $q_k^1=q_k^2 = q_k/4$. By our choice of $\mu$, this implies that 
$$
2\left(8\left\lceil 3 \log\left(8 q^{-1}\sqrt{\kappa_{\max}}  \tilde M\nm{ D D^*}_{\ell^\infty \to \ell^\infty}\right) + \log(\gamma^{-1})\right\rceil -2 \right) \tilde q_k \geq q_k,
$$
and for $j\geq 3$,
$$
K_{\tau(j)} \leq 2\left(8\left\lceil 3 \log\left(8 q^{-1}\sqrt{\kappa_{\max}}  \tilde M\nm{ D D^*}_{\ell^\infty \to \ell^\infty}\right) + \log(\gamma^{-1})\right\rceil -2 \right) K.
$$
So, $\nm{w_j}_{\ell^2}$ are bounded as follows.
\bes{
\nm{w_1}_{\ell^2}  \leq 2 \sqrt{K \kappa_{\max}} \sqrt{1+ \frac{1}{2\cL^{1/2}}},\qquad
\nm{w_2}_{\ell^2}  \leq 2 \sqrt{K \kappa_{\max}}  \sqrt{1+ \frac{1}{2\cL^{1/2}}} \sqrt{\frac{1}{2 \cL^{1/2}}},}
and for $j\geq 3$,
\bes{
\nm{w_j}_{\ell^2} \leq 2\sqrt{3} \sqrt{K \kappa_{\max}}   \sqrt{\frac{1}{2^{j-1} \cL^{1/2}}} \sqrt{\left(8\left\lceil 3 \log\left(8 q^{-1}\sqrt{\kappa_{\max}}  \tilde M\nm{ D D^*}_{\ell^\infty \to \ell^\infty}\right) + \log(\gamma^{-1})\right\rceil -2 \right)} .
} Summing these terms yields
$$
\nm{w}_{\ell^2} \lesssim\sqrt{ K \, \kappa_{\max} \,   \left(\frac{\log\left(8 q^{-1}\sqrt{\kappa_{\max}}  \tilde M\nm{ D D^*}_{\ell^\infty \to \ell^\infty}\right) + \log(6/\epsilon)}{\log(4 q^{-1} \sqrt{\kappa_{\max}} \tilde M\nm{ D D^*}_{\ell^\infty \to \ell^\infty})}\right)}.
$$
\end{enumerate}

To show that conditions (iii) to (v) of Proposition \ref{prop:dual_certificate} are satisfied with probability exceeding $1-5\gamma = 1-5\epsilon/6$ under Assumption \ref{dual_assumptions}, we will show that
$\bbP(A_i^c)<\gamma$ for $i=1,2$ and $
\bbP(B_j^c)<\gamma
$ for $j=1,2,3$.

\paragraph{Proof of $\bbP(B_3^c) < \gamma$}
Define the random variables $X_1,\cdots, X_{\mu-2}$ by
$$
X_j = \begin{cases}
0 &\Theta_{j+2} \neq \Theta_{j+1},\\
1 &\text{otherwise}.
\end{cases}
$$
Observe that 
$$\bbP(B_3^c) = \bbP(\abs{\Theta_\mu} < \nu) = \bbP\left(X_1+\cdots + X_{\mu-2} > \mu - \nu\right).$$
Although $\br{X_j}_{j=1}^{\mu-2}$ are not independent random variables, from \cite[Eqn. (7.80) - (7.85)]{adcockbreaking} the above probability can be controlled by independent binary random variables and the standard Chernoff bound, so that $\bbP(B_3^c) \leq \gamma$ 
provided that
\be{\label{eq:toshow_b3}
\frac{1}{4} \geq \bbP(X_j=1| X_{l_1}=\cdots = X_{l_k} = 1)
}
for all $j=1,\ldots, \mu-2$ and $l_1,\ldots, l_k \in \br{1,\ldots, \mu-2}$ such that $j\not\in \br{l_1,\ldots, l_k}$ and t $\mu \geq 8 \lceil 3\nu + \log(\gamma^{-1/2}) \rceil$. It remains to verify that (\ref{eq:toshow_b3}) holds with $p=1/4$. Observe that $X_j = 0$ whenever
$$
\nm{ T D ( Q_\cW -  Q_\cW  V^*  U_j  V  Q_\cW) Z_{i-1}}_{\ell^2} \leq \frac{1}{2}\nm{T D Z_{i-1}}_{\ell^2}
$$
and
$$
\nm{  P_\Delta^\perp  D  Q_\cW^\perp  V^*  U_j  V  Q_\cW Z_{i-1}}_{\ell^\infty} \leq \frac{\cL}{4}\nm{  T D Z_{i-1}}_{\ell^2}
$$
for $i=j+2$. Thus,
(\ref{eq:toshow_b3}) holds with $p=1/4$ if
\be{\label{eq:X_j_prob1}
\bbP\left(
\nm{ T D ( Q_\cW -  Q_\cW  V^*  U_j  V  Q_\cW) Z_{i-1}}_{\ell^2} > \frac{1}{2}\nm{ T D Z_{i-1}}_{\ell^2} \right) \leq \frac{1}{8},
}
and
\be{\label{eq:X_j_prob2}
\bbP\left( \nm{  P_\Delta^\perp  D  Q_\cW^\perp  V^*  U_j  V  Q_\cW Z_{i-1}}_{\ell^\infty} > \frac{\cL}{4}\nm{ T  D Z_{i-1}}_{\ell^2} \right)\leq \frac{1}{8}.
}
By Proposition \ref{prop1}, (\ref{eq:X_j_prob1}) is implied by (C1) below, and by Proposition \ref{prop2}, (\ref{eq:X_j_prob2}) is implied by (C2) below.
\begin{itemize}

\item[(C1)]
Let
\be{\label{eq:C1a}
\nm{T  D (Q_\cW V^*  P_{[M]}  V Q_\cW  -   Q_\cW ) D^* T^{-1}}_{\ell^2 \to \ell^2} \leq \frac{1}{4}, 
} 
\be{\label{eq:C1b}
\tilde q_k \gtrsim  \sqrt{r}B \sum_{l=1}^r \mu_{\mathbf{N}, \mathbf{M}}^2( k,l)\, \kappa_l, \qquad k=1,\ldots, r,
} 
and  
\be{\label{eq:C1c}
1\gtrsim \, rB^2  \sum_{k=1}^r (\tilde q_k^{-1}-1) \, \mu_{\mathbf{N}, \mathbf{M}}^2( k,j)
\, \hat \kappa_k, \qquad  j=1, \ldots, r.
} 
\item[(C2)]
Let 
\be{\label{eq:C2a}
\nm{ P_\Delta^\perp   D  Q_{\cW}^\perp  V^*  P_{[M]}  V Q_\cW D^*T^{-1}}_{\ell^2 \to \ell^\infty} \leq \frac{\cL}{8},
}   
\be{\label{eq:C2c}
\tilde q_k \gtrsim  \frac{\sqrt{r} B  \log\left(32 \tilde M\right)}{\cL}\,
 \sum_{l=1}^r \mu_{\mathbf{N}, \mathbf{M}}^2( k,l)\, \kappa_l, \qquad k=1,\ldots, r,
} 
and
\be{\label{eq:C2b}
 1 \gtrsim
\frac{rB^2 \log\left(32 \tilde M\right)}{\cL^2}  \,  \sum_{k=1}^r (\tilde q_k^{-1} -1) \, \mu_{\mathbf{N}, \mathbf{M}}^2( k,j) \, \hat \kappa_k, \qquad j=1,\ldots, r.
} 
\end{itemize}
It remains to show that (C1) and (C2) are implied by Assumption \ref{dual_assumptions}. First, (\ref{eq:C1a}) and (\ref{eq:C2a}) are implied by (a) and (b) of Assumption \ref{dual_assumptions} respectively because $\nm{T^{-1}}\leq \sqrt{\kappa_{\max}}$ and $\nm{T}\leq 1/\sqrt{\kappa_{\min}}$. We now show that (d) of Assumption \ref{dual_assumptions} implies conditions (\ref{eq:C1c}) and (\ref{eq:C2b}).
Since $(1-q_k^1) \cdots (1-q_k^\mu) = 1-q_k$ implies that $q_k^1+\cdots + q_k^\mu \geq q_k$, by our choice of $q_k^1=q_k^2 = q_k/4$ and $q_k^j = \tilde q_k$ for $j\geq 3$, it follows that $2(\mu - 2) \tilde q_k \geq q_k$.
 From (d) of Assumption \ref{dual_assumptions}, we have that  for some appropriate constant $C$,
$q_k \gtrsim rB (\log( \epsilon^{-1}) +1) \log(\tilde M   \sqrt{\kappa_{\max}}  ) \, \hat q_k$ such that 
$\br{\hat q_k}_{k=1}^r$ satisfies, 
$$
1\gtrsim  \sum_{k=1}^r (\hat q_k^{-1}-1) \, \mu_{\mathbf{N}, \mathbf{M}}^2(k,j)
\, \hat \kappa_k, \qquad j=1,\ldots, r.
$$ 
So,
\eas{
&2(8\lceil 3 \log(8 q^{-1} \tilde M \sqrt{ \kappa_{\max}}\nm{ D D^*}_{\ell^\infty \to \ell^\infty}) + \log(\gamma^{-1})\rceil -2) \tilde q_k \\&\geq q_k \gtrsim
(\log( \epsilon^{-1}) +1) \log(8 \tilde M q^{-1} \sqrt{\kappa_{\max}}  \nm{ D D^*}_{\ell^\infty \to \ell^\infty}) \, \hat q_k.
}
Since $\gamma = \epsilon/6$, it follows that $\tilde q_k \gtrsim \hat q_k$.
Thus, it follows that given any  $j=1,\ldots, r$,
\eas{
&1\gtrsim     \sum_{k=1}^r (\hat q_k^{-1}-1) \, \mu_{\mathbf{N}, \mathbf{M}}^2(k,j)
\, \hat \kappa_k \\&
\gtrsim    \sum_{k=1}^r (\tilde q_k^{-1} -1) \, \mu_{\mathbf{N}, \mathbf{M}}^2(k,j)
\, \hat \kappa_k
}
as required.

Finally, we show that the remaining conditions (\ref{eq:C1b}) and (\ref{eq:C2c}) are implied by (c) of Assumption \ref{dual_assumptions}. Recall that (c) imposes that for some appropriate constant $C$ and  each $k=1,\ldots, r$
$$
 q_k \geq C\, (\log( \epsilon^{-1}) +1) \log(8\tilde M  q^{-1}\sqrt{\kappa_{\max}}  \nm{ D D^*}_{\ell^\infty \to \ell^\infty})  \,  \sum_{l=1}^r \mu_{\mathbf{N}, \mathbf{M}}^2(k,l)
\,  \kappa_k.
$$
Since
$$2(8\lceil 3 \log(8 q^{-1} \tilde M \sqrt{\kappa_{\max}}\nm{ D D^*}_{\ell^\infty \to \ell^\infty}) + \log(\gamma^{-1})\rceil -2) \tilde q_k \geq q_k,$$
it follows that 
$$
 \tilde q_k \gtrsim    \sum_{l=1}^r \mu_{\mathbf{N}, \mathbf{M}}^2(k,l)
\, \kappa_l,
$$
as required.

\paragraph{Proof of $\bbP(A_i^c)\leq \gamma$ for $i=1,2$}
Recall from (a) of Assumption \ref{dual_assumptions} that$$
\nm{  T D  Q_\cW V^*  P_{[M]}^\perp V Q_\cW D^*   }_{\ell^2 \to \ell^2} \leq \frac{\sqrt{\kappa_{\min}}}{4\sqrt{\cL \kappa_{\max}}},
$$
which implies (\ref{eq:prop1_assp}) of Proposition \ref{prop1} with $\alpha = \alpha_1$.
So, by Proposition \ref{prop1},
$$
\bbP\left(A_i^c\right) \leq \gamma
$$
provided that  
$$
q_k \gtrsim \cL^{1/2} \log(3/\gamma) \, \sqrt{r}B \, \sum_{l=1}^r \mu_{\mathbf{N}, \mathbf{M}}^2(k,l)
\,  \kappa_l, \qquad k=1,\ldots, r
$$
and 
$$
1\gtrsim \cL \, \log(3/\gamma) \,  rB^2 \sum_{k=1}^r (q_k^{-1}-1) \, \mu_{\mathbf{N}, \mathbf{M}}^2(k,j)
\, \hat \kappa_k, \qquad j=1,\ldots, r.
$$
These two conditions are implied by (c) and (d) of Assumption \ref{dual_assumptions}.

\paragraph{Proof  of $\bbP(B_i^c) \leq \gamma$ for $i=1,2$}
Recall from (b) of Assumption \ref{dual_assumptions} that \bes{
\nm{    D  Q_{\cW}^\perp  V^*  P_{[M]}  V Q_\cW D^*  }_{\ell^2 \to \ell^\infty} \leq \frac{1}{8\sqrt{\kappa_{\max}}},
}
which implies (\ref{eq:prop2assp}) with $\alpha = \beta_1$.
So, by Proposition \ref{prop2},$$
\bbP\left(B_i^c \right) \leq \gamma
$$
provided that
$$
1 \gtrsim \sqrt{r}B\log\left(\frac{4 \tilde M}{\gamma}\right) \,   \sum_{k=1}^r (q_k^{-1} -1) \,  \mu_{\mathbf{N}, \mathbf{M}}^2(k,j)
\, \hat \kappa_k, \qquad j=1,\ldots, r,
$$ 
and 
$$
  q_k \gtrsim rB^2 \log\left(\frac{4 \tilde M}{\gamma}\right)
 \,   \sum_{l=1}^r \mu_{\mathbf{N}, \mathbf{M}}^2(k,l)\, \kappa_l, \qquad k=1,\ldots, r,
.
$$
These two conditions are implied by (c) and (d) of Assumption \ref{dual_assumptions}.

\section{Properties of the subsampled matrix}\label{sec:isometr_props}
In this section we show that  conditions (i) and (ii) of Proposition \ref{prop:dual_certificate} are satisfied with probability exceeding $1-\epsilon/6$ under Assumption \ref{dual_assumptions}. 

Recall that conditions (i) and (ii) of Proposition \ref{prop:dual_certificate} are
\begin{itemize}
\item[(i)] $
\nm{ Q_{\cW}  V^* (q_1^{-1} P_{\Omega_1 }\oplus \cdots \oplus q_r^{-1}  P_{\Omega_r})  V  Q_{\cW}  -  Q_\cW   }_{\cH \to \cH} < \frac{1}{4} .
$
\item[(ii)] $\sup_{j\in\bbN} \nm{ P_{\br{j}}  D  Q_{\cW}^\perp  V^* (q_1^{-1} \Omega_1 \oplus \cdots \oplus q_r^{-1} \Omega_r)  V  Q_{\cW}^\perp  D^*  P_{\br{j}}}_{\ell^2 \to \ell^2} < \frac{5}{4}.$
\end{itemize}
It is sufficient to show that
\be{\label{eq:isom_toshow1}
\bbP\left(\nm{ Q_{\cW}  V^* (q_1^{-1} P_{\Omega_1 }\oplus \cdots \oplus q_r^{-1}  P_{\Omega_r})  V  Q_{\cW}  -    Q_{\cW}}_{\cH \to \cH} \geq \frac{1}{4} \right) < \epsilon/12.
}
and
\be{\label{eq:isom_toshow2}
\bbP\left(\sup_{j\in\bbN} \nm{ P_{\br{j}}  D  Q_{\cW}^\perp  V^* (q_1^{-1} \Omega_1 \oplus \cdots \oplus q_r^{-1} \Omega_r)  V  Q_{\cW}^\perp  D^*  P_{\br{j}}}_{\ell^2 \to \ell^2} \geq \frac{5}{4} \right) \leq \epsilon/12.
}
The fact that (\ref{eq:isom_toshow1}) holds under Assumption \ref{dual_assumptions} follows from Proposition \ref{prop3}.

To see that Proposition \ref{prop4} implies that (\ref{eq:isom_toshow2}) under Assumption \ref{dual_assumptions}, first note that by our choice of $\tilde M$ and Proposition \ref{prop4}, (\ref{eq:isom_toshow2}) follows if
  for each $k,j=1,\ldots, r$,
\be{\label{eq:remainstoshow_prop4}
1 \gtrsim (q_k^{-1}-1) \,B^2\, \mu_{\mathbf{N},\mathbf{M}}(k,j)^2 \, \log\left(\frac{12\tilde M}{\epsilon}\right).
}which is implied by (d) of Assumption \ref{dual_assumptions}.

\section{Concluding remarks}
Recent works \cite{ward2013stable, adcockbreaking} have identified the need for further theoretical development on  the use of variable density sampling in compressed sensing. Furthermore, variable density sampling schemes are dependent not only on sparsity but also the sparsity structure of the underlying signal.  To address this, \cite{adcockbreaking}  showed that in the case of where the sparsifying operator is associated with an orthonormal basis, by considering levels of the sampling and sparsifying operators, the amount of subsampling possible can be described in terms of the local coherences between the different sections and the sparsity of the underlying signal within each level. This paper presented an extension of this result to the case where the sparsifying operator is constructed from a tight frame. By defining the notions of localized sparsity and localized level sparsities, we derived a recovery guarantee for multilevel sampling patterns based on local coherences and localized level sparsities. One direction of future work would be to apply our abstract result to analyse the use of multilevel sampling schemes in the case of Fourier sampling with some multi-scale analysis operator such as wavelet frames and shearlets. By deriving estimates on the local coherences of such operators, one can expect to obtain a better understanding on how to exploit sparsity structure to subsample.  Finally, although this paper considered only the case of a tight frame regularizer, this does not seem to be necessary in practice and it is likely that that similar estimates to Theorem \ref{thm:main} can be derived by considering the canonical dual operator of $D$.

\section*{Acknowledgements}
 This work was supported by the UK Engineering and Physical Sciences
Research Council (EPSRC) grant EP/H023348/1 for the University of Cambridge
Centre for Doctoral Training, the Cambridge Centre for Analysis. The author would like to thank Anders Hansen for constructive comments.

\appendix

\section{The discrete Haar wavelet frame}\label{sec:discreteHaar_def}
The discrete Haar frame of redundancy two is defined as follows. Let $N=2^p$ for some $p\in\bbN$ and $\br{c_0}\cup\br{h_{k,j}: k=0,\ldots, p-1, j=1,\ldots, 2^p}$ be the discrete Haar basis for $\bbC^N$. Specifically, $c_0 = 2^{-p/2}(1,\ldots, 1)$ and for $l=0,\ldots,p-1$ and $k=1,\ldots, 2^k$,
$$
h_{l,k}[j] = \begin{cases}
2^{(l-p)/2} &  j=k2^{p-l}+1,\ldots, k2^{p-l} + 2^{p-l-1}\\
-2^{(l-p)/2} &  j=k2^{p-l} + 2^{p-l-1}+1,\ldots, k2^{p-l} + 2^{p-l}\\
\end{cases}, \qquad 1\leq j\leq 2^p.
$$
Let $\tilde c_0 = c_0$,  and for each $k=0,\ldots, p-1$, $j=1,\ldots, 2^k$, let
$$\tilde h_{k,j}[n] = \begin{cases} h_{k,j}[n-1] &n=2,\ldots, N\\ h_{k,j}[N] &n=1
\end{cases}.
$$
The two  discrete Haar wavelet frame of redundancy two is defined by
\eas{
&\br{2^{-1/2} c_0} \cup \br{2^{-1/2}h_{k,j}: k=0,\ldots, p-1, j=1,\ldots, 2^k}\\
&\cup \br{2^{-1/2}\tilde c_0} \cup \br{2^{-1/2} \tilde h_{k,j}: k=0,\ldots, p-1, j=1,\ldots, 2^k}.
}
For analysis purposes, we will order these frame elements in increasing order of scaling with
\eas{
\br{\varphi_j}_{j=1}^{2N} = &\Bigg\{2^{-1/2} c_0,2^{-1/2} \tilde c_0, 2^{-1/2} h_{0,1}, 2^{-1/2} \tilde h_{0,1}, \ldots, 2^{-1/2} h_{k,j}, \tilde h_{k,j}, 2^{-1/2} h_{k,j+1}, 2^{-1/2}\tilde h_{k,j+1},\ldots\\
&\ldots, 2^{-1/2} h_{k+1,j}, \tilde 2^{-1/2} h_{k+1,j}, \ldots,2^{-1/2} \tilde h_{p-1,N},2^{-1/2} \tilde h_{p-1,N}\Bigg\}}
and let $D x = \left(\ip{x}{\varphi_k}\right)_{k=1}^{2N}$. Note that $D^*D = I$.

The following lemma shows that $\nm{DD^*}_{\ell^\infty\to \ell^\infty}$ can be upper bounded independently of $N$.
\begin{lemma}
$\nm{DD^*}_{\ell^\infty\to \ell^\infty}  \leq 2/(\sqrt{2}-1)$.
\end{lemma}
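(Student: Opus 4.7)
The plan is to interpret $DD^*$ as the Gram matrix of the frame $\{\varphi_j\}$ and use the standard characterization $\nm{DD^*}_{\ell^\infty \to \ell^\infty} = \max_k \sum_j |\ip{\varphi_j}{\varphi_k}|$. Ordering the frame elements as $\{c_0/\sqrt{2}\} \cup \{h_{l,m}/\sqrt{2}\}$ followed by $\{\tilde c_0/\sqrt{2}\} \cup \{\tilde h_{l,m}/\sqrt{2}\}$, the matrix $DD^*$ takes the block form $\tfrac{1}{2}\begin{pmatrix} I & B \\ B^* & I \end{pmatrix}$, where $B$ has entries $\ip{h_{l,j}}{\tilde h_{l',j'}}$ (together with trivial entries coming from $c_0,\tilde c_0$). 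Thus the proof reduces to bounding $\nm{B}_{\ell^\infty \to \ell^\infty} = \max_{(l,j)} \sum_{(l',j')} |\ip{h_{l,j}}{\tilde h_{l',j'}}|$ uniformly in the dimension, and assembling the final constant.

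Fix a wavelet $h_{l,j}$. I would analyse the sum scale-by-scale. Recall that $\tilde h_{l',j'}$ is the cyclic shift of $h_{l',j'}$ by one unit, so its support is an interval of length $2^{p-l'}$ shifted by one from an aligned dyadic interval. For the finer scales $l' > l$, every interval of length $2^{p-l'}$ that lies entirely inside a constant region of $h_{l,j}$ produces zero inner product (since $\tilde h_{l',j'}$ has mean zero). Only tilded wavelets whose support either straddles the sign-change of $h_{l,j}$ or extends just past the boundary of its support contribute, and there are at most a bounded number (in fact at most three) such $j'$ at each scale $l'$. A direct estimate using $\|h_{l,j}\|_\infty = 2^{(l-p)/2}$ and the zero-mean of $\tilde h_{l',j'}$ yields $|\ip{h_{l,j}}{\tilde h_{l',j'}}| \leq 2^{(l-l')/2}$. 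For coarser scales $l' < l$, the roles are symmetric: by zero-mean of $h_{l,j}$, only tilded wavelets whose sign change lies inside the support of $h_{l,j}$ contribute, giving the analogous bound $|\ip{h_{l,j}}{\tilde h_{l',j'}}| \leq 2^{(l'-l)/2}$. For $l' = l$, only the aligned and the neighbouring shifted wavelet can overlap, contributing at most two inner products bounded by $1$.

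Summing over scales then reduces the problem to a geometric series
\[
\sum_{(l',j')} |\ip{h_{l,j}}{\tilde h_{l',j'}}| \;\lesssim\; \sum_{l' \geq 0} 2^{-|l-l'|/2} \;\leq\; \frac{2}{1-2^{-1/2}} \;=\; \frac{2\sqrt{2}}{\sqrt{2}-1},
\]
and combining with the $\tfrac{1}{2}(1 + \cdot)$ factor from the block form should give the stated estimate $\nm{DD^*}_{\ell^\infty\to\ell^\infty} \leq 2/(\sqrt{2}-1)$. The trivial rows indexed by $c_0$ and $\tilde c_0$ have row sum exactly $1$, below this threshold.

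The main obstacle I expect is not conceptual but combinatorial: one has to keep careful track of the constants so that the geometric series summation produces exactly $2/(\sqrt{2}-1)$ rather than some larger constant, and this means handling correctly the boundary cases where $\tilde h_{l',j'}$ overlaps $h_{l,j}$ by only a single index (due to the unit cyclic shift). These boundary contributions scale like $2^{(l+l')/2 - p}$ and sum over $l'$ to a term that is dimension-independent provided one uses the geometric sum $\sum_{l' < p} 2^{l'/2} \lesssim 2^{p/2}/(\sqrt{2}-1)$. Making sure these pieces fit together without losing a factor is the only delicate step; once that is done, the bound follows immediately.
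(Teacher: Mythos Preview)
Your approach is sound and would produce a dimension-independent bound, but it differs from the paper's argument in a way that makes the exact constant harder to reach. You analyse the cross inner products $\ip{h_{l,j}}{\tilde h_{l',j'}}$ directly, scale by scale, using mean-zero cancellation away from discontinuities; this yields a geometric sum of the form $C\sum_{l'}2^{-|l-l'|/2}$ with $C\approx 3$, and hence an $\ell^\infty\!\to\!\ell^\infty$ bound that is a fixed multiple of $2/(\sqrt{2}-1)$ rather than exactly $2/(\sqrt{2}-1)$. The paper sidesteps the overlap bookkeeping entirely by writing $\tilde h_{m,n}=h_{m,n}+(\tilde h_{m,n}-h_{m,n})$ and observing that the difference $\tilde h_{m,n}-h_{m,n}$ is supported on \emph{three} indices with entries of size $\sim 2^{(m-p)/2}$. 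This reduces the cross term to $\sum_{l,k}|\ip{\tilde h_{m,n}-h_{m,n}}{h_{l,k}}|$, which is trivially bounded by $4\cdot 2^{(m+l)/2-p}$ per scale (at most three $k$ contribute, each by at most $2\cdot 2^{(m+l)/2-p}$), and summing over $l$ gives $4/(\sqrt{2}-1)$ directly. The ``boundary contributions of order $2^{(l+l')/2-p}$'' you mention at the end are, in the paper's argument, the \emph{whole} contribution rather than a correction term --- that is precisely what the subtraction trick buys. Your route is perfectly valid for proving boundedness, but if you want the stated constant without further sharpening, the paper's decomposition is the cleaner tool.
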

\prf{
Let $\varphi_t = 2^{-1/2}\tilde h_{m,n}$. Then, since $\br{\tilde h_{i,j}}_{i,j}$ is an orthonormal system,
\eas{
&\sum_{j} \abs{\ip{\varphi_j}{\varphi_t}}=
\frac{1}{2}\sum_{l,k} \abs{\ip{\tilde h_{m,n}}{h_{l,k}}} = 
\frac{1}{2}\sum_{l,k} \abs{\ip{\tilde h_{m,n} - h_{m,n}}{h_{l,k}}} + \frac{1}{2} \sum_{l,k} \abs{\ip{ h_{m,n}}{h_{l,k}}} \\
& \frac{1}{2}+ \frac{1}{2}\sum_{l,k} \abs{\ip{\tilde h_{m,n} - h_{m,n}}{h_{l,k}}}.
}
Now,
$$
\tilde h_{m,n}-h_{m,n}[j] = \begin{cases}
2^{(m-p)/2} & j\in\br{ n2^{p-m}-1,\, n2^{p-m}+2^{p-m}-1}\\
2^{(m-p)/2+1} & j=n2^{p-m}+2^{p-m-1}-1\\
0&\text{otherwise}
\end{cases}.
$$
Note that for each $l$, $\ip{\tilde h_{m,n}-h_{m,n}}{h_{l,k}}\neq 0$ for at most 3 values  of $k$ in $\br{0,\ldots, 2^{l}-1}$, and
$$
\sum_{k}\abs{\ip{\tilde h_{m,n}-h_{m,n}}{h_{l,k}}} \leq 4\cdot 2^{(m+l)/2-p}.
$$
Therefore,
$$\sum_{l,k} \abs{\ip{\tilde h_{m,n} - h_{m,n}}{h_{l,k}}}
\leq \frac{4\cdot 2^{m/2}}{2^p} \sum_{l=0}^{p-1} 2^{l/2}\leq \frac{4}{\sqrt{2}-1},
$$
and
$$
\sum_{j} \abs{\ip{\varphi_j}{\varphi_t}} \leq \frac{2}{\sqrt{2}-1}.
$$
The case where $\varphi_t = 2^{-1/2} h_{n,m}$ for some $n,m$ can be approached similarly with the same upper bound. Thus, $\nm{DD^*}_{\ell^\infty} = \sup_j \nm{DD^* e_j}_{\ell^1} \leq \frac{2}{\sqrt{2}-1}$.
}

\section{Existence of minimizers}
\begin{proposition}\label{prop:existence}
Let $\Omega\subset \bbN$ be finite and let $y\in\ell^2(\bbN)$.
There exists $g_*\in \cH$ such that
\bes{
g_* \in \argmin_{g\in\cH} \nm{D g}_{\ell^1} \text{ subject to } \nm{P_{\Omega}V g- y}_{\ell^2} \leq \delta.
}
\end{proposition}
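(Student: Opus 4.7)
The plan is to take a minimizing sequence and extract a weak limit which, thanks to finiteness of $\Omega$ and lower semicontinuity of the $\ell^1$ norm, is a minimizer. We may assume the feasible set $C := \{g \in \cH : \|P_\Omega V g - y\|_{\ell^2} \leq \delta\}$ is non-empty (otherwise the statement is vacuous) and that the infimum $I := \inf_{g \in C} \|Dg\|_{\ell^1}$ is finite (otherwise any feasible point is trivially a minimizer).

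Let $(g_n)_{n \in \bbN} \subset C$ be a minimizing sequence, i.e.\ $\|Dg_n\|_{\ell^1} \to I$. Since $D$ is an isometry from $\cH$ to $\ell^2(\bbN)$ and $\|\cdot\|_{\ell^2} \leq \|\cdot\|_{\ell^1}$, we have
\[
\|g_n\|_\cH = \|Dg_n\|_{\ell^2} \leq \|Dg_n\|_{\ell^1},
\]
so $(g_n)$ is bounded in $\cH$. By the Banach--Alaoglu theorem in the Hilbert space $\cH$, there is a subsequence (still denoted $(g_n)$) and $g_\star \in \cH$ such that $g_n \rightharpoonup g_\star$ weakly.

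Next, feasibility of $g_\star$ follows from the finiteness of $\Omega$. The operator $P_\Omega V : \cH \to \ell^2(\bbN)$ is bounded and has finite-dimensional range (of dimension at most $|\Omega|$), so it maps weakly convergent sequences to norm-convergent sequences. Hence $P_\Omega V g_n \to P_\Omega V g_\star$ in $\ell^2(\bbN)$, and passing to the limit in $\|P_\Omega V g_n - y\|_{\ell^2} \leq \delta$ gives $g_\star \in C$.

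Finally, we verify optimality via weak lower semicontinuity of $g \mapsto \|Dg\|_{\ell^1}$. Since $D$ is bounded, $Dg_n \rightharpoonup Dg_\star$ in $\ell^2(\bbN)$, and in particular $(Dg_n)_j \to (Dg_\star)_j$ for every $j \in \bbN$. For any finite set $F \subset \bbN$,
\[
\sum_{j \in F} |(Dg_\star)_j| = \lim_{n \to \infty} \sum_{j \in F} |(Dg_n)_j| \leq \liminf_{n \to \infty} \|Dg_n\|_{\ell^1} = I.
\]
Taking the supremum over finite $F$ yields $\|Dg_\star\|_{\ell^1} \leq I$, so $g_\star$ is a minimizer. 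The only subtlety in the argument is the weak-to-strong upgrade for $P_\Omega V$, which relies essentially on $|\Omega| < \infty$; the rest is standard direct-method calculus of variations.
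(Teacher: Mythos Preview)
Your argument is correct. Both proofs follow the direct method---take a minimizing sequence, extract a limit via compactness, and verify feasibility and optimality---but the compactness step differs. The paper works in the sequence space: it views $(Df_n)$ as a bounded sequence in $\ell^1 = c_0^*$, passes to a weak-$*$ limit $x$, and then must check that $DD^*$ maps $c_0$ to $c_0$ (which uses the side assumption $\|DD^*\|_{\ell^\infty\to\ell^\infty}<\infty$) in order to identify the limit with an element of the form $D(D^*x)$. You instead stay in $\cH$: the isometry property of $D$ gives $\|g_n\|_\cH = \|Dg_n\|_{\ell^2}\le \|Dg_n\|_{\ell^1}$, so $(g_n)$ is bounded and has a weak Hilbert-space limit $g_\star$; optimality then follows from the elementary Fatou-type estimate via componentwise convergence of $Dg_n$. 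Your route is shorter and avoids the $\ell^\infty\to\ell^\infty$ boundedness of $DD^*$ altogether, at the cost of using that $D$ is an isometry (which is assumed throughout the paper). Both proofs handle feasibility identically, via the finite rank of $P_\Omega V$.
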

\begin{proof}
Let $(f_n)_{n\in\bbN}\subset \cH$ be a minimizing sequence such that $\nm{ P_\Omega V f_n - y}_{\ell^2} \leq \delta$ for each $n\in\bbN$ and
$$
\nm{D f_n}_{\ell^1} \to  \inf_{g\in\cH} \nm{D g}_{\ell^1} \text{ subject to } \nm{P_{\Omega}V g- y} \leq \delta, \qquad n\to \infty.
$$
This implies that $(D f_n)_{n\in\bbN}$ is a bounded sequence in $\ell^1(\bbN)$. Since the dual of $c_0(\bbN)$ (the Banach space of sequences converging to zero) is $\ell^1(\bbN)$, and the unit ball of $\ell^1$ is weak-* compact, there exists $x\in\ell^1(\bbN)$ and a subsequence $(D f_{n_k})_{k\in\bbN}$ such that $D f_{n_k} \stackrel{*}{\rightharpoonup} x$ as $k\to \infty$, and for each $z\in c_0(\bbN)$,
$
\ip{D f_{n_k}}{z} \to \ip{x}{z}
$
as $k\to \infty$.

 Since $DD^* \in \cB(\ell^\infty(\bbN), \ell^\infty(\bbN))$, it follows that given any
 $z\in c_0(\bbN)$, $ D D^* z\in c_0(\bbN)$. To see this, note that given any $\delta>0$, we can choose $N_1\in\bbN$ such that $\nm{P_{[N]}^\perp z}_{\ell^\infty}\leq \delta/(2\nm{DD^*}_{\ell^1\to\ell^1})$ for all $N\geq N_1$. Furthermore, for this choice of $N_1$, we can choose $N_2$ such that $\nm{P_{[N_1]} D D^* e_k}_{\ell^1}\leq \delta/(2\nm{z}_{\ell^\infty})$ for all $k\geq N_2$. Thus, for all $k\geq \max\br{N_1,N_2}$,
\eas{
 \abs{\ip{D^* D z}{e_k}} &\leq  \abs{\ip{D^* P_{[N_1]} D z}{e_k}} +  \abs{\ip{D^* P_{[N_1]}^\perp D z}{e_k}} \\
 &\leq \nm{D D^*}_{\ell^1\to\ell^1}\nm{P_{[N_1]}^\perp z}_{\ell^\infty} + \nm{z}_{\ell^\infty} \nm{ P_{[N_1]}  D  D^* e_k}_{\ell^1} \leq \delta.
 }
 Therefore, $ D^* D z\in c_0(\bbN)$, 
$
\ip{ D f_{n_k}}{ D  D^* z} \to \ip{x}{ D D^* z}
$
as $k\to \infty$
and consequently, $ D D^* D f_{n_k} = D f_{n_k}  \stackrel{*}{\rightharpoonup} D D^* x$. This implies that,
$$
\liminf_{k\to \infty} \nm{ D f_{n_k}}_{\ell^1} \geq \nm{ D D^* x}_{\ell^1}.
$$
Furthermore, because $ D f_{n_k}$ converges weakly to $x$ in $\ell^2(\bbN)$ and $ P_{\Omega}V  D^*$ is a compact operator (since it is of finite rank), $P_\Omega V f_{n_k} = P_\Omega V D^* D f_{n_k} \to P_\Omega V D^* x$ as $k\to \infty$. So,
$\nm{ P_\Omega V  D^* x - y}_{\ell^2}\leq \delta$. Thus, $g_* :=  D^* x$ is a minimizer.

\end{proof}

 \addcontentsline{toc}{section}{References}
\bibliographystyle{abbrv}
\bibliography{References}
 
\end{document}